\def\titlerunning#1{\gdef\titrun{#1}}
\def\author#1{\gdef\autrun{\def\and{\unskip, }#1}\gdef\@author{#1}}
\def\address#1{{\def\and{\\\hspace*{18pt}}\renewcommand{\thefootnote}{}%
\footnote {#1}}%
\markboth{\autrun}{\titrun}}
\def\email#1{e-mail: #1}
\def\subjclass#1{{\renewcommand{\thefootnote}{}%
\footnote{\emph{Mathematics Subject Classification (2010):} #1}}}
\def\keywords#1{\par\medskip
\noindent\textbf{Keywords.} #1}
\newtheorem{thm}{Theorem}[section]
\newtheorem{cor}[thm]{Corollary}
\newtheorem{lem}[thm]{Lemma}
\theoremstyle{definition}
\newtheorem{defin}[thm]{Definition}
\newtheorem{exa}[thm]{Example}
\numberwithin{equation}{section}
\renewcommand{\Im}{\mathfrak{Im}}
\DeclareMathOperator{\tr}{Tr}
\newcommand{\de}{{\partial}}
\newcommand{\rd}{\mathrm{d}}
\newcommand{\ri}{\mathrm{i}}
\newcommand{\re}{\mathrm{e}}
\newcommand{\bbN}{\mathbb{N}}
\newcommand{\bbK}{\mathbb{K}}
\newcommand{\bbZ}{\mathbb{Z}}
\newcommand{\bbC}{\mathbb{C}}
\newcommand{\bbP}{\mathbb{P}}
\def\bary{\begin{array}} 
\def\eary{\end{array}} 
\def\ben{\begin{enumerate}} 
\def\een{\end{enumerate}}
\def\bit{\begin{itemize}} 
\def\eit{\end{itemize}}
\def\nn{\nonumber} 
\newcommand{\cO}{\mathcal{O}}
\newcommand{\cT}{\mathcal{T}}
\newcommand{\cP}{\mathcal{P}}
\newcommand{\DD}{\mathcal{D}}
\newcommand{\LL}{\mathcal{L}}
\newcommand{\cS}{\mathcal{S}}
\newcommand{\cU}{\mathcal{U}}
\newcommand{\cA}{\mathcal{A}}
\newcommand{\HH}{\mathcal{H}}
\newcommand{\cB}{\mathcal{B}}
\newcommand{\cF}{\mathcal{F}}
\newcommand{\cI}{\mathcal{I}}
\newcommand{\cX}{\mathcal{X}}
\newcommand{\cM}{\mathcal M}
\newcommand{\eq}[1]{\begin{equation}#1\end{equation}}
\newcommand{\ea}[1]{\begin{align}#1\end{align}}
\def\beq{\begin{equation}}                     %
\def\eeq{\end{equation}}                       %
\def\bea{\begin{eqnarray}}                     
\def\eea{\end{eqnarray}}
\def\bary{\begin{array}} 
\def\eary{\end{array}} 
\def\ben{\begin{enumerate}} 
\def\een{\end{enumerate}}
\def\bit{\begin{itemize}} 
\def\eit{\end{itemize}}
\def\nn{\nonumber} 
\def\de {\partial}
\def\a{\alpha}
\def\b{\beta}
\def\g{\gamma}
\def\Res{\mathrm{Res}}
\theoremstyle{plain}
\newtheorem{prop}[thm]{Proposition}
\newtheorem{conj}[thm]{Conjecture}
\newtheorem*{conj*}{Conjecture}
\newtheorem*{cor*}{Corollary}
\theoremstyle{definition}
\newtheorem{rmk}{Remark}[section]
\newtheorem{examplen}{Example}[section]
\crefname{equation}{Eq.}{Eqs.}
\crefname{eqnarray}{Eq.}{Eqs.}
\crefname{defin}{Definition}{Definitions}
\crefname{conj}{Conjecture}{Conjectures}
\crefname{lem}{Lemma}{Lemmas}
\crefname{thm}{Theorem}{Theorems}
\crefname{rmk}{Remark}{Remarks}
\crefname{prop}{Proposition}{Propositions}
\crefname{section}{Section}{Sections}
\crefname{appendix}{Appendix}{Appendices}
\crefname{cor}{Corollary}{Corollaries}
\crefname{figure}{Figure}{Figures}
\newcommand{\Li}{\operatorname{Li}}
\newcommand{\GIT}[1]{/\!\!/_{\kern-.2em #1 \kern0.1em}}
\newcommand{\sgn}{\mathrm{sgn}}
\renewcommand{\l}{\left}
\renewcommand{\r}{\right}
\newcommand{\bra}{\left\langle}
\newcommand{\ket}{\right\rangle}
\newcommand{\ev}{\operatorname{ev}}
\def\bes{\begin{subequations}}
\def\ees{\end{subequations}}
\begin{document}

\hyphenation{two-di-men-sio-nal}


\baselineskip=17pt


\titlerunning{Rational reductions of the 2D-Toda hierarchy}
\title{Rational reductions of the 2D-Toda hierarchy and mirror symmetry}

\author{Andrea Brini
\and
Guido Carlet
\and
Stefano Romano
\and 
Paolo Rossi
}

\date{}

\maketitle

\address{A. Brini: Institut Montp\'ellierain Alexander Grothendieck, UMR 5149 CNRS,
Universit\'e de Montpellier, B\^atiment 9, Place Eug\`ene Bataillon,
Montpellier Cedex 5, France;
\email{andrea.brini@univ-montp2.fr}
\and
G. Carlet: Korteweg--de Vries Institute for Mathematics, University of Amsterdam,
Postbus 94248, 1090 GE Amsterdam, The Netherlands;
\email{guido.carlet@uva.nl}
\and
S. Romano: IRMP, Universit\'e Catholique de Louvain, Chemin du Cyclotron 2, 1348 Louvain-la-Neuve, Belgium;
\email{stefano.romano83@gmail.com}
\and
P. Rossi: IMB UMR5584, CNRS, Universit\'e Bourgogne Franche-Comt\'e, F-21000 Dijon, France;
\email{paolo.rossi@u-bourgogne.fr}
}

\subjclass{81T45 (primary), 81T30, 57M27, 17B37, 14N35}

\begin{abstract}

We introduce and study a two-parameter family of symmetry reductions of the
2D Toda lattice hierarchy, which are characterized by a rational factorization of the Lax
operator into a product of an upper diagonal and the inverse of a lower
diagonal formal difference
operator. They subsume and generalize several classical $1+1$ integrable
hierarchies, such as the bigraded Toda hierarchy, the Ablowitz--Ladik hierarchy and E.~Frenkel's $q$-deformed
Gelfand--Dickey hierarchy. We establish their characterization in terms of block T\"oplitz matrices for the associated
factorization problem, and study their Hamiltonian structure. At the dispersionless level, we
show how the Takasaki--Takebe classical limit gives rise to a family of non-conformal 
Frobenius manifolds with flat identity. We use this to generalize the relation of
the Ablowitz--Ladik hierarchy to Gromov--Witten theory by proving an analogous mirror theorem for the general rational
reduction: in particular, we show that the dual-type Frobenius manifolds we obtain
are isomorphic to the equivariant quantum cohomology of a family of toric
Calabi--Yau threefolds obtained from minimal resolutions of the local orbifold line.

\keywords{Rational reductions, Gromov--Witten, integrable hierarchies, mirror symmetry, 2D-Toda, Ablowitz--Ladik.}
\end{abstract}

\section{Introduction}

The two-dimensional Toda equation,
\eq{
(\de^2_x-\de^2_t) x_n = \re^{x_{n+1}}- 2 \re^{x_n}+\re^{x_{n-1}}, \quad n \in \bbZ,
\label{eq:todaeq}
}
is among the archetypical examples in classical field theory of integrable
non-linear dynamical systems in two space dimensions. Besides its intrinsic
interest in the theory of integrable systems
\cite{mikhailov1979integrability,MR810623,MR2753798,Takasaki:1991zs}, the hierarchy of
commuting flows of \cref{eq:todaeq} - the so-called {\it 2D-Toda hierarchy} -  has provided a
unifying framework for a variety of problems in various branches of
Mathematics and Mathematical Physics, ranging from the combinatorics of matrix
integrals \cite{Gerasimov:1990is,MR1794352} to enumerative geometry \cite{MR2199226,pauljohnsonthesis} and applications to
Classical and Quantum Physics
\cite{MineevWeinstein:2000ie,Eguchi:1994in,Nakatsu:2007dk}.\\

 The purpose of this paper is to construct and study an infinite family of symmetry reductions of the
two-dimensional Toda hierarchy, which we dub the the {\it rational reductions of
  2D-Toda} (henceforth, RR2T). Their defining feature is the following
factorization property of the 2D-Toda Lax
operators:
\eq{
L_1^a=AB^{-1}, \quad L_2^b=BA^{-1},
}
where $A$ and $B$ are respectively a degree $a\geq 1$
upper diagonal and a degree $b\geq 1$ lower diagonal
difference operator; this property is preserved by the Toda flows. It turns
out that the resulting hierarchies enjoy remarkable properties both from the
point of view of the theory of integrable systems, as well as from the vantage
of their applications to the topology of moduli spaces of stable maps.
\\

\subsection{Main results}

The RR2T, which are the natural counterpart in the
2D-Toda world of the ``constrained reductions'' of the KP hierarchy of
\cite{Bonora:1993fa, Aratyn:1993ry}, are distinguished in a number of ways. First
off, the embedding into the Toda
hierarchy recovers and ties together a host of known classical
integrable hierarchies in 1+1 dimensions: notable examples include the
Ablowitz--Ladik system \cite{MR0377223, Brini:2011ff}, the bi-graded Toda
hierarchy \cite{MR2246697}, and
the $q$-deformed version of the Gelfand--Dickey hierarchy
\cite{MR1383952}. Moreover, rational reductions have a natural
characterization in the associated factorization problem, where they
correspond to the block T\"oplitz condition on the moment matrix; in the
semi-infinite case this naturally generalizes the
ordinary T\"oplitz condition arising in the theory of unitary matrix models.
Thirdly, the analysis of the
relation of the Hamiltonian structure on the reduced system to the (second) Poisson
structure of the parent 2+1 hierarchy reveals that the reduction itself is
remarkable in that it is a {\it purely kinematical phenomenon}, whose ultimate cause is completely independent of the
particular form of the Hamiltonians: the submanifold in field space where the
Lax operator factorizes comes along with an infinite-dimensional degeneration of the
Poisson tensor, whose pointwise kernel contains the conormal fibers to
the factorization locus. Fourthly,
the semi-classical Lax--Sato formalism for the dispersionless limit of the
hierarchy gives rise to a host of (old and new) solutions of
WDVV in the form of a family of semi-simple, non-conformal Frobenius dual-type
structures on a genus zero double
Hurwitz space\footnote{We
  borrow terminology from \cite{2012arXiv1210.2312R}.} having covariantly
constant identity. For $b\leq 1$, they
are {\it bona fide} dual in the sense of Dubrovin \cite{MR2070050} of
conformal Frobenius manifolds of charge $d=1$, with possibly non-flat unit. The double Hurwitz space picture entails, on one hand, the existence of a bi-Hamiltonian
structure of Dubrovin--Novikov type at the dispersionless level for several subcases,
as well as a tri-Hamiltonian structure as in \cite{pavlov2003tri,
  2012arXiv1210.2312R} for $a=b$; on the other, it furnishes for all
$(a,b)$ a one-dimensional $B$-model-type Landau--Ginzburg description for the dual-type Frobenius structure. Generalizing a result of \cite{Brini:2011ff}, we show that the
resulting non-conformal Frobenius manifolds are isomorphic to the
$(\bbC^*)^2$-equivariant orbifold cohomology of the local $\bbP^1$-orbifolds
with two stacky points of order $a$ and $b$ \cite{johnson2008notes}, or equivalently \cite{MR2529944}, of
the $(\bbC^*)^2$-equivariant cohomology of one of their toric minimal
resolutions (the {\it toric trees}).
This establishes a (novel) version of equivariant mirror symmetry for these
targets via one-dimensional logarithmic Landau--Ginzburg models, which has various
applications to the study of wall-crossings in Gromov--Witten theory as
anticipated in \cite{Brini:2013zsa}, and it leads us to conjecture that the full
descendent Gromov--Witten potential for these targets is a tau function of the
RR2T, a statement that we verify in genus less than or equal to one. \\

The paper is organized as follows. In \cref{sec:todagen}, after reviewing the
Lax formalism for the 2D-Toda hierarchy, we first construct the RR2T in the
bi-infinite case, study the reduction of the 2D-Toda flows, and discuss
various examples. We then illustrate their relation to biorthogonal ensembles
on the unit circle and the factorization problem of block T\"oplitz matrices,
and discuss the Hamiltonian structure of the hierarchy. \cref{sec:dless} is
devoted to the study of the dispersionless limit of the flows. We analyze
the Takasaki--Takebe limit of the equations in the framework of Frobenius
structures on double Hurwitz spaces and determine explicitly the dual-type
structures that arise, as well as the extra flat structures that occur in
special cases. Finally, \cref{sec:GW} is devoted to the relation with
Gromov--Witten theory. We prove an equivariant mirror theorem for toric trees,
and outline the range of its implications. First of all, we 
verify up to genus one that the full descendent Gromov--Witten potential is a
tau function of the RR2T, upon establishing a Miura equivalence between the
dispersive expansion of the RR2T to quadratic order and the analogue of
the Dubrovin--Zhang quasi-Miura formalism applied to the local theory of the
orbifold line. Moreover, we discuss in detail
the properties of the A-model Dubrovin connection in the light of its
connection with RR2T, prove that its flat sections are multi-variate
hypergeometric functions of type $F_D$, and discuss its implications for the
Crepant Resolution Conjecture at higher genus.

\subsection{Relation to other work}

Several instances of RR2T have made a more or less covert appearance in
the literature. In a prescient work \cite{MR1091265}, Gibbons and
Kupershmidt\footnote{Building on earlier work of Bruschi--Ragnisco \cite{MR979202}; see also \cite{MR1993935,Kharchev:1996kh}.} 
constructed a Lax formalism for a relativistic generalization of the one-dimensional Toda
hierarchy which would correspond in our language to the RR2T of bidegree $(a,1)$, where the
dependent variable in the denominator has been frozen to a parameter equal to
the speed of light. More recent examples include the Ablowitz--Ladik hierarchy treated
by the authors \cite{Brini:2011ff}, corresponding to the case $(a,b)=(1,1)$, and the somewhat degenerate example of the lattice
analogue of KdV \cite{MR0377223}, to which RR2T boils down for
$b=0$. Dual-type structures for the dispersionless limit of the RR2T have
been computed in the special case of the bigraded Toda
hierarchy \cite{MR2287835} and the RR2T of bidegree $(a,a)$ (see also
\cite{MR1804297, Takasaki:2012ba}). Closer to the discussion of \cref{sec:GW} is a very recent preprint
of Takasaki \cite{Takasaki:2013lqa}, where the (full-dispersive) RR2T of bidegree $(b,b)$ with
suitable initial data is considered in connection with the partition function
of the melting crystal model \cite{Okounkov:2003sp} for the so-called
``generalized conifolds'' deformed by shift symmetries
\cite{Takasaki:2013gja}. As the generalized conifolds correspond precisely to the toric
Calabi--Yau threefolds of \cref{sec:GW} for $a=b$, it would be intriguing to
bridge Takasaki's approach with our own, and in particular to intepret the
2D-Toda evolution in the crystal model 
as suitable gravitational deformations of our prepotentials. We will leave
this open for future work.

\section{Rational reductions of 2D-Toda}

\label{sec:todagen}

\subsection{The 2D-Toda hierarchy}
\label{sec:2dtoda}
Denote by $\cA=\{(a_{ij}\in\bbC)_{i,j\in\bbZ}\}$ the vector space of
doubly-infinite matrices with complex coefficients. Equivalently, this is the space
of formal
difference operators $\sum_{r\in\bbZ} a_r \Lambda^r$ where $a_r$ for every $r$
is an element
of the space $\mathscr{F}$ of 
$\bbC$-valued functions on $\bbZ$, and the shift
operator $\Lambda$ acts on $f\in\mathscr{F}$ by $\Lambda^k f(n) = f(n+k)$. For 
$\Delta = \sum_{r\in\bbZ} a_r \Lambda^r \in \cA$, the $\bbC$-linear projections
\ea{
\Delta_+ = & \sum_{r \in \bbZ^+} a_r \Lambda^r , \\
\Delta_- = & \sum_{r \in \bbZ_0^-} a_r \Lambda^r . 
}
define a canonical decomposition $\cA=\cA_+\oplus \cA_-$, 
corresponding to the projections of $\Delta$ to its upper/strictly lower triangular
part. We will denote by $\Delta^T$ its transpose
\eq{
\Delta^T =  \sum_{r \in \bbZ}  \Lambda^{-r} a_r
}
and, whenever defined, we denote its positive/negative order $\mathrm{ord}_{\pm}
\Delta$ as the degree of its projections to $\cA_{\pm}$ as formal difference operators,
\eq{
\mathrm{ord}_\pm \Delta = \deg_{\Lambda^{\pm 1}} (\Delta)_\pm.
}
\\

Armed with these definitions, we construct an infinite dimensional dynamical
system over an affine subspace of $\cA\oplus\cA$, as follows. The {\it 2-dimensional Toda lattice} \cite{MR810623} is the
system of commuting flows $(\de_{s_r^{(1)}}, \de_{s_r^{(2)}},r>0)$ given by the Lax equations
\eq{
\label{eq:2dtlax}
\partial_{s_r^{(1)}} L_i =  [(L_1^r)_+ , L_i ] , \quad 
\partial_{s_r^{(2)}} L_i =  [-(L_2^r)_- , L_i ] , \quad i=1,2,
}
where the 2D-Toda Lax operators are the formal difference operators
\eq{
\label{eq:2dtlaxop}
L_1 = \Lambda + \sum_{j\geq0} u_j^{(1)}  \Lambda^{-j}, \quad 
L_2 = \sum_{j\geq -1} u_{j}^{(2)}  \Lambda^{j}.
}
with $u_j^{(k)} \in \mathscr{F}$ for all $j\in \bbN\cup \{-1\}$, $k=1,2$.
Commutativity of these flows follows from the (simplified form of) the zero-curvature equations
\eq{
\label{eq:commlemm}
\partial_{s_q^{(j)}} L_i^r - \partial_{s_r^{(i)}} L_j^q + [ (L_i^r)_+, (L_j^q)_+ ] - [ (L_i^r)_-, (L_j^q)_-] =0,
} 
which in turn is equivalent to a compatibility condition for the
Zakharov--Shabat spectral problem
\ea{
L_1 \Psi_1 = & w \Psi_1, & L_2^T \Psi_2 = & w \Psi_2, &
\de_{s^{(1)}_q}\Psi_1 = & \l(L_1^q\r)_+ \Psi_1, \nn \\
\de_{s^{(1)}_q}\Psi_2 = & -\l(L_1^q\r)^T_+ \Psi_2^*, &
\de_{s^{(2)}_q}\Psi_1 = & \l(L_2^q\r)^T_- \Psi_1, &
\de_{s^{(2)}_q}\Psi_2 = & -\l(L_2^q\r)^T_- \Psi_2^*.
}
for wave vectors $\Psi_i\in \bbC((w))\otimes \mathscr{F}$, $i=1,2$ \cite{MR810623}. 

An equivalent formulation of the 2D-Toda hierarchy can be given in terms of Sato equations 
\eq{ \label{eq:sato}
\partial_{s_r^{(i)}} S_1 = - (L_i^r)_- S_1, \qquad 
\partial_{s_r^{(i)}} S_2 = - (L_i^r)_- S_2 ,
}
for the dressing operators
\eq{ \label{eq:dressop}
S_1 = 1 + p_1^{(1)} \Lambda^{-1} + \dots, \quad
S_2 = p_0^{(2)} + p_1^{(2)} \Lambda + \dots .
}
The Lax operators are expressed in terms of the dressing operators by
\eq{ \label{eq:lax} 
L_1 = S_1 \Lambda S_1^{-1}, \quad 
L_2 = S_2 \Lambda^{-1} S_2^{-1} ,
}
and the commutativity of the flows $\partial_r^{(i)}$ on $S_i$ again follows from \cref{eq:commlemm}.\\

Under suitable assumptions the initial value problem for the 2D-Toda equation can be solved in terms of a factorization problem~\cite{MR0810626}. Let $\mu\in\cA$ be a matrix depending on the times $s_r^{(i)}$ according to 
\ea{
\frac{\partial \mu}{\partial s_r^{(1)}} = & \Lambda^r \mu  ,\\
\frac{\partial \mu}{\partial s_r^{(2)}} =  & \mu \Lambda^{-r} ,
}
or, equivalently, 
\eq{ \label{eq:mues}
\mu = \exp \big(\sum_{r\geq1} s_r^{(1)} \Lambda^r \big) \mu_0 \exp \big( \sum_{r\geq1} s_r^{(2)} \Lambda^{-r} \big).
}
Assume the factorization 
\eq{ \label{eq:factor}
\mu = S_1^{-1} S_2
}
exists and uniquely determines $S_1$ and $S_2$ as in
\cref{eq:dressop}. Deriving this expression w.r.t. $s_r^{(i)}$ and projecting
it onto $\cA_\pm$ we get that that $S_1$, $S_2$ satisfy the Sato equations
(\cref{eq:sato}), hence the associated Lax operators of \cref{eq:lax} solve
\cref{eq:2dtlax}. In the semi-infinite case the factorization problem can be
directly solved using bi-orthogonal polynomials, as we will show in \cref{sec:biorth}. 

\subsection{The rational reductions}
\label{sec:ratred}
Consider now the difference operators
\ea{
A =& \Lambda^{a} + \alpha_{a-1}  \Lambda^{a-1} + \cdots + \alpha_0 \in
\cA_+, \label{eq:Aop}\\
B =&  1 + \beta_1 \Lambda^{-1} + \cdots + \beta_b \Lambda^{-b} \in 1+\cA_- \label{eq:Bop}
}
for $a,b>0$. We define two factorization maps $L_i : \cA_+\oplus \cA_- \to
\cA$ by
%
\eq{
L_1^{a} = A B^{-1}, \quad L_2^{b} = B A^{-1};
\label{eq:Lbigrad}
}
notice that they give Lax operators in the form of \cref{eq:2dtlaxop}. It is convenient to define also the dual
operators $\widehat{L}_1$, $\widehat{L}_2$ by
\eq{
\widehat{L}_1^{a} = B^{-1} A, \quad \widehat{L}_2^{b} = A^{-1} B.
\label{eq:Lbigrad2}
}
\begin{thm}
\label{prop:red}
For $i=1,2$, $r>0$, the equations
\ea{
&\partial_{s_r^{(i)}} A = (L_i^r)_+ A - A (\widehat{L}_i^r )_+, \label{eq:Aeq} \\
&\partial_{s_r^{(i)}} B = (L_i^r)_+ B - B (\widehat{L}_i^r)_+ \label{eq:Beq}
}
define commutative flows on $A$, $B$ that induce the 2D-Toda Lax equations,~\cref{eq:2dtlax}.
\end{thm}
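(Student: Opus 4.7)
The plan is to exploit a family of intertwining identities that are forced by the factorizations \eqref{eq:Lbigrad}--\eqref{eq:Lbigrad2}. A direct computation gives
\[
L_1^a X \;=\; X\widehat L_1^a, \qquad L_2^b X \;=\; X\widehat L_2^b, \qquad X\in\{A,B\};
\]
e.g.\ $L_1^a B = AB^{-1}\!\cdot\! B = A = B\!\cdot\! B^{-1}A = B\widehat L_1^a$, and the other three identities are just as immediate. Since $L_i$ is the unique $a$-th (resp.\ $b$-th) root in $\Lambda+\cA_-$ (resp.\ the appropriate dual class) of $L_i^a$ (resp.\ $L_i^b$), and the conjugates $X^{-1}L_i X$ belong to the same class, uniqueness of roots upgrades these to $L_i X = X\widehat L_i$ for $X\in\{A,B\}$, $i=1,2$, and hence $L_i^r X = X\widehat L_i^r$ for every $r\geq 1$.

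With the intertwinings in hand, the first step is to verify that \eqref{eq:Aeq}--\eqref{eq:Beq} preserve the forms \eqref{eq:Aop}--\eqref{eq:Bop}. Using $L_i^r A = A\widehat L_i^r$, the right-hand side of \eqref{eq:Aeq} admits the double presentation
\[
(L_i^r)_+ A - A(\widehat L_i^r)_+ \;=\; -(L_i^r)_- A + A(\widehat L_i^r)_-.
\]
The left-hand side is a difference operator of $\Lambda$-degree in a bounded positive range $[0,r+a]$; the right-hand side lies in $(-\infty,a-1]$. Their common value therefore sits in the intersection $[0,a-1]$, so $\partial_{s_r^{(i)}}A$ is a polynomial in $\Lambda$ of degree $\leq a-1$, preserving the monic leading $\Lambda^a$ of $A$. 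The analogous two-sided expression pins $\partial_{s_r^{(i)}} B$ to $\Lambda$-degrees in $[-b,-1]$, preserving \eqref{eq:Bop}. The induced 2D-Toda evolution is extracted by a direct Leibniz computation: substituting \eqref{eq:Aeq}--\eqref{eq:Beq} into $\partial_{s_r^{(i)}}(AB^{-1})$ and telescoping the $\widehat L_i$-contributions yields $\partial_{s_r^{(i)}} L_1^a = [(L_i^r)_+, L_1^a]$, and analogously $\partial_{s_r^{(i)}} L_2^b = [(L_i^r)_+, L_2^b]$. Taking $a$-th (resp.\ $b$-th) roots in the appropriate class of formal operators descends these to $\partial_{s_r^{(i)}} L_j = [(L_i^r)_+, L_j]$ for $j=1,2$. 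For $i=1$ this is \eqref{eq:2dtlax} verbatim; for $i=2$, the factorizations force $L_1^a L_2^b = AB^{-1}BA^{-1} = 1$, whence by uniqueness of $b$-th roots $[L_1, L_2] = 0$ and so $[L_2^r, L_j] = 0$, giving $[(L_2^r)_+, L_j] = -[(L_2^r)_-, L_j]$ and recovering \eqref{eq:2dtlax} once more.

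Commutativity of the flows on $(A,B)$ follows from the standard commutativity of the 2D-Toda hierarchy on $(L_1, L_2)$ together with the fact that on the reduction locus the normalizations \eqref{eq:Aop}--\eqref{eq:Bop} make the factorization $L_1^a = AB^{-1}$ uniquely solvable for the pair $(A,B)$; alternatively, one may verify the zero-curvature equations for $(L_i^r)_+$ and $(\widehat L_i^r)_+$ directly, both of which follow from \eqref{eq:commlemm} applied to 2D-Toda and its dual Lax hierarchy. The only genuinely non-routine ingredient is the intertwining identity of the first paragraph: a priori the right-hand side of \eqref{eq:Aeq} lives in $\cA$ with unbounded order, and it is precisely the cancellation produced by $L_i^r X = X\widehat L_i^r$ that truncates its upper $\Lambda$-degrees and keeps the flow tangent to the reduction locus; every subsequent step is a mechanical verification.
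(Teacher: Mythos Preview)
Your proof is correct and follows essentially the same approach as the paper's: both establish the intertwining identities $L_i^r X = X\widehat L_i^r$ by uniqueness of fractional powers, use the resulting double presentation to pin down the $\Lambda$-degree range of the right-hand sides, and then check that the induced evolutions on $AB^{-1}$ and $BA^{-1}$ reproduce the 2D-Toda Lax equations. Two minor remarks: you are more explicit than the paper in treating the $i=2$ case via $L_1^aL_2^b=1\Rightarrow [L_1,L_2]=0$, which the paper leaves implicit; and your primary commutativity argument (unique solvability of the normalized factorization $L_1^a=AB^{-1}$, hence pullback of commuting flows) is a legitimate shortcut---uniqueness does hold, since $AB^{-1}=A'B'^{-1}$ forces $A'^{-1}A=B'^{-1}B$ to live in degrees $\geq 0$ and $\leq 0$ simultaneously, hence to be a scalar, which the monic/unipotent normalizations then force to be $1$---whereas the paper instead writes out the cross-derivative $\partial_{t_1}\partial_{t_2}A-\partial_{t_2}\partial_{t_1}A$ and kills it directly with the zero-curvature identity~\eqref{eq:commlemm}, which is the route you mention as your ``alternatively''.
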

\begin{proof}
We first check that these flows are well-defined. From
\eq{
A^{-1} L_1 A = ((A^{-1} L_1 A)^a)^{1/a} = (A^{-1} L_1^a A)^{1/a} = (B^{-1} A)^{1/a} = \widehat{L}_1 .
}
we obtain
\eq{
 L_i^r A = A \widehat{L}_i^r,
}
and similarly
\eq{
L_i^r B = B \widehat{L}_i^r. 
}
With the aid of these identities we can rewrite~\cref{eq:Aeq,eq:Beq} as 
\ea{
&\partial_{s_r^{(i)}} A = -(L_i^r)_- A + A (\widehat{L}_i^r )_-, \label{eq:Aeqm} \\
&\partial_{s_r^{(i)}} B = -(L_i^r)_- B + B (\widehat{L}_i^r)_- .\label{eq:Beqm}
}
The r.h.s in both \cref{eq:Aeq,eq:Aeqm} is a difference operator in $\cA_+$ of
order $\mathrm{ord}_+=a-1$, hence 
the flow given by~\cref{eq:Aeq} is well-defined on operators of the
form of \cref{eq:Aop}. Similarly we see that \cref{eq:Beq} gives a well-defined
flow on operators of the form of \cref{eq:Bop}. In general, if $\partial_t A = W A - A \, \widehat{W}$ and $\partial_t B = W B - B \, \widehat{W}$ for some difference operators $W$, $\widehat{W}$, then 
\eq{
\partial_t L_i = [ W, L_i], \quad
\partial_t \widehat{L}_i = [ \widehat{W} , \widehat{L}_i ] .
}
Hence from~\cref{eq:Aeq,eq:Beq,eq:Aeqm,eq:Beqm} it
follows that the operators $L_i$ satisfy the 2D-Toda Lax
equations, \cref{eq:2dtlax}. To prove commutativity, observe that if $\partial_{t_i} A = W^i A - A \, \widehat{W}^i$ for some difference operators $W^i$, $\widehat{W}^i$, $i=1,2$, then
\eq{
\partial_{t_1} \partial_{t_2} A -\partial_{t_2} \partial_{t_1} A = ( W^1_{t_2} - W^2_{t_1} + [W^1,W^2]) A 
-A(\widehat{W}^1_{t_2} -\widehat{W}^2_{t_1} +[\widehat{W}^1,\widehat{W}^2]) .
}
Applying this formula to the flows defined by \cref{eq:Aeq,eq:Beq} we see that the right-hand side vanishes because of~\cref{eq:commlemm}, hence the flows commute. 
\end{proof}

\begin{rmk}
\label{rmk:dual}
Notice that the dual Lax operators also satisfy Lax equations
(\cref{eq:2dtlax}) with $\widehat{L}_i$ instead of $L_i$,
\eq{
 \label{eq:dual2dtlax}
\partial_{s_r^{(1)}} \widehat{L}_i = [(\widehat{L}_1^r)_+ , \widehat{L}_i ] , \quad 
\partial_{s_r^{(2)}} \widehat{L}_i = [-(\widehat{L}_2^r)_- , \widehat{L}_i ] , \quad i=1,2 . 
}
\end{rmk}
\begin{rmk}
The inverses of $A$ and $B$ appearing in Eqs.~\eqref{eq:Lbigrad} and~\eqref{eq:Lbigrad2} are defined as the following upper (resp. lower) diagonal matrices
\begin{equation}
A^{-1} = \sum_{k\geq 0} (1- \alpha_0^{-1} A)^k \alpha_0^{-1} , \quad 
B^{-1} = \sum_{k\geq0} (1-B)^k .
\end{equation}
\end{rmk}

The pairs of matrices of the rational form given by Eq.~\eqref{eq:Lbigrad} form a submanifold of the 2D-Toda phase space of pairs of Lax operators, \cref{eq:2dtlaxop}. 
The previous Theorem shows that, on such submanifold, the 2D-Toda flows coincide with the push-forward under the factorization map, \cref{eq:Lbigrad}, of the vector fields defined by \cref{eq:Aeq,eq:Beq} on the space of pairs $\{(A,B)
\in \cA_+\oplus  \cA_-\}$, where $A$ and $B$ are of the form given by \cref{eq:Aop,eq:Beq}. 
This clearly implies that the submanifold of rational 2D-Toda Lax operators given by \cref{eq:Lbigrad} is invariant under the 2D-Toda flows.

\begin{defin}
A {\rm rational reduction} of the 2D-Toda hierarchy {\rm(RR2T)} of bi-degree $(a,b)$ is the hierarchy of flows induced by the 2D-Toda flows on the invariant subset of matrices of the form ~\eqref{eq:Lbigrad}. 
\end{defin}

We may more generally consider Lax operators of the form
\ea{
L_1 =  \l(\Lambda^{m} A B^{-1} \r)^{1/(a+m)}, \quad L_2 =  \l(B A^{-1}
\Lambda^{-m}\r)^{1/(b+m)}.
\label{eq:equivlax}
}
The same analysis of \cref{prop:red} carries through to this case as well. Notice that in this case the flows in Eq.~\eqref{eq:Aeqm} should be defined in terms of the operator $\hat{A} := \Lambda^m A$, rather than $A$. 

\begin{defin}
Let $(L_1,L_2)$ be as in \cref{eq:equivlax}. The associated reduction of the
2D-Toda lattice hierarchy will be called the m-{\rm generalized RR2T} of bidegree $(a,b)$.
\label{rmk:equivlax}
\end{defin}

\begin{rmk}
We can partially lift the condition that $a,b>0$ by considering the case when
$a=0$ (resp.~$b=0$) as the degenerate situation in which only one half of the
flows given by $\de_{s_r^{(2)}}$ (resp.~$\de_{s_r^{(1)}}$) is defined by \cref{eq:2dtlax,eq:Lbigrad}. All of
the above then carries through to this setting.\\
\end{rmk}

As it turns out, \cref{prop:red} gives rise to a variety of new reductions of the 2D-Toda
hierarchy, incorporating at the same time several known infinite-dimensional
lattice integrable systems.

\setcounter{examplen}{0}

\begin{examplen}[The Ablowitz--Ladik hierarchy]
\label{exa:al}
The Ablowitz--Ladik system~\cite{MR0377223} is a discretization of the
complexified non-linear Schr\"odinger equation given by the second order system
\ea{
\label{eq:AL1}
\ri \dot{x}_n= & -\frac{1}{2}\l(1-x_n y_n\r)\l(x_{n+1}+  x_{n-1}\r)+x_n, \\
\ri \dot{y}_n= & \frac{1}{2}\l(1-x_n y_n\r)\l( y_{n+1}+ y_{n-1}\r)-y_n,
\label{eq:AL2}
}
for $n \in \bbZ$. This system is Hamiltonian, and it possesses an infinite number of local conserved
currents in involution \cite{MR0377223}. As shown in  \cite{Brini:2011ff}, after
work of Adler--van~Moerbeke \cite{MR1794352} and Cafasso \cite{MR2534519} in
the semi-infinite case, its integrability is bequeathed from a rational
embedding into the 2D-Toda hierarchy. Explicitly, introduce lattice
variables $\a, \b\in\mathscr{F}$ through
\ea{
\label{eq:ALa}
\a_n = & -\frac{y_n}{y_{n+1}}, \\
\b_n = & \frac{(1-x_n y_n) y_{n-1}}{y_n}.
\label{eq:Alb}
}
Then \cite{Brini:2011ff} the Ablowitz--Ladik hierarchy is the pull-back under \cref{eq:ALa,eq:Alb}
of the rational reduction of the 2D-Toda flows of bidegree $(a,b)=(1,1)$. 
\end{examplen}

\begin{examplen}[The $q$-deformed Gelfand--Dickey hierarchy]
\label{exa:qdef}
Denote by $D_q$ the scaling ($q$-difference) operator on the real line,
$D_qf(x)=f(x q)$, and  write $Q_\pm$ for the projection of a $q$-difference operator $Q$ onto
its $q$-differential/strictly $q$-pseudo-differential part. Lax
equations in the form
\ea{
\de_{t_m} \mathfrak{L}= \l[\mathfrak{L},\l(\mathfrak{L}^m\r)_+\r]
\label{eq:qlax1}
}
for the $q$-pseudodifference operator
\eq{
\mathfrak{L} \triangleq D_q+\sum_{j\geq 0}u_j(x) D_q^{-j}.
}
were proposed by E.~Frenkel in \cite{MR1383952} as a $q$-analogue of the KP hierarchy. In
particular, the natural $q$-analogue of the Gelfand--Dickey ($n$-KdV) hierarchy
\eq{
\mathfrak{L}^{n+1} = D_q^{n+1}+\sum_{j\geq 1}^{n} \tau_j(x) D_q^j,
\label{eq:qlax2}
}
give rise to a completely integrable bi-Hamiltonian system . Rewriting the
$q$-difference Lax equations \cref{eq:qlax1,eq:qlax2} as ordinary Lax equations for
a discrete operator $L$ \cite{MR1626871}, the system \cref{eq:qlax1} can be recast in the
form of a reduction of the 2D-Toda flows under the constraint
\eq{
(L^{n+1})_-=0.
}
This corresponds to the RR2T of bidigree $(a,b)=(n+1,0)$.

\end{examplen}

\begin{examplen}[The bi-graded Toda hierarchy]

The bi-graded Toda lattice hierarchy of \cite{MR2246697} also enjoys a
representation as a (generalized) RR2T. By \cref{eq:equivlax,eq:Lbigrad}, the Lax operator for $(N,M)$ bi-graded
Toda 
\eq{
L = \Lambda^N+u_{N-1}\Lambda^{N-1}+\dots + u_{-M}\Lambda^{-M}
}
indeed corresponds to the Lax operator $L_1^{N+M}$ for the $-M$-generalized RR2T of bidegree
$(N+M,0)$. Notice that in this formulation we can only recover as reductions of the 2D-Toda flows only the standard flows and not the extended or logarithmic ones. 

\end{examplen}

\subsection{Rational reductions and the factorization problem}

It is illuminating to consider the form of the constraint leading to the RR2T
at the level of dressing operators. By \cref{rmk:dual}, the dual Lax operators
$\widehat{L}_i$ satisfy the 2D-Toda Lax equations,
\cref{eq:dual2dtlax}. Introducing the corresponding 2D-Toda dressing operators
$\widehat{S}_i$ as in \cref{eq:dressop,eq:sato}, which satisfy the Sato equations
\eq{ \label{eq:sato2}
\partial_{s_r^{(i)}} \widehat{S}_1 = - (\widehat{L}_i^r)_- \widehat{S}_1, \qquad 
\partial_{s_r^{(i)}} \widehat{S}_2 = - (\widehat{L}_i^r)_- \widehat{S}_2 ,
}
 the RR2T of bidegree $(a,b)$ can
be translated into the pair of constraints
\begin{subequations}\label{eq:dressred}
\begin{align} 
&S_1 \Lambda^a \widehat{S}_1^{-1} = S_2 \widehat{S}_2^{-1} \triangleq A, \label{eq:dressreda}\\
&S_1 \widehat{S}_1^{-1} = S_2 \Lambda^{-b} \widehat{S}_2^{-1} \triangleq B. \label{eq:dressredb}
\end{align}
\end{subequations}
%

\begin{prop}
The constraints given by Eq.~\eqref{eq:dressred} are preserved by the Sato equations for $S_i$, $\hat{S}_i$, hence define a reduction of 2D-Toda at the level of dressing operators that corresponds to the rational reduction of bi-degree $(a,b)$.
\end{prop}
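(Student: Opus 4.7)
The plan is to proceed by four linked verifications, all reducing to direct manipulations with the Sato equations \cref{eq:sato,eq:sato2} and the dressing relations $L_i = S_i \Lambda^{\varepsilon_i} S_i^{-1}$, $\widehat{L}_i = \widehat{S}_i \Lambda^{\varepsilon_i} \widehat{S}_i^{-1}$ with $\varepsilon_1 = 1$, $\varepsilon_2 = -1$.

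First, I would check that \cref{eq:dressred} forces $A$ and $B$ into the finite forms \cref{eq:Aop,eq:Bop}. Since $S_1, \widehat{S}_1 \in 1 + \cA_-$, the expression $S_1 \Lambda^a \widehat{S}_1^{-1}$ is of the shape $\Lambda^a + O(\Lambda^{a-1})$, whereas $S_2 \widehat{S}_2^{-1}$ lies in $\cA_+$; the equality in \cref{eq:dressreda} then truncates the naive negative tail of the first representation and the positive tail of the second, pinning $A$ down to a polynomial in $\Lambda$ of degree exactly $a$. The symmetric argument applied to \cref{eq:dressredb} forces $B$ to be a polynomial in $\Lambda^{-1}$ of degree $b$ with unit leading coefficient.

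Second, direct cancellation in the constraints yields
\begin{align*}
A B^{-1} &= (S_1 \Lambda^a \widehat{S}_1^{-1})(\widehat{S}_1 S_1^{-1}) = S_1 \Lambda^a S_1^{-1} = L_1^a, \\
B A^{-1} &= (S_2 \Lambda^{-b} \widehat{S}_2^{-1})(\widehat{S}_2 S_2^{-1}) = S_2 \Lambda^{-b} S_2^{-1} = L_2^b,
\end{align*}
recovering the rational factorization \cref{eq:Lbigrad} of bidegree $(a,b)$.

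Third, and most substantively, I would show that the locus \cref{eq:dressred} is invariant under the Sato flows. Differentiating each side of \cref{eq:dressreda} and plugging in \cref{eq:sato,eq:sato2} gives
\begin{align*}
\partial_{s_r^{(i)}}(S_1 \Lambda^a \widehat{S}_1^{-1}) &= -(L_i^r)_- A + A (\widehat{L}_i^r)_-, \\
\partial_{s_r^{(i)}}(S_2 \widehat{S}_2^{-1}) &= -(L_i^r)_- A + A (\widehat{L}_i^r)_-,
\end{align*}
so both representations of $A$ obey the same linear evolution; since they coincide at $s_r^{(i)} = 0$, they remain equal along the flow. An identical computation for \cref{eq:dressredb} shows its preservation and yields $\partial_{s_r^{(i)}} B = -(L_i^r)_- B + B (\widehat{L}_i^r)_-$. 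These induced dynamics on $A,B$ are precisely \cref{eq:Aeqm,eq:Beqm}, which by \cref{prop:red} are equivalent to the RR2T flows \cref{eq:Aeq,eq:Beq}, completing the identification. The main delicate point is the formal-series compatibility in step one — checking that the two a priori distinct series representations of $A$ (and likewise $B$) do force the coefficients outside the expected finite range to vanish inside a common algebra of formal difference operators; the remaining steps reduce to mechanical bookkeeping with the Sato equations.
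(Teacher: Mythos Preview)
Your proposal is correct and follows essentially the same route as the paper's proof: verify that the constraints pin $A,B$ to the finite forms \cref{eq:Aop,eq:Bop}, read off the factorization \cref{eq:Lbigrad} by direct cancellation, and then use the Sato equations to see that both representations of $A$ (resp.\ $B$) evolve by \cref{eq:Aeqm,eq:Beqm}. The only stylistic difference is that the paper infers preservation of the constraints by invoking \cref{prop:red} (the induced flows are well-defined on $\cA^{\rm RR}$), whereas you argue directly that both sides of each constraint satisfy the same linear evolution and hence remain equal---these are equivalent, and your version is slightly more self-contained.
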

\begin{proof}
Notice that this case the operators $A$, $B$ arise naturally as a combination of  the dressing operators of two copies of the 2D-Toda hierarchy. 
Clearly ~\eqref{eq:dressred} implies that the operators $A$, $B$ are of the form~\eqref{eq:Aop},~\eqref{eq:Bop}. The corresponding Lax operators $L_i$, $\hat{L}_i$, defined through~\eqref{eq:lax}, factorize as in~\eqref{eq:Lbigrad},~\eqref{eq:Lbigrad2}, i.e.
\begin{equation}
L_1^a = S_1 \Lambda^a S_1^{-1} = S_1 \Lambda^a \hat{S}_1^{-1} \cdot\hat{S}_1 S_1^{-1} = A B^{-1} , \text{ etc...}
\end{equation}
and Sato equations induce the flows~\eqref{eq:Aeq},~\eqref{eq:Beq}. It follows that the constraints~\eqref{eq:dressred} are preserved by the Sato equations. 
\end{proof}

As the simplest non-trivial rational reduction of the 2D-Toda hierarchy gives rise to the Ablowitz--Ladik hierarchy \cite{Brini:2011ff}, which is in turn related to a factorization problem of a T\"oplitz moment matrix,  it is natural to ask whether the generic rational reduction may be interpreted in the same way. 

\begin{defin}
We say that $\mu\in\cA$ is a \emph{block T\"oplitz operator} of bi-degree $(a,b)$ if 
\eq{
\label{eq:bltop}
\Lambda^a \mu \Lambda^{-b} = \mu .
}
\end{defin}
Equivalently, its matrix entries satisfy $\mu_{i+a,j+b}=\mu_{ij}$, which reduces to the usual T\"oplitz condition when
$a=b=1$. Clearly the property of being block T\"oplitz of bi-degree $(a,b)$ is
preserved by the time evolution as in \cref{eq:mues}.\\
Let now $(\mu_{ij})_{i,j\in\bbZ}$ be a block T\"oplitz matrix of bi-degree
$(a,b)$ depending on the times $s_r^{(i)}$ as in \cref{eq:mues} and such that
the factorization problems 
\begin{subequations}
\begin{align}
&\mu = S_1^{-1} S_2, \label{eq:fac1} \\
&\mu \Lambda^{-b} = \widehat{S}_1^{-1} \widehat{S}_2, \label{eq:fac2}
\end{align}
\end{subequations}
admit solutions for $S_i$, $\widehat{S}_i$ of the form of \cref{eq:dressop}. We have
the following
\begin{prop} 
\label{prop:propS}
The dressing matrices $S_i$, $\widehat{S}_i$ satisfy the Sato equations with
the constraints in \cref{eq:dressred}. The corresponding Lax operators
(\cref{eq:lax}) give a solution of the RR2T of bidegree $(a,b)$.
\end{prop}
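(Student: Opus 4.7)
My plan is to mirror, for both factorisation problems \eqref{eq:fac1}--\eqref{eq:fac2}, the standard argument that produces the 2D-Toda Sato equations from the factorisation of a time-evolving formal matrix, and then to exploit the block T\"oplitz condition \eqref{eq:bltop} to glue the two halves together into the rational constraints \eqref{eq:dressred}. The payoff is that these constraints simultaneously pin down the Lax operators in the rational form \eqref{eq:Lbigrad}, producing the RR2T solution.

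The first step is to apply the usual Sato derivation to \eqref{eq:fac1}: differentiating $\mu = S_1^{-1}S_2$ with respect to $s_r^{(i)}$, substituting the evolution \eqref{eq:mues}, and splitting the resulting identity into its $\cA_+$ and $\cA_-$ components yields \eqref{eq:sato} exactly as reviewed in \cref{sec:2dtoda}. The same derivation applies verbatim to \eqref{eq:fac2}, because $\widehat\mu := \mu\Lambda^{-b}$ evolves in the 2D-Toda form $\de_{s_r^{(1)}}\widehat\mu = \Lambda^r\widehat\mu$, $\de_{s_r^{(2)}}\widehat\mu = \widehat\mu\Lambda^{-r}$; this produces \eqref{eq:sato2} with the standard dressing $\widehat L_1 = \widehat S_1\Lambda\widehat S_1^{-1}$, $\widehat L_2 = \widehat S_2\Lambda^{-1}\widehat S_2^{-1}$. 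Using $S_1\mu = S_2$, $\widehat S_1\widehat\mu = \widehat S_2$ together with the T\"oplitz identity $\Lambda^a\mu = \mu\Lambda^b$, a direct computation gives
\[
S_1\Lambda^a\widehat S_1^{-1} = S_1\mu\,\widehat S_2^{-1} = S_2\widehat S_2^{-1}, \qquad S_1\widehat S_1^{-1} = S_1\widehat\mu\,\widehat S_2^{-1} = S_2\Lambda^{-b}\widehat S_2^{-1},
\]
which are precisely \eqref{eq:dressreda}--\eqref{eq:dressredb} with common values $A$ and $B$.

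The main obstacle is the next step: verifying that $A$ and $B$ so defined actually have the finite diagonal shapes prescribed by \eqref{eq:Aop}--\eqref{eq:Bop}. The key point is that \emph{each} of $A,B$ admits two distinct expressions, and the intersection of the two carves out exactly the finite $\Lambda$-window required. Indeed, $A = S_1\Lambda^a\widehat S_1^{-1}$ is a product in $(1+\cA_-)\Lambda^a(1+\cA_-)$ and so has leading term $\Lambda^a$ with unit coefficient and no $\Lambda$-order strictly greater than $a$, whereas $A = S_2\widehat S_2^{-1}\in\cA_+$ forces every $\Lambda$-order to be non-negative; intersecting yields $A = \Lambda^a + \alpha_{a-1}\Lambda^{a-1} + \dots + \alpha_0$. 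The argument for $B$ is symmetric: $B = S_1\widehat S_1^{-1}\in 1+\cA_-$ annihilates all strictly positive orders and pins the constant term to $1$, while $B = S_2\Lambda^{-b}\widehat S_2^{-1}$ has lowest $\Lambda$-order bounded below by $-b$, whence $B = 1 + \beta_1\Lambda^{-1} + \dots + \beta_b\Lambda^{-b}$. With $A,B$ in the correct form, the factorisations $L_1^a = S_1\Lambda^a S_1^{-1} = (S_1\Lambda^a\widehat S_1^{-1})(\widehat S_1 S_1^{-1}) = AB^{-1}$ and, analogously, $L_2^b = BA^{-1}$, realise \eqref{eq:Lbigrad}; since the Sato equations already enforce the 2D-Toda Lax evolution \eqref{eq:2dtlax}, \cref{prop:red} guarantees that $(L_1,L_2)$ solves the RR2T of bidegree $(a,b)$.
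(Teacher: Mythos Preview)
Your proof is correct and follows essentially the same approach as the paper. The paper's own proof is terser: it only establishes the constraints \eqref{eq:dressred} by substituting \eqref{eq:fac1} into \eqref{eq:fac2} and then invoking the block T\"oplitz identity $\Lambda^{-a}\mu = \widehat S_1^{-1}\widehat S_2$, while relegating the verification that $A,B$ have the finite shapes \eqref{eq:Aop}--\eqref{eq:Bop} and that $L_i$ factor rationally to the immediately preceding Proposition~2.6, and relying on the standard argument of \cref{sec:2dtoda} for the Sato equations themselves. You have simply made the argument self-contained by incorporating those points explicitly; your ``two-sided pinching'' of $A$ and $B$ between an $(1+\cA_-)\Lambda^c(1+\cA_-)$-type expression and an $\cA_+$-type expression is exactly what the paper means by ``clearly \eqref{eq:dressred} implies the form \eqref{eq:Aop}--\eqref{eq:Bop}''.
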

\begin{proof}
By substituting
\cref{eq:fac1} into \cref{eq:fac2} we get
\eq{
S_1^{-1} S_2 \Lambda^{-b} = \widehat{S}_1^{-1} \widehat{S}_2 .
}
Left-multiplication by $S_1$ and right-multiplication by $\widehat{S}_2^{-1}$ give
\cref{eq:dressredb}. By the block T\"oplitz property, \cref{eq:bltop}, we can
rewrite \cref{eq:fac2} as
\eq{
\Lambda^{-a} \mu = \widehat{S}_1^{-1} \widehat{S}_2.
}
Performing the same substitution as before and rearranging the terms we obtain
\cref{eq:dressreda}.
\end{proof}

\subsection[Semi-infinite block T\"oplitz matrices and bi-orthogonal
  polynomials]{Semi-infinite block T\"oplitz matrices and bi-orthogonal
  polynomials on the unit circle}
\label{sec:biorth}
All statements of the previous sections can be transferred almost 
verbatim to the so-called semi-infinite case, given by
the algebra $\cA^{\frac{\infty}{2}}=\{(a_{ij}\in\bbC)_{i,j\in\bbZ_{\geq 0}}\}$
of complex semi-infinite matrices. In this case $\Lambda$ and $\Lambda^{-1}$ denote the semi-infinite matrices
\eq{
(\Lambda)_{ij} := \delta_{i+1,j}, \quad
(\Lambda^{-1})_{ij} := (\Lambda^T)_{ij} = \delta_{i,j+1}
}
Here, with an abuse of notation, we denote by $\Lambda^{-1}$ the transpose of
$\Lambda$, which is in fact only a right inverse of $\Lambda$. We have
\eq{
\Lambda^{-1} \Lambda= 1 - E_{11}
}
where $(E_{11})_{ij}=\delta_{i,0} \delta_{j,0}$. \\

\subsubsection{The factorization problem for 2D-Toda and bi-orthogonal polynomials}
In the semi-infinite case and for generic initial data for the 2D-Toda flows, a sufficient condition for the existence of the
factorization of \cref{eq:factor} is given by Gauss' elimination: if all the
leading principal minors of $\mu \in \cA^{\frac{\infty}{2}}$ are non-zero, this leads
to the LU decomposition of \cref{eq:factor}. The factorization problem can
then be interpreted as the construction of bi-orthogonal polynomials with
respect to the bilinear form $\bra,\ket_\mu$ associated to $\mu$.
More precisely, let $\mu\in \cA^{\frac{\infty}{2}}$ and let $\bra,\ket_\mu$ be
the $\bbC$-bilinear form on $\bbC[z]$ defined by
\eq{
\bra z^i , z^j \ket_\mu = \mu_{ij} .
}
Let $p_j^{(i)}(z)$, $i=1,2$, $j\geq0$ be monic polynomials in $\bbC[z]$ of
degree $j$. The factorization problem for $\mu$ is equivalent to the
requirement that $p_j^{(i)}(z)$ form a bi-orthogonal basis in $\bbC[z]$ w.r.t
$\bra,\ket_\mu$ i.e.
\eq{
 \label{eq:bior}
\bra p_i^{(1)} , p_j^{(2)} \ket_\mu = \delta_{ij} h_i .
}
Indeed,  the coefficients of the bi-orthogonal polynomials are related to the
matrices $S_1$, $S_2$ by 
\begin{subequations} \label{eq:orp}
\begin{align}
&p_i^{(1)}(z) = \sum_{k=0}^i (S_1)_{ik} z^k ,\label{eq:orp1}\\
&p_i^{(2)}(z) = h_i \sum_{k=0}^i (S_2^{-1})_{ki} z^k . \label{eq:orp2}
\end{align}
\end{subequations}
The bi-orthogonality property in \cref{eq:bior} turns into
\eq{
S_1 \mu S_2^{-1} h = h
}
i.e. the factorization of the moment matrix, \cref{eq:factor}. Denote now by
$p^{(i)}$ (resp. $\widehat{p}^{(i)}$) the semi-infinite
vector having $p_j^{(i)}$ (resp. $\widehat{p}_j^{(i)}$) as its $j^{\rm th}$
entry. By \cref{eq:factor,eq:lax}, the Lax operators $L_i$ act on bi-orthogonal polynomials as
\begin{align}
&L_1 p^{(1)}(z) = z p^{(1)}(z) , \\
&h L_2^T h^{-1} p^{(2)}(z) = z p^{(2)}(z).
\end{align}

\subsubsection{Semi-infinite block T\"oplitz matrices}
Let us now turn to the study of the $(a,b)$ RR2T in the semi-infinite case,
or, equivalently, to the factorization problem of semi-infinite block
T\"oplitz matrices.  We start by defining two sets of bi-orthogonal polynomials
associated with the $\bbC$-bilinear forms
\begin{align}
&\bra z^i, z^j\ket_\mu = \mu_{ij}, \\
&\bra z^i , z^j \ket_{\widehat\mu} = \widehat\mu_{ij}=\mu_{i,j+b},
\end{align}
where $\widehat{\mu} = \mu \Lambda^{-b}$. Both $\mu$ and $\widehat\mu$ satisfy the
T\"oplitz property, which translates, at the level of bilinear forms, into
\begin{equation}
\bra z^a f(z) , z^b g(b) \ket_\bullet = \bra f(z) , g(z) \ket_\bullet
\end{equation}
for any $f,g \in \bbC[z]$. The monic polynomials $p_j^{(i)}$ and
$\widehat{p}_j^{(i)}$ satisfy the bi-orthogonality conditions
\begin{subequations}
\begin{align}
&\bra p_i^{(1)} , p_j^{(2)} \ket_\mu = \delta_{ij} h_i , \\
&\bra \widehat{p}_i^{(1)} , \widehat{p}_j^{(2)} \ket_{\widehat\mu} = \delta_{ij} \widehat{h}_i .
\end{align}
\end{subequations}
The corresponding dressing matrices $S_i$, $\widehat{S}_i$ are defined through
\cref{eq:orp}; such matrices satisfy the factorization problems of
\cref{eq:fac1,eq:fac2}. If we assume that the moment matrix $\mu$ depends on the times $s_r^{(i)}$ as
in \cref{eq:mues}, then, according to \cref{prop:propS}, $(S_i,\widehat{S}_i)$ give a solution of the $(a,b)$-graded RR2T.

\begin{prop}
The bi-orthogonal polynomials $p_j^{(i)}$ and the dual bi-orthogonal polynomials $\widehat{p}_j^{(i)}$ are related by the following identities
\begin{subequations}
\begin{align}
A \widehat{p}^{(1)} &= z^a p^{(1)} , \label{eq:rel1}\\
\widehat{h} A^T h^{-1} p^{(2)} &= \widehat{p}^{(2)} ,\\
B \widehat{p}^{(1)} &= p^{(1)} ,\\
\widehat{h} B^T h^{-1} p^{(2)} &= z^b \widehat{p}^{(2)} .
\end{align}
\end{subequations}
\end{prop}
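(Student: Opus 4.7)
The plan is to reduce every one of the four identities to a transparent matricial statement about the dressing operators by using the generating vector $\chi(z) := (1, z, z^2, \ldots)^T$, which satisfies the eigenvalue relation $\Lambda^k \chi(z) = z^k \chi(z)$ for all $k\geq 0$. Unpacking the definitions \eqref{eq:orp} componentwise one finds
\begin{equation}
p^{(1)} = S_1 \chi, \quad h^{-1} p^{(2)} = S_2^{-T} \chi, \quad \widehat{p}^{(1)} = \widehat{S}_1 \chi, \quad \widehat{h}^{-1} \widehat{p}^{(2)} = \widehat{S}_2^{-T} \chi,
\end{equation}
so each of the four identities becomes an algebraic statement involving only $A$, $B$ and the dressing operators.

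For the first and third identities I would read the reduction constraints \eqref{eq:dressred} in the ``$S_1$-form'' $A = S_1 \Lambda^a \widehat{S}_1^{-1}$ and $B = S_1 \widehat{S}_1^{-1}$, and apply them to $\widehat{p}^{(1)} = \widehat{S}_1 \chi$. The factors $\widehat{S}_1^{-1}\widehat{S}_1$ collapse, and the eigenvalue property $\Lambda^a\chi = z^a \chi$ then yields $A \widehat{p}^{(1)} = S_1 \Lambda^a \chi = z^a p^{(1)}$ and $B\widehat{p}^{(1)} = S_1 \chi = p^{(1)}$, as required.

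For the second and fourth identities I would instead invoke the ``$S_2$-form'' of the same constraints, $A = S_2 \widehat{S}_2^{-1}$ and $B = S_2 \Lambda^{-b} \widehat{S}_2^{-1}$. Transposing, and using that in the semi-infinite algebra $(\Lambda^{-1})^T = \Lambda$, hence $(\Lambda^{-b})^T = \Lambda^b$, one obtains $A^T = \widehat{S}_2^{-T} S_2^T$ and $B^T = \widehat{S}_2^{-T}\Lambda^b S_2^T$. Acting on $h^{-1} p^{(2)} = S_2^{-T}\chi$, the pair $S_2^T S_2^{-T}$ telescopes and the two computations close as $\widehat{h} A^T h^{-1} p^{(2)} = \widehat{h}\widehat{S}_2^{-T}\chi = \widehat{p}^{(2)}$ and $\widehat{h} B^T h^{-1} p^{(2)} = \widehat{h}\widehat{S}_2^{-T} z^b \chi = z^b\widehat{p}^{(2)}$.

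The only mildly delicate point in this program is the handling of the transposes of $\Lambda^{-1}$ in the semi-infinite setting, where $\Lambda$ is only a one-sided inverse and $\Lambda^{-1}\Lambda = 1 - E_{11}$. This defect is harmless here because, after all telescopings, every shift that actually hits $\chi(z)$ is a non-negative power of $\Lambda$, on which the eigenvalue property holds with no boundary correction. The proof is therefore essentially a bookkeeping exercise combining the factorization identities \eqref{eq:dressred} with the definitions \eqref{eq:orp}, and I do not expect any genuine obstacle.
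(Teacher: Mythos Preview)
Your proposal is correct and follows essentially the same approach as the paper's proof. The paper writes out only the first relation --- computing $A\widehat{p}^{(1)}$ by applying $A\widehat{S}_1 = S_1\Lambda^a$ to the generating vector of monomials --- and then declares the remaining three ``proved in a similar way''; your systematic treatment via $\chi(z)$, the dressing identities \eqref{eq:dressred}, and transposition is precisely that similar way made explicit, including the appropriate care about one-sided inverses in the semi-infinite setting.
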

\begin{proof}
Let us prove the first relation. Applying $A$ to \cref{eq:orp1} we get 
\begin{equation}
(A \widehat{p}^{(1)} )_i= \sum_{k\geq0} (A \widehat{S}_1)_{ik} z^k,
\end{equation}
where we have used the fact that the sum in \cref{eq:orp1} can be extended to $\infty $ due to the triangular structure of $S_1$. 
 The first part of \cref{eq:dressreda} gives
\begin{equation}
A \widehat{S}_1 = S_1 \Lambda^a ,
\end{equation}
which substituted above gives \cref{eq:rel1}. The remaining relations are proved in a similar way.
\end{proof}

As a straightforward consequence we obtain recursion relations for the bi-orthogonal polynomials $p_j^{(2)}$ and $\widehat{p}_j^{(1)}$.
\begin{cor}
The bi-orthogonal polynomials $p_j^{(2)}$, $\widehat{p}_j^{(1)}$ satisfy the  relations
\begin{subequations} \label{eq:rectot}
\begin{align}
A \widehat{p}^{(1)} &= z^a B \widehat{p}^{(1)} ,\label{eq:rec3} \\
B^T h^{-1} p^{(2)} &= z^b A^T h^{-1} p^{(2)} . \label{eq:rec2} 
\end{align}
\end{subequations}
\end{cor}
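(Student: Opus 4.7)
The plan is to observe that the corollary is a direct algebraic consequence of the four identities in the preceding Proposition, obtained by substitution and (in one case) cancellation of a diagonal matrix.

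For the first relation, I would start from the third identity of the Proposition, $B \widehat{p}^{(1)} = p^{(1)}$, and substitute the right-hand side into the first identity, $A \widehat{p}^{(1)} = z^a p^{(1)}$. This immediately yields
\[
A \widehat{p}^{(1)} = z^a B \widehat{p}^{(1)}.
\]

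For the second relation, I would use the second identity, $\widehat{h}\, A^T h^{-1} p^{(2)} = \widehat{p}^{(2)}$, to replace $\widehat{p}^{(2)}$ on the right-hand side of the fourth identity, $\widehat{h}\, B^T h^{-1} p^{(2)} = z^b \widehat{p}^{(2)}$. This gives
\[
\widehat{h}\, B^T h^{-1} p^{(2)} = z^b\, \widehat{h}\, A^T h^{-1} p^{(2)}.
\]
Since $\widehat{h}$ is the diagonal matrix of (non-zero) bi-orthogonality constants $\widehat{h}_i$, it is invertible, and left-multiplication by $\widehat{h}^{-1}$ yields the desired identity
\[
B^T h^{-1} p^{(2)} = z^b A^T h^{-1} p^{(2)}.
\]

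No obstacles are expected: the only subtle point is the invertibility of the diagonal matrix $\widehat{h}$, which is guaranteed whenever the factorization problem for $\widehat{\mu}$ admits a solution as in \cref{eq:dressop}, i.e.\ whenever the bi-orthogonal polynomials $\widehat{p}_j^{(i)}$ exist. Under this standing assumption, the corollary follows at once.
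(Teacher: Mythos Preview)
Your proof is correct and is precisely the ``straightforward consequence'' the paper alludes to: the paper does not spell out a proof at all, simply stating the corollary as an immediate byproduct of the four identities in the preceding Proposition. Your substitution argument and the cancellation of the invertible diagonal $\widehat{h}$ are exactly what is intended.
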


\begin{rmk}
For $a=b=1$ we get from \cref{eq:rec3}
\begin{equation}
\widehat{p}_{i+1}^{(1)} + \alpha_0(i) \widehat{p}_{i}^{(1)} =
z ( \widehat{p}_{i}^{(1)} + \beta_1(i) \widehat{p}_{i-1}^{(1)} ),
\label{eq:rec11a}
\end{equation}
and from \cref{eq:rec2}
\begin{equation}
p_i^{(2)} h_i^{-1} + p_{i+1}^{(2)} h_{i+1}^{-1}  \beta_1(i+1) =
z( p_{i-1}^{(2)} h_{i-1}^{-1} + p_{i}^{(2)} h_i^{-1} \alpha_0(i) ) 
\label{eq:rec11b}
\end{equation}
with $i\geq 0$, and assuming $p_j^{(i)} = \widehat{p}_j^{(i)}=0$ when $j<0$.
Notice that in the general $(a,b)$ case the recursions in \cref{eq:rectot} involve $a+b+2$ terms. 
\end{rmk}

\begin{rmk}
For the Ablowitz--Ladik lattice, $(a,b)=(1,1)$,
the moment matrix can be seen to arise
from the scalar product on functions on the unit circle,
\eq{
\bra f,g \ket_\mu = \frac{1}{2\pi\ri}\int_{S^1} f(z) g(z^{-1}) \re^{\sum_{i>0}
\l(  s_i^{(1)} z^i-s_i^{(2)} z^{-1}\r)}\frac{\rd z}{z}.
\label{eq:biorthcirc}
}
Correspondingly, the associated 2D-Toda $\tau$-function is the partition of
the unitary matrix model,
\eq{
Z_{U(N)}= \prod_{i=0}^{n-1}h_n,
}
and the recursion relations of \cref{eq:rec11a,eq:rec11b} imply the three-term 
recursion relations of \cite{Hisakado:1996di,Kharchev:1996kh} for the unitary ensemble. The general $(a,b)$
case corresponds to 
complex integrals of the form
\eq{
\bra f, g \ket_\mu = \frac{1}{2\pi\ri}\int_{S^1} f(z^{b}) g(z^{-{a}}) \re^{\sum_{i>0}
\l(  s_i^{(1)} z^i-s_i^{(2)} z^{-1}\r)}\frac{\rd z}{z}.
}
Notice that the bilinear form on $\bbC[z]$ thus defined is not symmetric
anymore as soon as $a \neq b$, and the unitary matrix model interpretation is
correspondingly less obvious.
\end{rmk}

\subsection{Hamiltonian structure}
\label{sec:hamstruct}

Since the 2D-Toda hierarchy admits a triplet of compatible Poisson structures
\cite{GC}, a natural question arises as to whether the RR2T flows admit a Hamiltonian formulation.
Unlike the case of the extended bi-graded Toda hierarchy, 
the generic RR2T is not given by an affine submanifold in field space, and
correspondingly the Dirac reduction of the parent Poisson structures is not
straightforward. Remarkably, however, at least one Poisson structure can always be reduced
to the locus defined by the factorization of the Lax operator as in
\cref{eq:Lbigrad}. The key to this is a degeneration property of the
corresponding Poisson tensor, as we now turn to illustrate. \\

It is well-known that the 2D-Toda hierarchy can be formulated in terms of two Lax operators of the form
\eq{
\label{eq:2dtlaxopnm}
\bar{L}_1 = \Lambda^a + \sum_{j\geq -a+1} \bar{u}_j^{(1)}  \Lambda^{-j}, \quad 
\bar{L}_2 = \sum_{j\geq -b} \bar{u}_{j}^{(2)}  \Lambda^{j},
}
for two fixed integers $a, b\geq1$. They are related to the Lax operators
defined in~\eqref{eq:2dtlaxop} by $\bar{L}_1 = L_1^a$ and $\bar{L}_2 =
L_2^b$. In the rest of this subsection we will always use the formulation in terms of the Lax operators~\eqref{eq:2dtlaxopnm}, and, to keep the notations simple, we will drop the bars and denote them by $L_1$ and $L_2$. \\

Denote by $(\dot L_1, \dot L_2)$ an element in the tangent space
$T\mathcal{A}^{\rm 2DT}=\{(\dot L_1=\sum_{j\leq a-1}\dot
u_{j}^{(1)}\Lambda^{j},\dot L_2=\sum_{j\geq -b}\dot u_{j}^{(2)} \Lambda^{j}
)\}$ of the 2D-Toda phase space and introduce the bilinear pairing
\eq{
\bra(\dot L_1,\dot L_2),(X,Y) \ket = \tr(\dot L_1 X + \dot L_2 Y)
}
to induce differential forms in $T^*\mathcal{A}^{\rm 2DT}$ from operators $(X,Y)$ of the
form $X=\sum_{k>n}  x_k \Lambda^k $ and $Y=\sum_{k<m} y_k \Lambda^k$ for some
$n,m\in\mathbb{Z}$. Similarly, we denote by $(\dot A,\dot B)$ an element of
the  tangent space $T\mathcal{A}^{\rm RR}=\{(\dot A=\dot \alpha_{a-1}
\Lambda^{a-1}+\ldots+\dot\alpha_0,\dot B=\dot \beta_1 \Lambda^{-1}+\ldots + \dot
\beta_b \Lambda^{-b}) \}$ to the phase space of rational reductions
$\mathcal{A}^{\rm RR}$. The same bilinear pairing described above produces a
differential form on $\mathcal{A}^{\rm RR}$ starting this time from an
operator $(X,Y)$ of the more general form $X=\sum_{k\in \mathbb{Z}}  x_k
\Lambda^k $ and $Y=\sum_{k\in \mathbb{Z}} y_k \Lambda^k$. \\


It was shown in \cite{GC} that, for $a=b=1$, $\mathcal{A}^{\rm 2DT}$ can be endowed with
three compatible Poisson structures with respect to which the 2D-Toda flows
are Hamiltonian. The construction of~\cite{GC} can be easily extended to the general $a,b\geq1$ case. What was referred to  in \cite{GC} as the ``second'' Poisson tensor, in particular, reads as follows. When applied on a differential form corresponding via the pairing to the operator $(X_1,X_2)$, it gives the following vector 
\ea{
P(\bra\cdot,(X_1,X_2)\ket) =& \frac{1}{2} [L_1,(L_1 X_1 + X_1
  L_1)_{-}- (L_2 X_2 +X_2 L_2)_{-}] \nn \\
+& \frac{1}{2} [L_1,(\Lambda^a+1)(\Lambda^a-1)^{-1}
  \Res([L_1,X_1]+[L_2,X_2])]\nn \\
-& \frac{1}{2} L_1([L_1,X_1]+[L_2,X_2])_{\leq 0}- \frac{1}{2}
([L_1,X_1]+[L_2,X_2])_{\leq 0}L_1,\nn \\
 & \frac{1}{2} [L_2,(L_2X_2 + X_2 L_2)_{+}- (L_1 X_1 +X_1 L_1)_{+}] \nn \\
+& \frac{1}{2} [L_2,(\Lambda^a+1)(\Lambda^a-1)^{-1} \Res
  ([L_1,X_1]+[L_2,X_2])]\nn \\
-& \frac{1}{2}  L_2([L_1,X_1]+[L_2,X_2])_{> 0} - \frac{1}{2} ([L_1,X_1]+[L_2,X_2])_{>0}L_2 .
\label{poisson2dtoda}
}
This Poisson structure degenerates on the submanifold of $\mathcal{A}^{\rm 2DT}$ given by the image of $\mathcal{A}^{\rm RR}$,  as shown by the following Lemma, hence it yields, simply by restriction, a well-defined Poisson structure on such submanifold.
\begin{lem}
For $L_1 = AB^{-1}$, $L_2 = B A^{-1}$, we have that
\begin{equation}
P(\bra\cdot,(X_1,X_2)\ket) = ( (\dot{A} - A B^{-1} \dot{B}) B^{-1}, (\dot{B} - B A^{-1} \dot{A} ) A^{-1} ) \nn
\end{equation}
where $(\dot{A}, \dot{B}) \in T \mathcal{A}^{\rm RR}$ is given by 
\begin{align}
\dot{A} = &\left( ( X_2 B A^{-1} - A B^{-1} X_1 )_- + ((\Lambda^{-a} -1)^{-1} \zeta) \right) A - \nn\\
&-A \left( (A^{-1} X_2 B - B^{-1} X_1 A)_- +(1-\Lambda^a)^{-1} \zeta \right)  , \nn \\
\dot{B} = &\left( (B A^{-1} X_2 - X_1 A B^{-1} )_- + ((1-\Lambda^a)^{-1} \zeta ) \right) B   \nn \\
& - B \left( (A^{-1}X_2 B - B^{-1} X_1 A )_- + ((1-\Lambda^a)^{-1} \zeta)\right) , \nn
\end{align}
and
\begin{equation}
\zeta = \Res ( [L_1,X_1] +[L_2, X_2]) . \nn
\end{equation}
\end{lem}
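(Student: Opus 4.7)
The plan is a direct verification. First, I would observe that differentiating $L_1 = AB^{-1}$ and $L_2 = BA^{-1}$ along any $(\dot A,\dot B) \in T\mathcal{A}^{\rm RR}$ gives precisely the outer expressions $(\dot A - AB^{-1}\dot B)B^{-1}$ and $(\dot B - BA^{-1}\dot A)A^{-1}$ appearing in the conclusion. Hence the lemma reduces to two tasks: showing that the value of the Poisson tensor $P(\langle\cdot,(X_1,X_2)\rangle)$ at a factorized pair $(L_1,L_2) = (AB^{-1},BA^{-1})$ lies in the image of the differential of the factorization map $\mathcal{A}^{\rm RR} \to \mathcal{A}^{\rm 2DT}$, and identifying a preimage that matches the stated $\dot A$ and $\dot B$.

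Next I would substitute the rational Lax operators into the formula for $P$ and systematically exploit the conjugation identities $L_1 A = A\widehat{L}_1$ and $L_2 B = B\widehat{L}_2$ with $\widehat{L}_1 = B^{-1}A$, $\widehat{L}_2 = A^{-1}B$, which are immediate from the factorization. These recast the commutator sum $[L_1,X_1] + [L_2,X_2]$ into two equivalent ``conjugate'' forms, out of which the basic building blocks $X_2 BA^{-1} - AB^{-1}X_1$ and $A^{-1}X_2 B - B^{-1}X_1 A$ that feature in the stated $\dot A,\dot B$ emerge naturally after projection onto $\cA_-$. The local pieces $(L_iX_i + X_iL_i)_{\pm}$ in $P$ together with the symmetric products $L_i(\cdots)_{\leq 0} + (\cdots)_{\leq 0} L_i$ then, after pulling an outer factor of $A$ or $B$ through the conjugation, reorganize into inner commutators of the form $(\cdots)A - A(\cdots)$ and $(\cdots)B - B(\cdots)$, reproducing exactly the structure of the candidate $\dot A$ and $\dot B$.

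The non-local ``Hilbert kernel'' term $(\Lambda^a+1)(\Lambda^a-1)^{-1}\zeta$ with $\zeta = \Res([L_1,X_1] + [L_2,X_2])$ is handled via a partial-fraction decomposition that writes $(\Lambda^a+1)(\Lambda^a-1)^{-1}$ as a formal combination of $(1-\Lambda^a)^{-1}$ and $(\Lambda^{-a}-1)^{-1}$, each interpreted as a formal series expanded in opposite directions. These furnish exactly the two non-local multipliers of $\zeta$ entering the expression for $\dot A$, whereas in $\dot B$ they collapse onto a single $(1-\Lambda^a)^{-1}\zeta$ summand after a boundary cancellation against the local Laurent part of $[L_1,X_1] + [L_2,X_2]$.

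The main obstacle I anticipate is bookkeeping: one must verify a posteriori that, after all these rearrangements, the resulting $\dot A$ really lies in $\cA_+$ with $\deg_{\Lambda}\dot A \leq a-1$ and that $\dot B$ lies in $\cA_-$ with $\deg_{\Lambda^{-1}}\dot B \leq b$, so that $(\dot A,\dot B) \in T\mathcal{A}^{\rm RR}$. This polynomial truncation is not manifest from the raw formulas, since the individual summands involve unbounded difference operators; it is, however, forced by the requirement that $\dot L_1$ and $\dot L_2$ have the shape dictated by $T\mathcal{A}^{\rm 2DT}$, together with the injectivity of the differential of the factorization map on $T\mathcal{A}^{\rm RR}$. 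Its presence is precisely the key phenomenon captured by the lemma, namely the degeneration of the second Poisson tensor along the rational locus.
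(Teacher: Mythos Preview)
The paper does not actually supply a proof of this lemma: it is stated and then immediately used, with no \texttt{proof} environment. So there is no argument in the paper to compare against; your direct-verification strategy is the natural one and presumably what the authors carried out off the page. The first three paragraphs of your outline are sound and follow the same algebraic mechanisms exploited in the paper's proof that the flows \cref{eq:Aeq,eq:Beq} are well-defined (the conjugation identities $L_iA=A\widehat L_i$, $L_iB=B\widehat L_i$ and the two equivalent $(\cdot)_\pm$ forms).

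There is, however, a genuine gap in your last paragraph. You claim the truncation $(\dot A,\dot B)\in T\mathcal{A}^{\rm RR}$ is ``forced by the requirement that $\dot L_1$ and $\dot L_2$ have the shape dictated by $T\mathcal{A}^{\rm 2DT}$, together with the injectivity of the differential of the factorization map.'' This inference is not valid: once $\dot A,\dot B$ are allowed to be arbitrary formal difference operators, the representation $(\dot L_1,\dot L_2)=\big((\dot A-L_1\dot B)B^{-1},(\dot B-L_2\dot A)A^{-1}\big)$ is highly non-unique. Indeed, on the factorization locus $L_1L_2=AB^{-1}BA^{-1}=1$, so replacing $(\dot A,\dot B)$ by $(\dot A+\delta,\dot B+L_2\delta)$ for any $\delta$ leaves $(\dot L_1,\dot L_2)$ unchanged. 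Hence knowing that the output lies in $T\mathcal{A}^{\rm 2DT}$ does not constrain the degree range of the particular $\dot A,\dot B$ your manipulation produced. You must verify the degree bounds directly on the stated expressions. The clean way to do this mirrors the argument for \cref{eq:Aeq,eq:Aeqm}: use the conjugation identities to rewrite the same $\dot A$ with $(\cdot)_+$ projections in place of $(\cdot)_-$, so that one form bounds $\mathrm{ord}_+\dot A\le a-1$ and the other bounds $\mathrm{ord}_-\dot A\le 0$; the scalar $\zeta$-terms only contribute at order zero. The analogous double expression handles $\dot B$.
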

In other words the vector given by the image by the Poisson tensor of the differential form $\bra\cdot,(X_1,X_2)\ket$ is equal to the push-forward of a  vector in $T\mathcal{A}^{\rm RR}$, i.e., it is tangent to $\mathcal{A}^{\rm RR}$.

For any functional $f=f(L_1,L_2)$ on $\mathcal{A}^{\rm 2DT}$ we denote by
$(\frac{\delta f}{\delta L_1 },\frac{\delta f}{\delta L_2} )$ a pair of
operators such that we can express the derivative of $f$ along $(\dot
L_1,\dot L_2)$ as
\eq{
\dot f = \bra \l(\frac{\delta f}{\delta L_1 },\frac{\delta f}{\delta L_2}\r),
\l(\dot L_1,  \dot L_2 \r) \ket .
}
In other words the vector $(\frac{\delta f}{\delta L_1 },\frac{\delta f}{\delta L_2} )$ is a preimage of the differential $\rd f$ with respect to the bilinear pairing above. The Poisson bracket of two functionals $f$, $g$ on $\mathcal{A}^{2DT}$ is 
\begin{equation} \label{eq:gtoda}
\{ f, g \} = \bra \l(\frac{\delta f}{\delta L_1 },\frac{\delta f}{\delta L_2}\r),
P\l(\bra\cdot, \l(\frac{\delta g}{\delta L_1 },\frac{\delta g}{\delta L_2}\r)   \ket\r)
 \ket .
\end{equation}
From the Lemma and skew-symmetry, it follows that $\{f,g\}$, when restricted on $\mathcal{A}^{RR}$, does not depend on the choice of functional $f$ (resp. $g$) on $\mathcal{A}^{2DT}$ as long as it restricts to the same $f_{|\mathcal{A^{RR}}}$ (resp. $g_{|\mathcal{A^{RR}}}$). In other words $\mathcal{A}^{RR}$ is a Poisson submanifold of $\mathcal{A}^{2DT}$.

The explicit form of RR2T Poisson brackets for the coefficients $\alpha_0,\ldots,\a_{a-1},\b_1,\ldots,\b_b$ of $A$ and $B$ can be computed starting from the 2D-Toda (second) Poisson bracket for the first $a$ and $b$ coefficients $u^{(1)}_0,\ldots u^{(1)}_{a-1},u^{(2)}_{-1},$ $\ldots,u^{(2)}_{b}$ of $L_1$ and $L_2$ respectively\footnote{See \cite{GC} for explicit formulas.} and applying the change of coordinates induced by the equations $L_1^a=AB^{-1}$, $L_2^b=BA^{-1}$. 

In case $(a,b)=(1,1)$, where $A=\Lambda+\alpha$ and $B=1+\beta \Lambda^{-1}$, one readily computes
\begin{equation}
\begin{split}
\{\a(n),\a(m)\} &= 0\\
\{ \log \a(n) , \log \b(m) \} &= \delta (n-m+1) - \delta(n-m)\\
\{ \log \b(n) , \log \b(m) \} &= \delta (n-m+1) - \delta(n-m-1)
\end{split}
\label{eq:pbavm}
\end{equation}
which coincides with the Poisson structure introduced by Adler--van~Moerbeke
\cite{MR1794352} for the Ablowitz--Ladik hierarchy. 

%

Since the
2D-Toda flows are Hamiltonian w.r.t \cref{poisson2dtoda}, with Hamiltonian
functions given by 
\begin{equation}
H_i^{(j)} = -\frac{1}{i}\tr L_j^i, \qquad j=1,2 ,
\label{eq:todaham}
\end{equation}
the Ablowitz--Ladik flows are Hamiltonian with respect to \cref{eq:pbavm},
with the same Hamiltonian functions.

\subsection{Long-wave limit and semi-classical Lax formalism}
\label{sec:dless}

Starting from the 2D-Toda lattice hierarchy of \cref{sec:2dtoda},
a continuous integrable system of $2+1$ evolutionary PDEs can be constructed by 
interpolation. For a fixed real parameter
$\epsilon>0$ -- the ``lattice spacing'' --  introduce dependent variables
$U_j^{(i)}(x)$ such that $U_j^{(i)}(\epsilon n) = (u_j^{(i)})_n$, and
accordingly define a shift operator $\Lambda_\epsilon=\re^{\epsilon \de_x}$ by one unit of
lattice spacing. 
Replacing the unit shift $\Lambda_1$ by the
$\epsilon$-shift $\re^{\epsilon \de_x}$ and rescaling the
time variables by $t_r^{(i)}\triangleq \epsilon s_r^{(i)}$ gives a system of
evolutionary partial differential equations in the time variables $t_r^{(i)}$
in the form
\ea{
\de_{t^{(p)}_r} U_j^{(i)}(x) = & \sum_{g\geq 0} \epsilon^{2g}\cP^{[g],p,r}_{i,j}(U,U_x, \dots, U^{(2g)})\nn \\
& = \sum_{k,l}\cP^{[0],p,r}_{k,l,i,j}(U) \de_x U_k^{(l)} +  
\cO\l(\epsilon^2\r)
\label{eq:dispexp}
}
where $\cP^{[g],p,r}_{i,j}(U,U_x,
\dots, U^{(2g)})$ is an element of the vector space $\cI_g$ of differential polynomials in $U(x)$ homogeneous
of degree $2g+1$ with respect to the independent variable $x$. Following \cite{MR2108440}, we will call this the {\it interpolated 2D-Toda
  lattice}. \\

We will be particularly interested in the quasi-linear limit of the
interpolated 2D-Toda lattice, where the
dispersion parameter $\epsilon$ is set to zero. As noticed in
\cite{Takasaki:1991zs}, 
the dispersionless limit $\epsilon\to 0$ of
\cref{eq:dispexp} can be formulated  as the quasi-classical (Ehrenfest) limit of
the Lax equations \cref{eq:2dtlax}, as follows. Write $\lambda_i(z)\triangleq
\sigma_\Lambda(L_i) \in \bbC((z))$ for the total symbol in the variable $z\in
\bbC$ of the difference operators $L_i$ in \cref{eq:2dtlaxop},
\eq{
\lambda_1(z) = z+ \sum_{j\geq0} U_j^{(1)}  z^{-j}, \quad \lambda_2(z) = \sum_{j\geq -1} U_{j}^{(2)}  z^{j}.
}
Furthermore, define the {\it Orlov functions}
\eq{
\cB^{(1)}_n(z) \triangleq [(\lambda_1)^n]_+, \quad  \cB^{(2)}_n(z) \triangleq [(\lambda_2)^{n}]_-,
}
where $[f]_\pm$ denotes the projection to the analytic/purely principal part
of $f\in\bbC((z))$, and for $f,g \in \bbC((x,z))$ define the Poisson bracket
\eq{
\{f,g\}_{\rm Lax} = z\l(\frac{\de f}{\de x}\frac{\de g}{\de z}-\frac{\de g}{\de
  x}\frac{\de f}{\de z}\r). 
}
Then the {\it semiclassical Lax equations}
\eq{
\label{eq:laxsato}
\frac{\de \lambda_i}{\de t_r^{(j)}} \triangleq \{\cB^{(j)}_r, \lambda_i\}_{\rm
Lax},
}
where the time-derivatives are understood to be taken at fixed $z$, induce the
dispersionless limit of the interpolated 2D-Toda flows of \cref{eq:dispexp} on the coefficients $U_k^{(l)}$ of $\lambda_l$,
\eq{
\de_{t^{(p)}_r} U_j^{(i)}(x) =  \sum_{k,l} \cP^{[0],p,r}_{k,l,i,j}(U)
\de_x U_k^{(l)}.
\label{eq:dToda}
}
Consistency of the dispersionless Lax equations \cref{eq:laxsato} requires the
existence of a potential function $\cF$ of the long-wave time variables
$t_r^{(j)}$ such that
\eq{
  \cB_n^{(i)}(z(\lambda_j))  =\delta_{ij}\lambda_j^{s_j n} 
+\delta_{j2}\frac{\de^2 \cF}{\de t^{(1)}_0 \de t^{(i)}_n}
-\sum_{m>0}\frac{\de^2 \cF}{\de t^{(i)}_n
      \de t^{(j)}_m}\frac{1}{m \lambda_j^{s_i m}}
}
where $s_i=(-1)^{i+1}$. By the general dToda theory \cite{Takasaki:1991zs}, the potential $\cF$ yields the eikonal limit of the
logarithm of the long-wave limit of the 2D-Toda $\tau$-function,
\eq{
\cF = \log \tau_{\rm dToda}.
}

\subsection{Rational reductions and Frobenius manifolds}


The integration of the consistency conditions for $\cF$ has a natural
formulation in the language of Frobenius manifolds \cite{MR2753798}. An even
more poignant picture emerges in the case of RR2T: by
\cite{Dubrovin:1994hc,2012arXiv1210.2312R} the dispersionless
limit (henceforth denoted as {\it dRR2T}) coincides with the Principal Hierarchy of the Frobenius manifold defined
on a genus zero double Hurwitz space, as we now turn to show.

\subsubsection{Flat structures and the Principal Hierarchy}

We start by giving the following

\begin{defin} Let $M$  be a complex manifold, $\mathrm{dim}_\bbC M=n$. A
  holomorphic \emph{Frobenius structure} $\cM=(M,\eta, \cdot)$ on $M$ is the datum of a
  holomorphic symmetric $(0, 2)$-tensor $\eta$, which is non-degenerate and
  with flat Levi-Civita connection $\nabla$, and a commutative, associative fiberwise product
  law with unit $X \cdot Y$ on vector fields $X,Y \in
  \cX(M)$, which is tensorial and satisfies \\

\begin{description}
\item[Compatibility]
\eq{
\eta(X\cdot Y, Z)=\eta(X, Y\cdot Z)\qquad \text{for all vector fields}\;\; X,
Y, Z;
}
\item[Flatness]
the pencil of affine connections
\eq{
\label{eq:defEC}
\nabla^{(\zeta)}_X Y\triangleq \nabla_X Y+\zeta X\cdot Y\qquad \zeta\in\mathbb{C}
}
is identically flat $\forall$~$\zeta\in \bbC$.
\end{description}

\end{defin}

Following terminology introduced in \cite{2012arXiv1210.2312R}, extra
flat structures on $\cM$ will be characterized according to the following

\begin{defin}
Let $\cM=(M, \eta, \cdot)$ be a holomorphic Frobenius manifold structure on
$M$, and let $e\in\cX(M)$ be the unit of the $\cdot$-product. We will say that $\cM$ is
\ben
\item \emph{semi-simple} if the product structure
  $\cdot|_{p}$ on $T_pM$ has no nilpotent elements for a generic $p\in M$;
\item of \emph{dual-type} if $\exists~d \in \bbZ$ such that $\forall~f\in \cO_M$,
\eq{
\label{eq:dualtype}
\nabla \rd f=0 \;\;\Rightarrow\;\;\l(\partial_e+\frac{d-1}{2}\r) f= c_f
}
for some constant $c_f\in \bbC$.
\item
\emph{conformal} if $\nabla e = 0$ 
and $\exists~E\in\cX(M)$ such that
$\nabla E \in \Gamma(\mathrm{End}(TM))$ is diagonalizable and horizontal
w.r.t. $\nabla$ and the pencil of affine connections \cref{eq:defEC} extends
to a flat meromorphic connection $\nabla^{(\zeta)}$ on $M\times \bbP_\zeta^1$ via
\ea{
\label{eq:defECint}
\nabla^{(\zeta)}\frac{\partial}{\partial \zeta}=& 0 \\
\nabla^{(\zeta)}_{\partial/\partial \zeta} X= & \frac{\partial}{\partial \zeta}X+E\cdot
X-\frac{1}{\zeta}\widehat{\mu}X
\label{eq:defECz}
}
where $\widehat{\mu}$ is the traceless part of $-\nabla E$;
\item
\emph{tri-hamiltonian} if it is conformal, $n$ is even and $\widehat{\mu}$ has
only two eigenvalues $\pm d/2$ with multiplicity $n/2$, where
$d=2(1-\tr(\nabla E))$.
\een
\end{defin}

A Frobenius manifold structure $\cM$ on $M$ embodies the existence of a
Hamiltonian hierarchy of quasi-linear commuting flows on its loop space \cite{Dubrovin:1994hc}.  
Let $\mathsf{t}=\{\tau_{(\zeta)}^\alpha \in \cO_M\}_{\alpha=1}^n$ be the datum of a marked
system of flat coordinates for $\nabla^{(\zeta)}$ depending holomorphically on $\zeta$
around $\zeta=0$: this is determined up to a $\bbC[[\zeta]]$-valued
affine transformation in general, a freedom which reduces to a complex affine
transformation when $\cM$ is conformal by virtue of \cref{eq:defECz}. Write 
$h_{\alpha,p}\triangleq \eta_{\a\b} ([\zeta^p] \tau_{(\zeta)}^\alpha) \in \cO_M$ for the $p^{\rm
  th}$-Taylor coefficient of $\eta_{\a,\b}\tau_{(\zeta)}^\alpha$ at $\zeta=0$. In terms
of the flat metric $\eta$, we define \cite{Dubrovin:1994hc} a hydrodynamic Poisson structure $\{,\}_\eta$ on
the loop space $\LL_\cM=\mathrm{Maps}(S^1,M)$ as
\begin{equation}\label{eq:dlessbracket}
\l\{\tau_{(0)}^\alpha(X), \tau_{(0)}^\beta(Y)\r\}_\eta = \eta^{\a\b}\delta'(X-Y),
\end{equation}
where $X,Y\in S^1$ are coordinates on the base of the loop space, as well as an infinite set of quasi-linear Hamiltonian flows via
\eq{
\frac{\de \tau^\b}{\de t^{\a,p} }\triangleq \l\{\tau^\b, H_{\a,p}\r\}_\eta=
\de_X \de^\b h_{\a,p}.
\label{eq:ph}
}
These flows generate a commuting family of Hamiltonian conservation laws
\cite{Dubrovin:1994hc}, which is complete as long as $\cM$ is semi-simple
\cite{MR796577}.

\begin{defin}
The hierarchy of hydrodynamic type \cref{eq:ph} will be called the Principal
Hierarchy associated to  $(\cM, \mathsf{t})$.
\label{def:ph}
\end{defin}

\subsubsection{Frobenius dual-type structures for the RR2T}

Let $a,b \in \bbZ^2_+$ and $m \in \bbZ$. In this section we will construct a Frobenius dual-type
structure \cite{2012arXiv1210.2312R} on the space of symbols of the Lax
operator $L_1^{a+m}=L_2^{-b-m}$ of the generalized RR2T of \cref{rmk:equivlax}.

\begin{defin}
\label{defn:hurwitz}
Let $v, q_{-a+1},\ldots, q_{b-1} \in \bbC$, $a,b \in \bbZ^+$  and
$\nu\in\mathbb{C}^*$. We define $\mathcal{H}_{a,b,\nu}$ to be the space of multivalued functions on $\mathbb{P}^1$ of the form
\eq{
\lambda(z)= \re^vz^{\nu+b}\frac{\prod_{k=0}^{a-1}(z-\re^{q_{-k}})}{\prod_{l=0}^{b-1}(z-\re^{-q_l})}.
\label{eq:lambda}
}
\end{defin}

\begin{rmk}
Writing
\eq{
z_k =
\l\{
\bary{ccc}
0 & \mathrm{for} & k=1 \\
\re^{q_{2-k}} & \mathrm{for} & k=2, \dots, a+1 \\
\re^{-q_{k+2-a}} & \mathrm{for} & k=a+2, \dots, a+b+1 \\
\infty & \mathrm{for} & k=a+b+2 \\
\eary
\r.
}
the meromorphic function
$z^{-\nu}\lambda(z)$ has, for generic values of the parameters, a zero of order $b$ at $z_1$, simple zeroes at
$z_{k+2}$, $k=0,\ldots,a-1$, a pole of order $a$ at
$z_{a+b+2}\triangleq \infty$, and simple poles at $z_{a+2+k}\triangleq e^{-q_k},
k=0,\ldots,b-1$. When $\nu=m\in \bbZ_0$, this function is the total symbol of
the $(a+m)^{\rm th}$ power of the Lax operator $L_1$ (or equivalently, the
$(b+m)^{\rm th}$ inverse power of $L_2$) of the $m$-generalized RR2T
 of bidegree $(a,b)$, up to a trivial rescaling of the argument $z$. The space $\HH_{a,b,m}$ is then a genus zero double
Hurwitz space: a moduli space of rational curves with a marked
meromorphic function $\lambda:\bbP^1\to \bbP^1$ having specified ramification
profile $\kappa\in \bbZ^{a+b+2}$ at zero and
infinity. In our case, the latter reads
\eq{
\kappa=(b+m, \underbrace{1, \dots, 1}_a, \underbrace{-1, \dots, -1}_b, -a-m).
}
\\

\end{rmk}

For $\nu=m\in\bbZ$, we define on the $(a+b)$-dimensional complex manifold $\mathcal{H}_{a,b,\nu}$ a
triplet $(\eta^{(1)}, \eta^{(2)},\eta^{(3)})$, where $\eta^{(i)}\in \Gamma(\mathrm{Sym}^2 T^*\HH_{a,b,m})$, $\det
\eta^{(i)}\neq 0$,  by the Landau--Ginzburg formulas
\ea{
\label{eq:predualmetric}
\eta^{(1)} (X, Y)= &
\sum_{i=1}^{a+b+2}\Res_{z_i}\frac{X(\lambda)Y(\lambda)}{\rd\lambda}\l(\frac{\rd z}{z}\r)^2, \\
\label{eq:dualmetric}
\eta^{(2)}(X, Y)= &
\sum_{i=1}^{a+b+2}\Res_{z_i}\frac{X(\log\lambda)Y(\log\lambda)}{\rd\log\lambda}\l(\frac{\rd
  z}{z}\r)^2,
\\
\label{eq:trimetric}
\eta^{(3)}(X, Y)= &
\sum_{i=1}^{a+b+2}\Res_{z_i}\frac{X(\lambda^{-1})Y(\lambda^{-1})}{\rd\lambda^{-1}}\l(\frac{\rd
  z}{z}\r)^2, \\
} 
for $X,Y \in \cX(\HH_{a,b,\nu})$. We further equip $T_\lambda\mathcal{H}_{a,b,m}$
with a triplet $(\bullet, \star, \ast)$ of commutative, associative products defined by
\ea{
\label{eq:predualprod}
\eta^{(1)}(X\bullet Y,
Z)= &
\sum_{i=1}^{a+b+2}\Res_{z_i}\frac{X(\lambda)Y(\lambda)Z(\lambda)}{\rd\lambda}\l(\frac{\rd
  z}{z}\r)^2, \\
\label{eq:dualprod}
\eta^{(2)}(X\star Y,
Z)= &
\sum_{i=1}^{a+b+2}\Res_{z_i}\frac{X(\log\lambda)Y(\log\lambda)Z(\log\lambda)}{\rd\log\lambda}\l(\frac{\rd
  z}{z}\r)^2,\\
\label{eq:triprod}
\eta^{(3)}(X\ast Y,
Z)= &
\sum_{i=1}^{a+b+2}\Res_{z_i}\frac{X(\lambda^{-1})Y(\lambda^{-1})Z(\lambda^{-1})}{\rd\lambda^{-1}}\l(\frac{\rd
  z}{z}\r)^2,
}
depending holomorphically on the base-point $\lambda\in\HH_{a,b,m}$. When
$\nu\notin \bbZ$, \cref{eq:predualmetric,eq:predualprod,eq:trimetric,eq:triprod}
are ill-defined, but the definition \cref{eq:dualmetric,eq:dualprod} of the
metric and product $(\eta^{(2)},\star)$ carries through unscathed. The main result of this section is the
following
\begin{thm}
\label{thm:frob}
Let $a,b\in \bbZ^+$, $\nu \in \bbC$. Then the following statements hold:
\ben[i)]
\item \cref{eq:dualmetric,eq:dualprod} define on $\HH_{a,b,\nu}$ a semi-simple Frobenius
 structure of dual-type $\cM^{(2)}_{a,b,\nu}=(\HH_{a,b,\nu}, \eta^{(2)}, \star)$ of charge one.
\item Let $\nu=m \in \bbZ$ and suppose that both $b+m, -a-m$ are either
  equal to one or negative. Then \cref{eq:predualmetric,eq:predualprod} define
  a conformal Frobenius structure $\cM^{(1)}_{a,b,m}=(\HH_{a,b,m}, \eta^{(1)}, \bullet)$ of charge one on $\HH_{a,b,m}$. The unit 
of this structure is flat iff $m\neq 1-b$ and $m\neq -a-1$, and we have that
\eq{
\cM^{(2)}_{a,b,m} = \DD(\cM^{(1)}_{a,b,m})
\label{eq:duality}
}
where $\DD$ is Dubrovin's duality morphism of Frobenius structures \cite{MR2070050}. 
\item Let $b=a$ and $\nu=1-a$. Then
  \cref{eq:predualmetric,eq:predualprod,eq:dualmetric,eq:dualprod,eq:trimetric,eq:triprod}
  define a tri-hamiltonian Frobenius structure on $\HH_{a,a,1-a}$.
\een
\end{thm}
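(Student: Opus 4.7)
The strategy is to recognize the three residue-based assignments as Landau--Ginzburg data on the double Hurwitz space $\HH_{a,b,\nu}$ with primary differential $\omega = \rd z/z$, and to adapt the classical construction of semi-simple Frobenius manifolds on Hurwitz spaces \cite{MR2070050}, together with its dual-type extension \cite{2012arXiv1210.2312R}, to the present multiplicative setting. Parts (ii) and (iii) will then follow from (i) by identifying extra structure on the dual-type Frobenius manifold produced there.

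For (i), the canonical coordinates for $\star$ are the critical values $u_i = \log\lambda(z_i^{\mathrm{crit}})$ of $\log\lambda$: in them the residue formula at a simple critical point gives $\partial_{u_i}\star\partial_{u_j} = \delta_{ij}\partial_{u_i}$ and $\eta^{(2)}$ diagonal, and semi-simplicity, compatibility $\eta^{(2)}(X\star Y,Z)=\eta^{(2)}(X,Y\star Z)$, and associativity are immediate on the generic stratum. I would then construct flat coordinates for $\eta^{(2)}$ as periods of $\rd\log\lambda$ against $\omega$ over a suitable basis of cycles --- the multiplicative analog of Dubrovin's flat coordinates --- and deduce flatness of the pencil $\nabla + \zeta\star$ from the standard contour-deformation argument, which depends only on the local form of $\lambda$ near its critical points and on Cauchy's residue theorem for $\omega$, and so carries over verbatim for every $\nu\in\bbC$. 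The dual-type condition \eqref{eq:dualtype} follows by identifying the unit of $\star$ as $e^{(2)} = \partial_v$, which acts on $\lambda$ by $\lambda \mapsto e^\varepsilon\lambda$ and hence shifts $\log\lambda$ by a constant; each period of $\rd\log\lambda$ then pairs with $e^{(2)}$ to give a constant, which is exactly \eqref{eq:dualtype} with $d=1$.

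For (ii), the hypotheses on $b+m$ and $-a-m$ are precisely what is needed for the polar divisor of $\rd\lambda$ to agree with the ramification divisor of $\lambda$, so that $\eta^{(1)}$ and $\bullet$ are non-degenerate and the Dubrovin construction produces a semi-simple conformal Frobenius structure on $\HH_{a,b,m}$ with Euler field $E=\partial_v$. A short computation of the residues of $\rd\log\lambda$ at $0$ and $\infty$ shows that flatness of the unit fails exactly on the ``simple zero/pole'' end of the hypothesis, namely at $m=1-b$ and $m=-a-1$. The duality \eqref{eq:duality} then reduces to checking from the residue formulas the defining identities of Dubrovin's almost-duality, $X\star Y = E^{-1}\bullet X\bullet Y$ and $\eta^{(2)}(X,Y) = \eta^{(1)}(E^{-1}\bullet X,Y)$. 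For (iii), the involution $\iota:\lambda\mapsto \lambda^{-1}$ on $\HH_{a,a,1-a}$ exchanges $(\eta^{(1)},\bullet)$ with $(\eta^{(3)},\ast)$, so $(\eta^{(3)},\ast)$ is a conformal Frobenius structure by (ii); the tri-Hamiltonian property then amounts to showing that $\eta^{(1)}+\zeta\eta^{(3)}$ is a flat pencil with common intersection form $\eta^{(2)}$ and that $\widehat\mu$ has eigenvalues $\pm 1/2$ of equal multiplicity $a$, both of which follow from the residue formulas and the explicit action of $E$ on $\lambda$.

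The main technical point I foresee is flatness of $\eta^{(2)}$ in (i) for non-integer $\nu$: although $z^\nu\lambda(z)$ is then genuinely multivalued, only $\rd\log\lambda$ and $\omega$ enter the residue formulas, and both are single-valued, so the period construction of flat coordinates and the contour-deformation argument descend to the base of a suitable cyclic cover and no branch-cut issue genuinely obstructs the argument.
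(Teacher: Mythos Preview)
Your proposal follows the same overall strategy as the paper --- recognize the residue formulas as Landau--Ginzburg data on a (double) Hurwitz space and invoke the Dubrovin/Romano machinery --- but your execution of Point~(i) is considerably more elaborate than what is actually needed, and your reading of the hypothesis in Point~(ii) is slightly off.

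For (i), you propose to build flat coordinates for $\eta^{(2)}$ as periods of $\rd\log\lambda$ against $\omega$ and then run a contour-deformation argument for flatness of the pencil. The paper instead observes that the \emph{defining} parameters $(v,q_{-a+1},\dots,q_{b-1})$ of $\HH_{a,b,\nu}$ are already a flat chart for $\eta^{(2)}$: this is a direct residue computation from \cref{eq:dualmetric}, and it works uniformly for all $\nu\in\bbC$ since only $\rd\log\lambda$ enters. Flatness of the deformed connection is then obtained not by contour deformation but by noting that, because the unit $e=\partial_v$ is flat, $\eta^{(2)}$ is Egoroff in canonical coordinates, and the Egoroff property forces $\nabla_X\eta^{(2)}(Y\star Z,K)$ to be totally symmetric. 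Your route would work, but the paper's is shorter and has the added payoff that the explicit flat frame $(v,q_i)$ is exactly what is used later in the mirror theorem. Your worry about multivaluedness for $\nu\notin\bbZ$ is, as you suspected, a non-issue; the paper's approach makes this manifest since no periods or cycles are ever needed.

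For (ii), the hypothesis that $b+m$ and $-a-m$ are each either $1$ or negative is not the condition that ``the polar divisor of $\rd\lambda$ agree with the ramification divisor of $\lambda$''; rather, it is precisely what is needed for the zeroes of $\lambda$ to be simple (after removing the locus of multiple zeroes), so that $\HH_{a,b,m}$ is a Hurwitz space in the classical sense and Dubrovin's Theorem~5.1 applies verbatim. The paper also notes that when $m=1-b$ or $m=-a-1$ that theorem still goes through except for covariant constancy of the unit, which is the source of the flatness-of-unit caveat. For (iii), your involution $\lambda\mapsto\lambda^{-1}$ idea is a plausible alternative, but the paper dispatches this case in one line by citing Theorem~2 of \cite{2012arXiv1210.2312R} once (ii) is in hand; if you pursue the involution route you would still need to verify the eigenvalue condition on $\widehat\mu$ separately.
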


\begin{proof}
\cref{thm:frob} is essentially a verbatim translation of Theorem~2 in
\cite{2012arXiv1210.2312R} to the setting of RR2T. We sketch the main points
of the proof below.
For Point (i), flatness of the residue pairing $\eta^{(2)}$ follows from checking, through a direct
computation of (\ref{eq:dualmetric}), that the coordinates $v, q_{-a+1}, \dots, q_{b-1}$ form in fact a flat coordinate frame for
$\eta^{(2)}$. Further, by the explicit form of \cref{eq:dualprod,eq:dualmetric}, the
 $\star$-product structure is clearly compatible with the metric $\eta^{(2)}$ in the sense that the two
form a Frobenius algebra on the tangent spaces of $\mathcal{H}_{a,b,\nu}$; it is
immediate to check that the algebra is unital, the identity consisting in the flat vector field
$e=\partial_{v}$. Moreover the $a+b$ critical values of $\log\lambda$,
\eq{
u_i\triangleq \log\lambda(y_i), \quad y_i \in \mathbb{P}^1 \hbox{ s.t. } \lambda'(y_i)=0, i=1,\ldots,a+b
}
are a set of local coordinates on $\mathcal{H}_{a,b,\nu}\setminus \Delta_{a,b,\nu}$,
where the discriminant 
$\Delta_{a,b,\nu}\triangleq \{\lambda \in \mathcal{H}_{a,b,\nu} | u_i \neq u_j \forall
i\neq j\}$. In these coordinates, the product and the metric take the form
\ea{
\partial_{u_i}\star \partial_{u_j} & = \delta_{ij} \partial_{u_i}, \nn \\
\eta^{(2)}(\partial_{u_i},\partial_{u_j}) & =\eta^{(2)}_{ii}(u) \delta_{ij}
}
for functions $\eta^{(2)}_{ii}(u) \in \cO(\HH_{a,b,\nu}\setminus\Delta_{a,b,\nu})$, possibly
singular on $\Delta_{a,b,\nu}$. Moreover, thanks to the flatness of $\eta^{(2)}$ and its compatibility with the product, we can write
$$\eta^{(2)}_{ii}(u) = \eta^{(2)}(\partial_{u_i},\partial_{u_i}) =  \eta^{(2)}(e,\partial_{u_i})$$
and, by the flatness of $e$ we get $\eta^{(2)}_{ii}=\partial_{u_i}t_1(u)$, where $\rd t_1(u)=\eta^{(2)}(e,\cdot)$. This means that $\eta^{(2)}$ is an Egoroff metric which implies (see for instance \cite{MR1461570}) that $\nabla_X \eta^{(2)}(Y\star Z,K)$ is symmetric in all four vector fields $X,Y,Z,K$.\\
The above proves that
\cref{eq:dualmetric,eq:dualprod} endow $\mathcal{H}_{a,b,\nu}$ with a semi-simple Frobenius
dual-type structure, which has charge one by the flatness of the unit vector
field.\\
As for Point (ii), notice that when $\nu =m\in \bbZ$, $\lambda$ is single-valued and $\HH_{a,b,\nu}$ is a genus zero double
Hurwitz space. Under the further condition that the zeroes of $\lambda$ be
simple, $\HH_{a,b,m}$ becomes a Hurwitz space in a standard sense, with the only
proviso that the divisor where $\lambda$ has multiple zeroes is removed. Then
under the conditions of Point (ii) the existence of a conformal Frobenius
manifold structure is a direct corollary of
\cite[Theorem~5.1]{Dubrovin:1994hc} for $m\neq1-a,1-b$; when $m=1-a$ or
$1-b$, the proof of the above theorem goes through almost unscathed except for
the covariant constancy of the unit vector field, which fails to be satisfied
in these cases. Furthermore,
\cref{eq:duality} follows from a standard argument (see
\cite[Proposition~5.1]{MR2462355}), which together with Point (i) above proves semi-simplicity and the charge one
condition. Finally, Point (iii) is an immediate consequence of Point (ii)
together with \cite[Theorem~2]{2012arXiv1210.2312R}.

\end{proof}

Under the conditions of Point (ii), the statement of \cref{thm:frob} implies
that the metrics $\eta^{(1)}$ and $\eta^{(2)}$ form a
flat pencil, which is exact if and only if $m\neq 1-a, 1-b$: $\eta^{(2)}$ is
the (inverse) of the intersection form on $\cM^{(1)}_{a,b,m}$. 
Moreover, when $\lambda$ has only simple zeroes and poles this is enhanced to a a
triple of compatible flat metrics $\eta^{(1)}, \eta^{(2)}, \eta^{(3)}$. And
finally, if the unit of the first structure is flat, the resulting Frobenius
structure is tri-hamiltonian.\\
By comparing the formulas for the flat coordinates for $\eta^{(2)}$ and $\eta^{(1)}$ one
easily sees when the pencil $(\eta^{(2)})^{-1}-\epsilon(\eta^{(1)})^{-1}$ is resonant, namely, when
$\eta^{(1)}$ and $\eta^{(2)}$ have common flat coordinates. This happens if and only if
$\lambda$ has more than one pole; there is one common flat coordinate for each
pole after the first. \\

As an immediate consequence of \cref{thm:frob}, the semi-classical limit of
the RR2T, \cref{eq:dToda,eq:laxsato}, has a neat description in terms of the
Principal Hierarchy of $\cM^{(i)}_{a,b,\nu}$, $i=1,2$. 
\begin{cor}
\label{cor:toda}
The following statements hold true:
\ben
\item for any $(a,b)\in \bbZ_+^2$, $m \in \bbZ$ and $\mathsf{t}\in \mathrm{Aff}_{a+b}(\bbC[[z]])$,  the Principal Hierarchy
of $(\cM^{(2)}_{a,b,m}, \mathsf{t})$ is a complete system of commuting Hamiltonian
conservation laws of the $m$-generalized dRR2T of bidegree $(a,b)$;
\item Let $-a-m<0$, $b+m<0$ as in Point (ii) of \cref{thm:frob}, and fix $\mathsf{t}\in \mathrm{Aff}_{a+b}(\bbC)$
  such that 

\ea{
\label{eq:htoda1}
 h_{\a,p}=&-\Res_{z=\infty}\frac{\lambda^{\frac{\a}{m+a}+p}}{\l(\frac{\a}{m+a}\r)_{1+p}}\frac{\rd
    z}{z}, \quad \a=1,\dots, m+a,\\
\label{eq:htoda2}
h_{\a+m+a,p}=&-\Res_{z=0}\frac{\lambda^{\frac{\a}{-b-m}+p}}{\l(\frac{\a}{-b-m}\r)_{1+p}}\frac{\rd
    z}{z}, \quad \a=1,\dots,-m-b-1,
}

where $(x)_n\triangleq \frac{\Gamma(x+n)}{\Gamma(x)}$.  Then the Hamiltonian flows
  of the Principal Hierarchy (\cref{eq:ph}) of $\cM^{(1)}_{a,b,m}$ associated
   to $h_{\a,p}$, $\a=1,\dots,a-b-1$,  coincide with
  the semiclassical Lax flows, \cref{eq:laxsato}, for the $m$-generalized dRR2T of
 bidegree $(a,b)$ upon identifying 
\ea{
\label{eq:resc1}
t_{\a,p}\to & \frac{\l(\frac{\a}{m+a}\r)_{1+p}}{\a+p(m+a)}t^{(1)}_{\a+p(m+a)}, \quad
\a_1,\dots,m+a, \\
t_{\a+m+a,p} \to & \frac{\l(\frac{\a}{-b-m}\r)_{1+p}}{\a-p(m+b)}t^{(2)}_{\a-p(b+m)}, \quad \a=1,\dots,-m-b-1.
\label{eq:resc2}
}
\een
\end{cor}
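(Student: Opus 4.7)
The strategy is to leverage the Landau--Ginzburg presentation of $\cM^{(1)}_{a,b,m}$ and $\cM^{(2)}_{a,b,m}$ from \cref{thm:frob} and to identify the Hamiltonian densities of the Principal Hierarchy \cref{eq:ph} in flat coordinates with residues of fractional powers of the superpotential $\lambda$ at the punctures $z=0,\infty$. These residues match, up to explicit normalizations, the Orlov functions $\cB^{(i)}_n$ that generate the semiclassical Lax equations \cref{eq:laxsato}, giving the sought identification of flows. The overall logic mirrors Dubrovin's treatment of Hurwitz-space Frobenius manifolds in \cite{Dubrovin:1994hc,MR2462355}, adapted to the dual-type/conformal dichotomy of \cref{thm:frob}.

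For Point~(1), the semi-simplicity of $\cM^{(2)}_{a,b,\nu}$ established in \cref{thm:frob} together with the Tsarev--Ferapontov completeness criterion \cite{MR796577} immediately yields that \cref{eq:ph} is a complete family of commuting hydrodynamic Hamiltonian flows. Identifying these with the dispersionless $m$-generalized RR2T then reduces to computing the $\eta^{(2)}$-flat Hamiltonian densities as logarithmic residues of $\lambda$ at $z=0,\infty$, and matching them against $\cB^{(i)}_n$ by direct expansion in the uniformizers $\lambda_1=\lambda^{1/(a+m)}$ at $\infty$ and $\lambda_2=\lambda^{-1/(b+m)}$ at $0$. The universality in $\mathsf{t}\in\mathrm{Aff}_{a+b}(\bbC[[z]])$ follows from the observation that in the dual-type regime the flat frame is determined only up to such an affine transformation, under which the hierarchy of commuting flows is preserved modulo a linear reshuffle within each graded component.

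For Point~(2), the strict negativity $-a-m<0$, $b+m<0$ places us in the conformal regime of Point (ii) of \cref{thm:frob}, where the $\eta^{(1)}$-flat frame is determined up to $\mathrm{Aff}_{a+b}(\bbC)$. I would then invoke directly Dubrovin's Hurwitz-space construction \cite[Thm.~5.1]{Dubrovin:1994hc}: under our normalization, the two families in \cref{eq:htoda1,eq:htoda2} are precisely the Krichever flat-coordinate Hamiltonian densities associated with the two highest-order poles of $\lambda$, at $z=\infty$ and $z=0$ respectively. Expanding $\lambda^{\alpha/(m+a)+p}$ around $\infty$ in powers of $\lambda_1$ produces, up to an overall Pochhammer prefactor, the Orlov density $\cB^{(1)}_{\alpha+p(a+m)}$, and similarly $\lambda^{\alpha/(-b-m)+p}$ around $0$ in powers of $\lambda_2$ produces $\cB^{(2)}_{\alpha-p(b+m)}$. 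Comparison of the two normalizations yields precisely the rescaling \cref{eq:resc1,eq:resc2} relating $t_{\alpha,p}$ to $t^{(1)}_{\alpha+p(m+a)}$ and $t_{\alpha+m+a,p}$ to $t^{(2)}_{\alpha-p(b+m)}$.

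The technical heart of the argument, and the main obstacle, is the bookkeeping needed to align the Principal Hierarchy normalizations with those of the Orlov--Sato formalism of \cref{sec:dless}: the Pochhammer factors $(\alpha/(m+a))_{1+p}$ and $(\alpha/(-b-m))_{1+p}$, and the associated time rescalings, must be extracted by carefully comparing the two generating functions, making use of the factorization $\lambda_1^{a+m}=\lambda=\lambda_2^{-b-m}$ to translate between the two local uniformizers. The degenerate boundary cases $m=1-b$ and $m=-a-1$ excluded from Point~(2) correspond exactly to the loss of flatness of the identity in Point (ii) of \cref{thm:frob}, where Dubrovin's residue formulas require amendment; in those cases only the dual-type statement of Point~(1), built on the always well-defined $\cM^{(2)}_{a,b,\nu}$, remains in force.
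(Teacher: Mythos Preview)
Your proposal for Point~(2) is correct and matches the paper's argument essentially verbatim: both invoke Dubrovin's Hurwitz-space construction (Proposition~6.3 and Theorem~6.5 of \cite{Dubrovin:1994hc}) to identify the residue formulas \cref{eq:htoda1,eq:htoda2} with the Principal-Hierarchy Hamiltonians attached to the two highest-order punctures of $\lambda$, and then read off the time rescalings \cref{eq:resc1,eq:resc2} by comparing normalizations with the Orlov--Sato generators.

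For Point~(1) your argument is in the right spirit but blurs the key mechanism. You propose to compute the $\eta^{(2)}$-flat Hamiltonian densities as ``logarithmic residues of $\lambda$'' and then match them directly against the Orlov functions $\cB^{(i)}_n$. The difficulty is that the $\cB^{(i)}_n$ are projections of \emph{powers} of $\lambda$, whereas the natural Landau--Ginzburg data for $\cM^{(2)}_{a,b,\nu}$ involve $\log\lambda$ (cf.\ \cref{eq:dualmetric,eq:dualprod}); a direct comparison therefore requires an extra step you do not supply. The paper closes this gap by invoking the almost-duality $\cM^{(2)}_{a,b,m}=\DD(\cM^{(1)}_{a,b,m})$ explicitly: Taylor-expanding the deformed flatness equations of $\cM^{(1)}_{a,b,m}$ in $\zeta$ and rewriting the $p^{\rm th}$ coefficient in $\eta^{(2)}$-flat coordinates yields precisely the deformed flatness equations of $\cM^{(2)}_{a,b,m}$ with the spectral parameter frozen at $\zeta=p$ \cite{MR2070050,Brini:2011ff}. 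This is what turns the $\cM^{(1)}$-Hamiltonians (residues of $\lambda^{\alpha/(m+a)+p}$, i.e.\ the Toda densities) into bona fide flat coordinates for the $\cM^{(2)}$ Dubrovin connection, and hence into elements of the $\cM^{(2)}$ Principal Hierarchy. Completeness then follows, as you say, from semi-simplicity via \cite{MR796577}. Your ``direct expansion in the uniformizers'' is morally the same computation, but without naming the duality it is not clear why the power-residues you produce are $\eta^{(2)}$-flat; making that step explicit would bring your argument in line with the paper's.
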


\begin{proof}
Point (2) of the Corollary is an immediate application of Proposition~6.3 and
Theorem~6.5 in \cite{Dubrovin:1994hc}. Notice in particular that 
Proposition~6.3 warrants the existence of a flat coordinate system
$\mathsf{t}$ for the deformed connection on $\HH_{a,b,m}\times \bbC$
(\cref{eq:defECint,eq:defECz}) which is compatible with
\cref{eq:htoda1,eq:htoda2}; the scaling factors in \cref{eq:resc1,eq:resc2}
are required for consistency with the definition of the semiclassical Lax flows\footnote{See
e.g. \cite[Section~1]{MR2452424}.}.
To see why Point~(1) holds, consider the
Taylor expansion in the variable $\zeta$ of the
deformed flatness equations $\cM^{(1)}_{a,b,m}$ in the tangent directions to $\HH_{a,b,m}$, \cref{eq:defEC}.
Then, from \cref{eq:dualmetric,eq:dualprod,eq:predualmetric,eq:predualprod}, writing the
$p^{\rm th}$ Taylor coefficient in flat coordinates for the intersection form
$\eta^{(2)}$ yields the deformed flatness equations of
the dual Frobenius structure $\cM^{(2)}_{a,b,m}$ with $\zeta=p$ \cite{MR2070050,Brini:2011ff}. The first statement then
follows immediately.
\end{proof}

\begin{rmk}
The dispersionless limit of the Poisson structure for RR2T obtained as a reduction of the second Poisson bracket of the 2D-Toda hierarchy \cite{GC} corresponds to the Poisson structure associated to the metric $\eta^{(2)}$ via \cref{eq:dlessbracket}, as one can promptly check by computing the Poisson brackets for the coefficients of $A$ and $B$, as described in \cref{sec:hamstruct}, and taking their quasi-classical limit.
\end{rmk}

\begin{rmk}
Under the conditions of Point (2) of \cref{cor:toda},
it should be stressed that the dToda Hamiltonian flows, \cref{eq:laxsato}, are generated
by a strict subset of the flat coordinates of the deformed connection,
\cref{eq:defECint,eq:defECz}. The remaining flows, which by semi-simplicity of
$\cM^{(1)}_{a,b,m}$ make the Principal Hierarchy a complete family of
conservation laws \cite{MR796577}, are a genuine extension of the dRR2T,
analogous to the extension of the  ordinary 1D-Toda hierarchy
\cite{MR2108440}. On the other hand, as soon as the conditions of Point (2) are not matched,
it can readily be checked in examples that the metric in \cref{eq:predualmetric} is
typically curved if either of $b+m$ or $-a-m$ is greater than one.
The conditions on the range of $m$ leaves only two possibilities for $m\geq
0$: $b=0,\; m=1$ or $b=1,\; m=0$. The case $m<0$ displays instead a wealth of
flat structures: as long as $-a-1\leq m\leq 1-b$ and $m\neq -a,-b$ the metric
$\eta^{(1)}$ in \cref{eq:predualmetric} is flat. Equivalently, for any fixed bidegree
$(a,b)$ there exist $a+b+1$ generalized RR2T (see \cref{rmk:equivlax}) such
that their semi-classical limit has a dispersionless bi-hamiltonian structure of
Dubrovin--Novikov type. This structure is exact when both $b+m$ and $-a-m$ are
negative, and tri-hamiltonian if $a=b, \;m=-a+1$.\\
\end{rmk}

\begin{rmk}[Flat coordinates of $\eta^{(1)}$]
\label{rmk:flat}
Flat coordinates for the first Frobenius structure on $\HH_{a,b,m}$ can be constructed using standard methods from
\cite{Dubrovin:1994hc,2012arXiv1210.2312R}. For definiteness, consider the
case when $b=1$ and $m=0$. By applying the change of variables $z \mapsto \re^{-q_0}(z+1)$ we obtain
\eq{
\lambda = \re^{v-aq_0} (z+1)^a \left(1-\frac{e^{2q_0} - 1}{z}\right)
\prod_{k=1}^{a-1} \left( 1-\frac{e^{q_{-k}+q_0}}{z+1}\right), \hspace{1cm}
\phi= \frac{\rd z}{z+1}.
}
We denote by $z=z(\lambda,q)$ a local inverse of the function $\lambda(z,q)$ and, from the equation $\partial_q (\lambda(z(\lambda,q),q)) = 0 $, we obtain the``thermodynamic identity'' $\partial_q \lambda = - (\partial_z \lambda)(\partial_q z)$, from which we can rewrite the residue formula \cref{eq:predualmetric} as
\eq{
\eta(X, Y)=\sum_{i=1}^{a+3}\Res_{z_i} X(\log(z+1)) Y(\log(z+1)) \rd\lambda
}
Now notice that we can expand the local solutions $\log(z(\lambda,q)+1)$ in the following way as series of $\lambda$:
\eq{
\begin{array}{l l}
\log(z+1)=\frac{1}{a}\left[ \log\lambda - (v-aq_0) - \sum_{k=1}^a\tau_k \frac{1}{\lambda^{k/a}} \right] + \mathcal{O}\left(\frac{1}{\lambda^{1+1/a}}\right), &  z\to\infty \\
\log(z+1)=\mathcal{O}\left(\frac{1}{\lambda}\right),& z\to0 \\
\log(z+1)=\log \lambda + c_0 + \mathcal{O}(\lambda), & z\to -1\\
\log(z+1)= c_j + \mathcal{O}(\lambda), & z\to e^{q_{-j}+q_0}-1
\end{array}
}
This shows that the only contribution to the sum in \cref{eq:predualmetric} comes from $z=\infty$ and that the coefficients
\begin{align*}
&\tau_0 = v-aq_0\\
&\tau_k = \frac{a}{k}\ \Res_{\lambda^{1/a}=\infty}\left[ \lambda^{k/a} \frac{\partial}{\partial \lambda^{1/a}} \log(z+1) \rd \lambda^{1/a}\right] = \frac{a}{k}\ \Res_{z=\infty} \frac{\lambda^{k/a}}{z+1}\rd z ,\qquad k=1,\ldots,a
\end{align*}
are flat coordinates for $\eta^{(1)}$.
\end{rmk}

\begin{examplen}[Bi-hamiltonian structure of $q$-deformed dispersionless $2$-KdV]

Let us consider the dispersionless limit of the $q$-deformed Gelfand--Dickey
hierarchy of \cref{exa:qdef} for $n=2$. The symbol of the Lax operator reads
\eq{
\lambda(z)= z^3+a z^2 + b z + c
}
and a quick inspection of the semi-classical Lax equations
reveals that $c$ is invariant under the flows of \cref{eq:laxsato}. When $c=0$,
the hierarchy then manifestly reduces to the generalized dRR2T of bidegree $(a,b)=(2,0)$ with
$\nu=m=1$, $v=0$. \\

By \cref{thm:frob}, the space of coefficients $\HH_{2,0,1}$ is endowed
with a conformal Frobenius manifold structure
$\cM^{(1)}_{2,0,1}=(\HH_{2,0,1},\eta^{(1)},\bullet)$ of charge one. The discussion
of \cref{rmk:flat} shows that flat coordinates for the metric $\eta^{(1)}$ are
given by
\eq{
t_1= -\frac{a}{3}, \quad t_2 = b-\frac{a^2}{6}.
}
In this chart, $\eta^{(1)}$ takes the off-diagonal form
$\eta_{ij}^{(1)}=\delta_{i+j,2}$, and the algebra structure on
$\cM^{(1)}_{2,0,1}$ is induced by the polynomial prepotential 
\eq{
F^{(1)}(t_1,t_2)= \frac{12}{5} t_1^6-t_2 t_1^4+\frac{1}{4} t_2^2 t_1^2-\frac{t_2^3}{144}.
}
As far as the dual-type Frobenius structure
$\cM^{(2)}_{2,0,1}=(\HH_{2,0,1},\eta^{(2)},\star)$ is concerned, from the proof of Point~(i) of
\cref{thm:frob} we know that the zeroes $(\re^{q_0}$, $\re^{q_{-1}})$ of $\lambda$ are
exponentiated flat coordinates of $\eta^{(2)}$. Then the Miura transformation 
\ea{
t_1= & \frac{1}{3} \left(\re^{q_0}+\re^{q_{-1}}\right),\\
t_2= & \frac{1}{6} \left(4 \re^{q_0+q_{-1}}-\re^{2 q_0}-\re^{2 q_{-1}}\right).
}
and \cref{eq:dualmetric} yield $\eta_{ij}^{(2)}=\frac{3+(-1)^{i+j}}{2}$ in
the chart $(q_0,q_{-1})$. Finally, the $\star$-product is given by
\cref{eq:dualprod} by the dual prepotential
\eq{
F^{(2)}(q_0,q_{-1})=\frac{5 q_0^3}{6}+\frac{1}{2} q_{-1} q_0^2+q_{-1}^2 q_0+\frac{2 q_{-1}^3}{3}-\text{Li}_3\left(\re^{q_{-1}-q_0}\right)
}
where $\Li_3(x)=\sum_{n>0}\frac{x^n}{n^3}$ is the polylogarithm
function of order 3. 
\end{examplen}

\begin{examplen}[Tri-hamiltonian structure of dispersionless Ablowitz--Ladik]

Let now $a=b=1$, $m=0$. This case corresponds to the dispersionless limit of 
the Ablowitz--Ladik hierarchy of  \cref{exa:al}. For this case, the Frobenius
manifold structures $\cM^{(1)}_{1,1,0}$ and $\cM^{(2)}_{1,1,0}$ on $\HH_{1,1,0}$ were
constructed in \cite{Brini:2011ff}; we will review and expand on that in
light of the general result of \cref{thm:frob}. In this case, the symbol of the Lax operator reads
\eq{
\lambda(z)=\re^{v} z\frac{z-\re^{q_0}}{z-\re^{-q_0}}.
}
By the proof of Point~(i) of \cref{thm:frob} we know that $(v,q_0)$ are flat
co-ordinates for the metric $\eta^{(2)}$ defined by
\cref{eq:dualmetric}. Furthermore, Point~(ii) of \cref{thm:frob} implies that the metric $\eta^{(1)}$ is flat in this
case. By the discussion of \cref{rmk:flat}, flat co-ordinates for $\eta^{(1)}$
are given by
\ea{
v= &\frac{1}{2} \l(\log  \left(t_1+\re^{t_2}\right)+t_2\right),\\
q_0 = & \frac{1}{2} \l(\log  \left(t_1+\re^{t_2}\right)-t_2\r).
}
Notice that $t_2v-q_0$ is a flat coordinate for both $\eta^{(1)}$ and
$\eta^{(2)}$, and the flat pencil is resonant in this case. The Frobenius
potentials in the respective flat frames are
\ea{
F^{(1)}(t_1,t_2)= & \frac{1}{2} t_2 t_1^2+\re^{t_2} s_1+\frac{1}{2} s_1^2 \log \left(s_1\right)\\
F^{(2)}(v,q_0)=& v^2 q_0+ 2 v q_0^2 +\frac{7
  q_0^3}{3} + \text{Li}_3\left(\re^{2 q_0}\right)
}

A further consequence of \cref{thm:frob} is the existence of a third
compatible flat metric $\eta^{(3)}$, along with the corresponding Frobenius
manifold structure $\cM^{(3)}_{1,1,0}$. Introducing a local chart
$(s_1,s_2)$ via 
\ea{
v= & -\frac{1}{2} \left(s_2+3 \log \left(\re^{-s_2}-s_1\right)\right), \\
q_0 = & \frac{1}{2} \left(s_2+\log \left(\re^{-s_2}-s_1\right)\right).
}
gives a flat co-ordinate system for $\eta^{(3)}$ as defined in
\cref{eq:trimetric}, as indeed 
$\eta^{(3)}_{ij}=\delta_{i+j,3}$; the pencil $(\eta^{(3)})^{-1}-\epsilon
(\eta^{(2)})^{-1}$ is again resonant, since $s_2=v+3 q_0$. It follows from \cref{eq:triprod} that the
third product structure is induced by the prepotential
\eq{
F^{(3)}(s_1,s_2)= \frac{1}{2} s_2 s_1^2-\re^{-s_2} s_1-\frac{1}{2} s_1^2
\log \left(s_1\right),
}
which shows that the first and the third Frobenius structures are isomorphic,
\eq{
\cM_{1,1,0}^{(1)}\simeq\cM_{1,1,0}^{(3)}.
} 
Such isomorphism is non-trivial, in that $\eta^{(1)}$ and $\eta^{(3)}$ do not
share a common flat system and the associated Frobenius structures are not
related by an affine change of flat co-ordinates.
\end{examplen}

\section{Equivariant mirror symmetry of toric trees}
\label{sec:GW}

Let $X$ be a smooth quasi-projective variety over $\bbC$ with vanishing odd cohomologies, $T$ an algebraic
torus action on $X$ with projective fixed loci $i_j:X_j^T\hookrightarrow X$,
$j=1,\dots,r\in \bbN$. If $X$ is projective, the equivariant Gromov--Witten invariants of $(X,T)$
\cite{MR1408320} are defined as
\eq{
\bra 
\phi_{\a_1}
\dots  
\phi_{\a_n}
\ket_{g,n,\b}^{X,T} \triangleq
\int_{[\overline{\cM}_{g,n}(X,\b)]_T^{\rm vir}} \prod_{i=1}^n
\ev^*_i(\phi_{\a_i})
\in H_T({\rm pt}),
\label{eq:gw}
}
where $\overline{\cM}_{g,n}(X,\beta)$ is the stable compactification \cite{Kontsevich:1994na} of the
moduli space of degree $\beta\in H_2(X,\bbZ)$ morphisms from $n$-pointed,
genus $g$ curves to $X$, $[\overline{\cM}_{g,n}(X,\b)]_T^{\rm vir}$ is the
$T$-equivariant virtual fundamental
class of $\overline{\cM}_{g,n}(X,\b)$, 
$\phi_{\a_i}\in
H_T(X)$ are arbitrary equivariant cohomology classes of $X$, and
$\ev_i:\overline{\cM}_{g,n}(X,\b)\to X$ is the evaluation map at the $i^{\rm th}$
marked point.
\cref{eq:gw} still makes sense if $X$ is
non-compact as long as $X^T_i$ is for all $i$; in that case, we define invariants  by
their localization to the fixed locus by the Graber--Pandharipande
virtual localization formula \cite{MR1666787,
  Kontsevich:1994na}. For $T$-equivariant cohomology classes $\phi_1,\phi_2 \in
H_T(X)$, write $\eta$ for the non-degenerate inner product
\eq{
\eta(\phi_1,\phi_2)\triangleq\sum_{j=1}^r \int_{X_j^T}\frac{i_j^*(\phi_1\cup \phi_2)}{\re (N_{X/X^T})}.
\label{eq:etagw}
}
We will denote by the
same symbol the flat non-degenerate pairing on $T(H_T(X))$ obtained from \cref{eq:etagw} by identifying $T_\tau
H_T(X) \simeq H_T(X)$ $\forall~\tau\in H_T(X)$. For vector vields $\varphi_i\in \cX(H_T(X))$, $i=1,2$, the genus zero equivariant Gromov--Witten invariants
\cref{eq:gw} define further a product structure $\varphi_1 \circ
\varphi_2$ on the tangent fiber at $\tau$ 
through
\eq{
\eta(\varphi_1, \varphi_2 \circ \varphi_3) \triangleq \sum_{n\geq 0}\sum_{\beta\in
  H_2(X,\bbZ)} \bra  \phi_1, \phi_2, \phi_3, \tau^{\otimes n} \ket_{0,n+3,\b}^{X,T} 
}
which is commutative, associative, and compatible with $\eta$
\cite{MR1408320}. The corresponding Frobenius manifold structure
$QH_T(X)\triangleq (H_T(X), \eta, \circ)$ on $H_T(X)$ is the {\it $T$-equivariant quantum
cohomology of $X$}. \\

Let $\mu_i=c_1(\cO_{BT_i}(1))$ be the hyperplane class on the classifying space
$BT_i$ of the $i^{\rm th}$-factor of $T=(\bbC^{\star})^l$, and write
$\bbK\triangleq \bbC(\mu_1, \dots, \mu_n)$ for the field of fractions of $H^\bullet(BT)$.
Then $QH_T(X)$ is a finite dimensional
dual-type Frobenius manifold over $\bbK$ of charge one: it has a flat identity by the Fundamental Class
Axiom of Gromov--Witten theory, and it is generally non-conformal as a
consequence of the non-trivial grading of the ground field $\bbK$. The purpose
of this section is to exhibit an isomorphism of such Frobenius dual-type
structures with the second Frobenius structure on $\HH_{a,b,\mu}$ of
\cref{thm:frob} for a suitable family of targets. When
$X$ is the total space of the bundle $\cO_{\bbP^1}(-1)\oplus
\cO_{\bbP^1}(-1)$ and $T\simeq \bbC^*$ is the one-torus action that covers the
trivial action on the base and scales the fibers with opposite weights, it was already
shown in \cite{Brini:2010ap,Brini:2011ff} that $QH_T(X)\simeq
\cM^{(2)}_{1,1,0}$. Moreover, it was proved in \cite{Brini:2013zsa} that
$\cM^{(2)}_{a,0,\nu}$ is isomorphic to the $T$-equivariant orbifold cohomology of
the $A_{a-1}$-surface singularity, where $T\simeq \bbC^*$ acting with generic
weights specified by $\nu$. We will see how this correspondence with
Gromov--Witten theory generalizes for arbitrary $(a,b,\nu)$.

\subsection{Toric data}

Let $\cS_{a,b} = \{v_{i}\in \bbZ^3\}_{i=1}^{a+b+2}$ be the set of
three-dimensional integer vectors
\eq{
v_i = \l\{\bary{cc}
(0,a+1-i,1) & i=1,\dots, a+1, \\
(1,a+2-i,1) & i=a+2, \dots, a+b+2.
\eary\r.
}
$\cS_{a,b}$ is the skeleton of the fan of a toric variety, given by the 
cone over a triangulation of the rays $v_i$ (\cref{fig:fanorb,fig:fanres}). We can
construct it as a GIT quotient $\bbC^{a+b+2}/\!\!/(\bbC^\star)^{a+b-1}$ \cite{MR1677117} by considering the
exact sequence
\eq{\begin{xy}
(10,0)*+{0}="z"; (30,0)*+{\bbZ^{a+b-1}}="y";(60,0)*+{\bbZ^{a+b+2}}="a"; (85,0)*+{\bbZ^3}="b"; (100,0)*+{0}="c";
{\ar^{}  "z";  "y"};
{\ar^{M}  "y";"a"};
{\ar^{N}  "a";"b"};
{\ar^{} "b";"c"};
\end{xy},
\label{eq:GIT}
}
where
\ea{
M^T = &\l(
\bary{cccccccccc}
1 & -2 & 1 & 0 & 0 & 0 & \dots & 0 & \dots & 0 \\
0 & 1 & -2 & 1 & 0 & 0 & \dots & 0 & \dots & 0 \\
& \vdots & \ddots & & & \vdots & & \vdots \\
0 & \dots  & 1 & -2 & 1 & 0 & 0 & 0 & \dots & 0 \\
0 & \dots  & 0 & 1 & -1 & -1 & 1 & 0 & \dots & 0 \\
0 & \dots & 0 & 0 & 0 & 1 & -2 & 1  & \dots & 0 \\
& \vdots & & & & \ddots & \vdots \\
0 & \dots & 0 & \dots & 0 & \dots & 0 & 1 & -2 & 1
\eary
\r),
\label{eq:MT} \\
N = & \l(
\bary{ccccccccc}
0 & 0 & 0 & \dots & 0 & 1 & 1 & \dots & 1  \\
0 & 1 & 2 & \dots & a & 0 & -1 & \dots & -b  \\
1 & 1 & 1 & \dots & 1 & 1 & 1 &  \dots &  1  \\
\eary
\r).
\label{eq:N}
}
\begin{figure}[t]
\vspace{.5cm}
\begin{minipage}[t]{0.49\linewidth}
\centering
\includegraphics{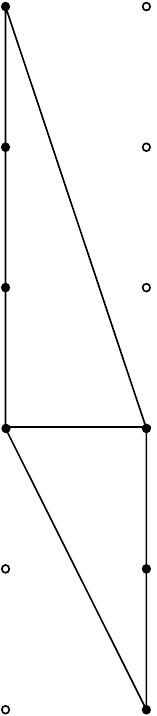}
\caption{The toric diagram of the orbifold $X_{a,b}
$ 
for $a=3$, $b=2$.}
\label{fig:fanorb}
\end{minipage} 
\begin{minipage}[t]{0.49\linewidth}
\centering
\includegraphics{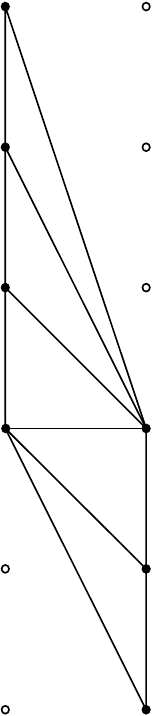}
\caption{The toric diagram of the minimal resolution $Y_{a,b}$ for
  $a=3$, $b=2$.}
\label{fig:fanres}
\end{minipage} 
\end{figure}
A triangulation of the fan corresponds to a choice of chamber in the GIT
problem, as in \cref{fig:fanorb,fig:fanres}. The picture in \cref{fig:fanorb} corresponds to the orbifold chamber in the
secondary fan of \cref{eq:MT,eq:N}; we will denote by $X_{a,b}$ the resulting singular variety. It is obtained by deleting the unstable
locus 
\eq{
X_{a,b}^{\rm us} \triangleq V\l(\prod_{i=2}^{a-1} x_i \prod_{j=2}^{b-1} x_{a+j}\r)
}
in $\bbC^{a+b+2}$ and quotienting by the $(\bbC^\star)^{a+b-1}$ action with weights specified by
$M$ in \cref{eq:MT}.
The picture in \cref{fig:fanres}
corresponds instead to the smooth (large volume) chamber: we remove the
Zariski-closed set
$Y_{a,b}^{\rm us}$ defined by 
\eq{
Y_{a,b}^{\rm us} \triangleq V\l(\prod_{j>i+1, j \neq a+1, a+2}\bra
    x_i, x_j    \ket  \prod_{j=1}^{a-1}\bra
 x_{a+1}, x_j    \ket \prod_{j=a+4}^{a+b+2}\bra
 x_{a+2}, x_j    \ket\r)
}
and then quotient by the $(\bbC^\star)^{a+b-1}$ action with weights specified by
$M$ in \cref{eq:MT}. The resulting variety, which we will denote by $Y_{a,b}$,
is a smooth quasi-projective Calabi--Yau threefold, and the variation of GIT
given by moving from \cref{fig:fanorb} to \cref{fig:fanres} is a crepant
resolution of the singularities of $X_{a,b}$. \\

\subsubsection{$T$-equivariant cohomology}

The resolution $Y_{a,b}$ can be visualized as a tree of two chains $\{L_i\}_{i=1}^{a-1}$ and $\{
L_i\}_{i=a+1}^{a+b-1}$ of $\bbP^1$ with normal bundle
$\cO+\cO(-2)$, which are then connected along a $(-1,-1)$ curve $L_a$. We will
refer to the resulting geometry as a {\it toric tree}, to reflect the shape of
the corresponding web diagram  (\cref{fig:toricweb}). Explicitly, we have
\eq{
L_i \triangleq \l\{\bary{lc}
 V(x_{i+1}, x_{a+2}) & i<a, \\
V(x_{a+1}, x_{a+2}) & i=a, \\
V(x_{a+1}, x_{i+2}) & i>a.
\eary\r.
}
The fundamental cycles $[L_j]\in
H_2(Y_{a,b},\bbZ)$ of the links of the chain are a system of generators for
$H_2(Y_{a,b},\bbZ) \simeq \bbZ^{a+b-1}$. Define $\omega_j\in
H^2(Y_{a,b},\bbZ)$ to be their cohomology duals, and $\cO(\omega_j)$ the
corresponding line bundles; by definition, they restrict to $\cO(1)$ on $L_j$,
and to the trivial bundle on $L_i$, $i\neq j$. Consider now the following $T\simeq (\bbC^\star)^2$-action on $\bbC^{a+b+2}$:
\eq{
(x_i; \sigma_1,\sigma_2) \to \l\{\bary{lc} 
\sigma_1^{-1} x_a & i=a, \\
\sigma_2 \sigma_1 x_{a+1} & i=a+1, \\
\sigma_2^{-1} x_{a+2} & i = a+2, \\
x_i & {\rm else}.
\eary\r.
\label{eq:Tab}
}
This descends to an effective torus action on $X_{a,b}$, which preserves
$K_{Y_{a,b}} \simeq \cO_{Y_{a,b}}$. Let $\{p_i\}_{i=1}^{a+b}$ denote the fixed points of the
torus action, so that $p_i$ and $p_{i+1}$ correspond to the poles of each $\bbP^1$ in the
chain. Turning on a torus action as in \cref{eq:Tab} we obtain an action on the
bundles over the links of the chain, linearized as in \cref{eq:Tab}; their equivariant first Chern classes provide lifts of $\omega_j$
to $T$-equivariant cohomology, which we will denote by the same symbol
$\omega_j \in H_T(Y_{a,b})$.

\begin{figure}[t]
\input{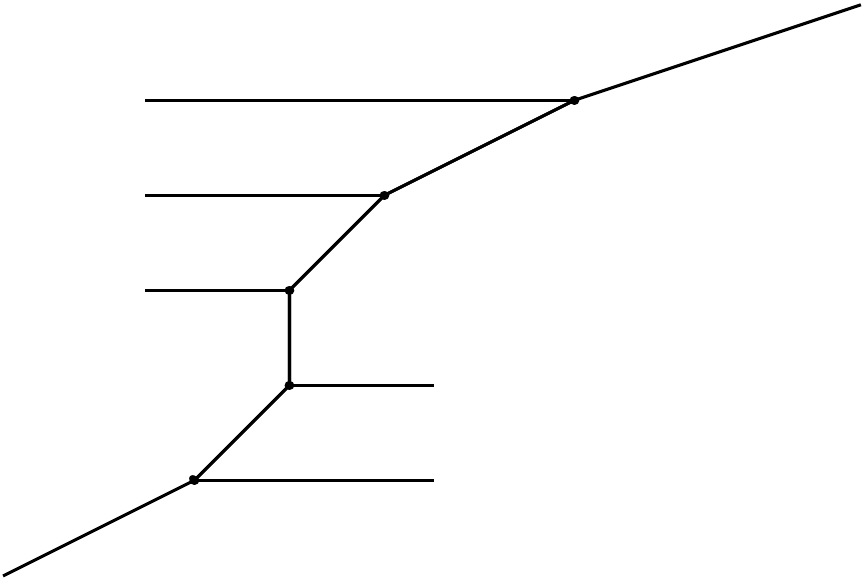_t}
\centering
\caption{The toric web diagram of $Y_{a,b}$ for $a=3$, $b=2$.}
\label{fig:toricweb}
\end{figure}
\subsection{Mirror symmetry}

Denote $\mu_i\triangleq c_1(\cO_{BT_i}(1))$ where $\bbC^*\simeq T_i\hookrightarrow T$ are
the two cartesian projections of the two-torus $T$ acting on $Y_{a,b}$. We
have the following

\begin{thm}
\label{thm:mirror}
Let $(a,b,\nu)$ be as in \cref{defn:hurwitz}. Then
\eq{
QH_T(Y_{a,b}) \simeq \cM^{(2)}_{a,b,\nu}
\label{eq:mirror}
}
upon identifying $\nu=\mu_1/\mu_2$.
\end{thm}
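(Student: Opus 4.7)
The plan is to establish the isomorphism by a Givental-style equivariant mirror theorem for the toric tree $Y_{a,b}$, combined with a semi-simple reconstruction argument for Frobenius manifolds. The proof would proceed in four stages.

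First, I would identify the Landau--Ginzburg mirror from the toric data. The GIT presentation in \cref{eq:GIT,eq:MT,eq:N} together with the $T$-action \cref{eq:Tab} yields, via Hori--Vafa duality for the toric Calabi--Yau threefold, an LG model on $\bbP^1$ whose superpotential coincides with the meromorphic function $\lambda(z)$ of \cref{defn:hurwitz}. The exponentiated Kähler parameters $Q_i = \re^{-t_i}$ conjugate to the Mori cone generators $[L_i]$ correspond (after a suitable mirror map, see below) to differences of the $q$-variables, while $v$ parametrises the overall volume modulus. The two equivariant weights $\mu_1, \mu_2$ enter as twists of the superpotential, forcing the monomial prefactor $\re^v z^{\nu+b}$; matching the linearisation \cref{eq:Tab} to the ramification profile of $\lambda$ at $0$ and $\infty$ fixes $\nu=\mu_1/\mu_2$.

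Second, I would establish the isomorphism at the level of the small quantum cohomology using Givental's equivariant $I$-function for toric varieties. Because $Y_{a,b}$ has no compact toric divisors (only compact curves), the Calabi--Yau condition forces the relevant $I$-function to be a hypergeometric series of type $F_D$ (cf.\ the comment in the Introduction about flat sections) whose oscillating-integral dual is exactly
\[
\int e^{\lambda(z)/\zeta}\,\phi(z)\,\frac{\rd z}{z},
\]
which by construction produces the flat sections of the deformed connection $\nabla^{(\zeta)}$ on $\cM^{(2)}_{a,b,\nu}$ defined by \cref{eq:dualmetric,eq:dualprod}. Combined with an explicit computation of the mirror map on the $H^2$-variables, this identifies a neighbourhood of the large radius cusp in $QH_T(Y_{a,b})$ with an open set in $\HH_{a,b,\nu}$ isomorphic as a pre-Frobenius structure. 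The dimension count $\dim H_T(Y_{a,b}) = a+b = \dim_\bbC \HH_{a,b,\nu}$ matches the Betti numbers of the toric tree ($a+b-1$ links plus the identity) against the moduli of $\lambda$.

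Third, I would upgrade this germ-level identification to a global isomorphism of Frobenius dual-type structures. By \cref{thm:frob}(i), $\cM^{(2)}_{a,b,\nu}$ is semi-simple; on the other hand, $Y_{a,b}$ is a smooth projective toric variety once restricted to the fixed locus, so $QH_T(Y_{a,b})$ is generically semi-simple over $\bbK$. Dubrovin's reconstruction theorem for analytic semi-simple Frobenius manifolds then promotes the germ-level isomorphism to an isomorphism of the full analytic structures, provided one checks that the identity vector fields and the Euler-type data (encoded on the $\HH_{a,b,\nu}$ side by the dual-type condition \cref{eq:dualtype}) match on both sides; the Fundamental Class Axiom of Gromov--Witten theory and the flatness of $e=\de_v$ from the proof of \cref{thm:frob}(i) take care of this.

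The main obstacle, and the step to do most carefully, is the second one: controlling the mirror map for a target that is neither semi-positive nor compact, with two chains of $(0,-2)$-curves glued along a central $(-1,-1)$-curve $L_a$. Concretely, one must prove that the asymptotic expansion of the equivariant $I$-function around the large radius limit reproduces the Taylor expansion of the flat coordinates of $\eta^{(2)}$ at the corresponding cusp of $\HH_{a,b,\nu}$, with only trivial (affine) quantum corrections coming from the $T$-equivariant weights. This hinges on a careful bookkeeping of multi-cover contributions from the rigid curve $L_a$ à la \cite{Brini:2010ap,Brini:2011ff}, glued to the $A_{a-1}$- and $A_{b-1}$-type chains treated in \cite{Brini:2013zsa}; the computation generalises both references, and the resonance phenomenon noted in the paragraph following \cref{rmk:flat} (one common flat coordinate per extra pole) ensures that the $a+b-1$ Kähler parameters on the A-side match the $a+b-1$ logarithmic directions in the flat frame of $\eta^{(2)}$ on the B-side.
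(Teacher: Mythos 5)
Your strategy is genuinely different from the one in the paper, and it is worth comparing the two. The paper does not invoke any Givental-style mirror theorem or reconstruction argument: it simply computes both sides in closed form and checks that the prepotentials agree. On the A-side, deformation invariance plus the Aspinwall--Morrison multiple-cover formula give \emph{all} positive-degree genus-zero invariants (\cref{eq:posdeg}), and localization gives the degree-zero triple intersections, so that $F_{\rm GW}^{Y_{a,b},T}$ is an explicit cubic polynomial plus a sum of trilogarithms (\cref{eq:FGW}); on the B-side a residue computation from \cref{eq:dualmetric,eq:dualprod} produces the same function, and the identification of flat coordinates \cref{eq:newcoords} turns out to be \emph{affine}, so no mirror map ever has to be controlled. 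This is precisely the point where your plan is weakest: you correctly identify the triviality of the mirror map as the main obstacle of your Step 2, but you do not resolve it, whereas the paper's direct computation makes the question moot. If you want to salvage the $I$-function route, the honest statement is that for these local Calabi--Yau targets the closed-form multiple-cover structure is what makes everything computable, and at that point the detour through Givental's formalism buys you little over just writing down the potential.

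Two further concrete problems. First, your identification of the flat sections of $\nabla^{(\zeta)}$ on $\cM^{(2)}_{a,b,\nu}$ with oscillating integrals $\int \re^{\lambda/\zeta}\phi\,\rd z/z$ is wrong for the \emph{dual-type} structure: oscillating integrals are the flat sections of the deformed connection of the first (conformal) structure $\cM^{(1)}_{a,b,m}$, while for $\cM^{(2)}_{a,b,\nu}$ the flat sections are the \emph{twisted periods} $\int_\gamma \lambda^{\zeta}\,\rd\log y$ of \cref{eq:periodmap} (this is exactly the content of \cref{thm:tp}, and it is why the solutions are Lauricella $F_D$ functions rather than Bessel-type integrals). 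Second, your appeal to ``Dubrovin's reconstruction theorem for analytic semi-simple Frobenius manifolds'' is misplaced here: that theorem concerns conformal structures with an Euler vector field, and both sides of \cref{eq:mirror} are non-conformal dual-type structures over $\bbK$. What you actually need in your Step 3 is only analytic continuation of an identity between two holomorphic tensors that agree on an open set, which is much weaker and does not require semi-simplicity at all; invoking reconstruction obscures rather than supplies the argument.
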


\begin{proof}

The proof is given by explicit calculation of both sides of
\cref{eq:mirror}. For the r.h.s., we will use the fact that in positive degree
all genus zero Gromov--Witten invariants can
be computed by a combined use of the deformation invariance of GW invariants and the
Aspinwall--Morrison formula \cite{Aspinwall:1991ce,MR1832332,MR1421397}. The result
is \cite{MR1832332, Karp:2005vq}
\eq{
\bra \omega_{i_1} \dots
  \omega_{i_n} \ket_{0,n,\beta}^{Y_{a,b},T} = \l\{
\bary{cc}
\frac{1}{d^3} & \hbox{if $i_j=a$ for some $j$~}, \beta=d\l([L_a]+\sum_{i=k_a}^{a-1} [L_i] +
\sum_{j=a+k_b}^{a+b+1} [L_j]\r), \\ & k_\bullet=\min(\{i_j\},\bullet)\\
-\frac{1}{d^3} & \hbox{if $k_+=\max(\{i_j\})<a$ or $k_-=\min(\{i_j\})>a$ }, \\
& \beta=d(L_{k-} +\dots L_{k+}), \\
0 & \mathrm{else}.
\eary
\right.
\label{eq:posdeg}
}
When $\beta=0$ and $n=3$, Gromov--Witten invariants are defined as the equivariant triple intersection
numbers of $Y_{a,b}$, which can be computed explicitly by localization to the $T$-fixed
points from \cref{eq:Tab}. Explicitly, the restrictions of the K\"ahler
classes to the fixed loci read
\eq{
\omega_i|_{p_j} = 
\l\{
\bary{lcl}
(a-i)\mu_2 + \mu_1 &  {\rm for} & j \leq i \leq a-1, \\
0 & {\rm for} & i\leq a-1, j>i, \\
0 &  {\rm for} & i\geq a, j\leq i, \\
(a-i)\mu_2 - \mu_1 & {\rm for} & j > i\geq a. \\
\eary
\r.
}
and the moving part contribution to the Euler class is computed as
\eq{
\re_T(TM)\big|_{p_i} = 
\l\{
\bary{lcl}
-\mu_2\l((a-i)\mu_2 + \mu_1\r)\l(\mu_1+\mu_2(a-i+1)\r) & {\rm for} & i\leq a, \\
\mu_2  \l(\mu_1+(i-a-1)\mu_2\r)\l((i-a)\mu_2+\mu_1\r) & {\rm for} & i\geq a+1.
\eary
\r.
\label{eq:tw}
}
Then, denoting $s_{i,c} \triangleq \sum_{k=i}^c (\re_T(TM))^{-1}|_{p_k}$, we get
\eq{
s_{i,c}=\l\{\bary{lcl}
\frac{i-c-1}{\mu_2 ((a-c) \mu_2+\mu_1)
   ((a+1-i) \mu_2+\mu_1)} & \mathrm{for} & i<c\leq a, \\
\frac{c-i+1}{\mu_2 ((a-c) \mu_2-\mu_1)
   ((a+1-i) \mu_2-\mu_1)} & \mathrm{for} & a<i\leq c.
\eary\r.
}
and therefore,
\ea{
\bra \mathbf{1}^3\ket^{Y_{a,b},T}_{0,3,0} =& s_{1,a+b} = s_{1,a-b}, \nn \\
=& \frac{b-a}{\mu_2 (b \mu_2+\mu_1)
   (a \mu_2+\mu_1)}
}
Furthermore, for $i\leq j\leq k< a$:
\ea{
\bra \mathbf{1}^2, \omega_i \ket^{Y_{a,b},T}_{0,3,0} =& \l((a-i)\mu_2+\mu_1\r)s_{1,i}  \nn \\
=& -\frac{i}{\mu_2 (a \mu_2  + \mu_1)}, \\
\bra \mathbf{1}, \omega_i, \omega_j \ket^{Y_{a,b},T}_{0,3,0} =&
\frac{-i \l((a-j)\mu_2+\mu_1\r)}{(a \mu_2  + \mu_1)},\\
\bra \omega_i, \omega_j, \omega_k \ket^{Y_{a,b},T}_{0,3,0} =&
 \frac{i  \l((a-j)\mu_2+\mu_1\r)  \l((a-k)\mu_2+\mu_1\r)}{\mu_2  (a \mu_2+\mu_1)},
}
and for $i\geq j \geq k\geq a$
\ea{
\bra \mathbf{1}^2, \omega_i \ket^{Y_{a,b},T}_{0,3,0} =&
\l(-(i-a)\mu_2-\mu_1\r)s_{i+1,a+b},  \nn \\
=& \frac{i-a-b}{\mu_2 (b \mu_2+\mu_1)}, \\
\bra \mathbf{1}, \omega_i, \omega_j \ket^{Y_{a,b},T}_{0,3,0} =&
\frac{\l(i-a-b\r) \l((a-j)\mu_2-\mu_1\r) }{\mu_2 (b
  \mu_2+\mu_1) }, \\
\bra \omega_i, \omega_j, \omega_k \ket^{Y_{a,b},T}_{0,3,0} =&
\frac{\l(i-a-b\r) \l((a-j)\mu_2-\mu_1\r) \l((a-k)\mu_2-\mu_1\r) }{\mu_2 (b \mu_2+\mu_1)}. 
\label{eq:lasttriple}
}
Writing $\tau=\tau_0 \mathbf{1}+\sum_{i=1}^{a+b-1}\tau_i \omega_i$ for $\tau\in
H_T(Y_{a,b})$, where we set $\omega_0 \triangleq \mathbf{1}_Y$,
\crefrange{eq:posdeg}{eq:lasttriple} imply that the generating function
$F_{\rm GW}^{Y_{a,b},T}$ of the genus zero Gromov--Witten
invariants of $Y_{a,b}$ takes the form 
\ea{
F_{\rm GW}^{Y_{a,b},T}(\tau) \triangleq & \sum_{n,\b}\bra \frac{\tau^{\otimes n}}{n!}\ket^{Y_{a,b},T}_{0,n,\b}, \nn \\
=& \sum_{i,j,k} \bra \omega_i, \omega_j, \omega_k
\ket_{0,3,0}^{Y_{a,b},T} \tau_i\tau_j\tau_k + 
\sum_{l=0}^{a-1}\sum_{k=0}^{b-1}\text{Li}_3\left(\re^{\tau_a+\tau_{a-1}+\dots+\tau_{a-l}+\tau_{a+b-1}+\dots+\tau_{a+b-k}}\right)\nn
\\
						&-\sum_{k\leq l
  =1}^{a-1}\text{Li}_3\left(\re^{\tau_{k}+\dots+\tau_{l}}\right)-\sum_{k\leq l
  =a+1}^{a+b-1}\text{Li}_3\left(\re^{\tau_{k}+\dots+\tau_{l}}\right).
\label{eq:FGW}
}
As far as the r.h.s. of \cref{eq:mirror} is concerned, the prepotential of
$\cM^{(2)}_{a,b,\nu}$ can be computed analytically in closed form from
\cref{eq:dualmetric,eq:dualprod}. A rather tedious, but completely
straightforward residue calculation shows that the prepotentials coincide
\eq{
F^{\cM^{(2)}_{a,b,\nu}}(v,q_{-i},q_j)=F_{\rm GW}^{Y_{a,b},T}(\tau)
}
upon identifying flat coordinates as
\ea{
\label{eq:newcoords}
v =& \frac{\tau_0}{\mu_2}+(a-\nu)\frac{\tau_a}{2}+\sum_{j=1}^{a-1}j\tau_{j}, & \\
q_{-k} =& -\left( \frac{\tau_a}{2}+\tau_{a-1}+\dots+\tau_{a-k}\right) & k= 0,\ldots,a-1,\\
q_l =& -\left( \frac{\tau_a}{2}+\tau_{a+1}+\dots +\tau_{a+l}\right) & l=1,\ldots,b-1.
}
\end{proof}

\cref{thm:mirror} prompts the following immediate generalization of the
 conjectural correspondence of
\cite{Brini:2010ap} for the Ablowitz--Ladik hierarchy.

\begin{conj}
The full descendent all-genus Gromov--Witten potential of $(Y_{a,b},T)$ for
$\mu_1=m \mu_2$ is the logarithm of a
$\tau$-function of the $m$-generalized RR2T of bidegree $(a,b)$. 
\label{conj:gwis}
\end{conj}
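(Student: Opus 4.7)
The plan is to combine the equivariant mirror theorem of \cref{thm:mirror} at genus zero with the Dubrovin--Zhang reconstruction of semisimple tau-symmetric integrable hierarchies, and then identify the resulting dispersive hierarchy with the $m$-generalized RR2T. By \cref{thm:mirror} and the semi-simplicity of $\cM^{(2)}_{a,b,m}$ established in \cref{thm:frob}, the equivariant quantum cohomology $QH_T(Y_{a,b})$ is generically semi-simple. Teleman's classification theorem, combined with Givental's $R$-matrix action on the trivial cohomological field theory at a tangent point, then reconstructs the full ancestor potential of $(Y_{a,b},T)$ from the genus-zero Frobenius data. Converting ancestors into descendents via Givental's $\re^{\widehat{\Psi}}$ symplectic transformation yields a closed, genus-by-genus formula for $F^{Y_{a,b},T}_g$ in terms of the geometry of $\cM^{(2)}_{a,b,m}$, with input data the restrictions \cref{eq:tw} computed in the proof of \cref{thm:mirror}.

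Running in parallel, I would apply the Dubrovin--Zhang construction to $(\cM^{(2)}_{a,b,m}, \mathsf{t})$ with $\mathsf{t}$ the flat coordinate system singled out in \cref{cor:toda}. This produces a tau-symmetric, bi-Hamiltonian dispersive deformation of the Principal Hierarchy of \cref{def:ph}, whose canonical tau function, by the Dubrovin--Zhang/Givental correspondence for semisimple Frobenius manifolds, coincides with the exponential of the ancestor potential constructed in the previous step. It then remains to prove that this Dubrovin--Zhang hierarchy is Miura-equivalent to the $m$-generalized RR2T. The strategy is to expand the Lax--Sato flows of \cref{sec:dless} order by order in $\epsilon$ via the interpolated shift $\Lambda_\epsilon=\re^{\epsilon \de_x}$, compute the conserved densities from the trace functionals \cref{eq:todaham}, verify their tau-symmetry, and match them with the Dubrovin--Zhang normal form after a suitable Miura transformation. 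By the uniqueness of tau-symmetric dispersive deformations of a fixed Principal Hierarchy, it is enough to perform this matching order by order in $\epsilon$.

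The main obstacle is this last identification in full generality. For $(a,b)=(1,1)$ it was carried out in \cite{Brini:2011ff}, and for the bigraded case $(b,m)=(0,-M)$ it follows from the known theory of the extended bigraded Toda hierarchy; generically, however, the Poisson reduction in \cref{sec:hamstruct} only yields one of the two compatible brackets required, so a new input is needed to produce a dispersive bi-Hamiltonian structure compatible with the flat pencil $(\eta^{(1)},\eta^{(2)})$ of \cref{thm:frob}. A natural route is to lift this pencil (available when $-a-1\leq m\leq 1-b$, $m\neq -a,-b$) to the dispersive level by a Carlet--Dubrovin--Zhang type argument adapted to the block-T\"oplitz factorization problem of \cref{sec:biorth}, and to construct the second bracket as a deformation of the intersection form inherited from $\eta^{(1)}$. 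As an intermediate milestone, performing the matching to quadratic order in $\epsilon$ — equivalently, up to genus one — already determines the Miura class uniquely and provides a non-trivial consistency check; this is precisely the verification that the authors carry out explicitly, and it serves as the foundation on which the all-genus statement of the conjecture would rest.
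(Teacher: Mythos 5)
Your overall strategy tracks the paper's quite closely: \cref{thm:mirror} settles genus zero, the conjecture is recast as a Miura equivalence between the Givental/Buryak--Posthuma--Shadrin deformation of the Principal Hierarchy of $\cM^{(2)}_{a,b,m}$ and the lattice-interpolated RR2T deformation, and the actual verification is carried out explicitly only at order $\epsilon^2$ (genus one), using the Dubrovin--Zhang quasi-Miura transformation \cref{eq:g1qm} on the GW side and the interpolation of \cref{poisson2dtoda,eq:todaham} on the Toda side, followed by a four-step matching of Poisson tensors and of a single Hamiltonian density (the rest being forced by involutivity, cf.\ \cref{rmk:trl}). You are also right that the statement remains a conjecture beyond genus one; neither you nor the paper proves it in full.

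However, one step in your argument is genuinely wrong as stated: the appeal to ``the uniqueness of tau-symmetric dispersive deformations of a fixed Principal Hierarchy.'' Tau-symmetric Hamiltonian deformations of a given dispersionless hierarchy are \emph{not} unique up to Miura equivalence --- the trivial deformation and the DZ deformation of a nontrivial CohFT are both tau-symmetric deformations of the same Principal Hierarchy, and the Miura classes of such deformations form a nontrivial moduli space. If such a uniqueness statement held, the all-genus conjecture would follow for free from the genus-zero mirror theorem, which it does not; this is precisely why the explicit order-by-order matching of the two specific deformations is unavoidable, and why the paper stresses that closed-form control of the GW deformation beyond $\cO(\epsilon^2)$ requires all-order steepest-descent asymptotics that are out of reach. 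Relatedly, your claim that the genus-one matching ``determines the Miura class uniquely'' conflicts with the paper's observation that the trivializing Miura elements are far from unique (any canonical transformation \cref{eq:Kcan} preserves the bracket); what is actually shown is the \emph{existence} of a compatible choice. Finally, your proposed route through a dispersive lift of the bi-Hamiltonian pencil $(\eta^{(1)},\eta^{(2)})$ is not needed for the genus-one check and is unavailable for general $(a,b,m)$ (the pencil only exists in the range $-a-1\leq m\leq 1-b$, $m\neq -a,-b$); the paper deliberately works with the single reduced second bracket and sidesteps the bi-Hamiltonian question.
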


In other words, the parameter $m$ in \cref{rmk:equivlax} corresponds to a
choice of weights of a resonant subtorus $\bbC^*\simeq T' \subset T$. Its
proof up to genus one will be the subject of Section~\ref{sec:genus1}.

\begin{rmk}
When $b=0$, the GIT quotient in \cref{eq:GIT} yields $Y^{a+1,0}\simeq \bbC
\times \cA_{a}$, where $\cA_a$ is the canonical resolution of the $A_a$
surface singularity. \cref{conj:gwis} then suggests that a suitable
$\tau$-function of the $q$-deformed $a$-KdV hierarchy should yield 
the total GW potential of $\bbC \times \cA_{a}$. This has interesting
implications already for the case $a=1$ and $\cA_{0}=\bbC^2$, where it would imply that the
$\tau$-function of the scalar hierarchy highlighted in \cite{Brini:2011ij} to be
underlying the generating functions of triple Hodge integrals on $\cM_{g,n}$
should be a $\tau$-function of the $q$-deformed KdV hierarchy of \cite{MR1383952}.
\end{rmk}

\subsection{Twisted periods and the Dubrovin connection}

Information on the genus zero gravitational invariants of $Y_{a,b}$ is encoded
into the the pencil of affine connections of \cref{eq:defEC}, or the {\it
  Dubrovin connection} on $QH_T(Y_{a,b})$. An immediate spin-off of \cref{thm:mirror} is an explicit characterization of
its space of solutions. \\

Let $\nu=m\in \bbZ$ and $\pi:\cU_{a,b,m} \to \HH_{a,b,m}$ be the universal
curve over the genus zero double Hurwitz space $\HH_{a,b,m}$. For $\lambda \in \HH_{a,b,m}$ we write $C_\lambda$ for the fiber
of $\pi$ at $\lambda$ and $C_{[\lambda]} \triangleq C_\lambda \setminus
\{\re^{q_0},\re^{-q_0},\{\re^{\sgn(k) q_k}\}_{k\neq 0=1-a}^{b-1}\}$. Let now $p:\tilde C_{[\lambda]} \to C_{[\lambda]}$ be
the universal covering map and, for  $\zeta\in \bbC$, fix a choice of principal branch for
$\lambda^{\zeta} = \exp(\zeta \log\lambda)$  as
\eq{
\lambda^{\zeta}(z) = z^{\zeta (m+b)}\prod_{i=1-a}^{0}|z-q_i|^{\zeta} \re^{\ri
  \zeta \arg_{i,+}(z)}  \prod_{j=0}^{b-1}|z-q_j^{-1}|^{-\zeta}\re^{-\ri \zeta \arg_{j,-}(z)}
\label{eq:lambdazeta}
}
where $\arg_{i,\pm}(z)\in [0,2\pi)$ is the angle formed by
  $z-\re^{\pm q_i}$ with $\Im(z)=0$. On the complex line $L_\lambda$
  parametrized by $\lambda^{\zeta}$, we have a
monodromy representation $\rho_\lambda : \pi_1(C_{[\lambda]})\to L_\lambda \simeq \bbC$ 
defined by local coefficients 
$l_{q_i}$ around $\re^{q_i}$ resulting in
multiplication by $\mathfrak{q}_i:=\rho_\lambda(l_{q_i}) = \re^{2\pi \ri
   \zeta \sigma_i}$, where $\sigma_i=(i+m+b+1)$ or $(i+m+b+a-1)$ for $i>0$ or
$i<0$ respectively, and we set $\mathfrak{q}_\pm=\mathfrak{q}_{0^\pm}$. Then the sheaf of sections of $\tilde C_{[\lambda]}
\times_{\pi_1(C_{[\lambda]})} L_\lambda \to C_{[\lambda]}$ defines a locally constant sheaf
$L_\lambda$ on $C_{[\lambda]}$, and we  denote by $H_\bullet(C_{[\lambda]}, L_\lambda)$ (resp.  $H^\bullet(C_{[\lambda]}, L_\lambda)$) the homology (resp.~cohomology)
groups of $C_{[\lambda]}$ twisted by the set of local coefficients determined by
$\mathfrak{q}_i$. Integrating $\lambda^{\zeta}\phi \in H^1(C_{[\lambda]},
L_\lambda)$  over $\gamma \in H_1(C_{[\lambda]},
L_\lambda)$  defines the {\it twisted period mapping}
\eq{
\bary{ccccc}
\Pi_\lambda & : & H_1(C_{[\lambda]},L_\lambda) & \to & \cO(\HH_{a,b,m}), \\
& & \gamma & \to & \int_\gamma \lambda^{\zeta}~\rd \log y.
\eary
\label{eq:periodmap}
}
\\

Let now $\mathrm{Sol}_{a,b,\nu,\zeta}$ be the $(a+b)$-dimensional $\bbC(\zeta)$-vector space of
horizontal sections the Dubrovin connection,
\eq{
\mathrm{Sol}_{\lambda} = \{s \in \cX(\HH_{a,b,m}), \nabla^{(\eta^{(2)},\zeta)}s=0 \}.
\label{eq:QDELG}
}
As for the ordinary periods of $\cM^{(1)}_{a,b,m}$ \cite{Dubrovin:1998fe}, twisted
periods are an affine basis for the space of flat coordinates of the deformed flat connection on
$\cM^{(2)}_{a,b,m}$.
\begin{prop}[\cite{MR2070050}]
\label{thm:tp}
The gradients with respect to $\eta^{(2)}$ of the twisted periods of
\cref{eq:periodmap} generate over $\bbC(\zeta)$ the solution space of the horizontality condition for the
Dubrovin connection, \cref{eq:defEC}, on $QH_T(Y_{a,b})$, 
\eq{
\mathrm{Sol}_{a,b,m,\zeta} = \mathrm{span}_{\bbC(\zeta)}
\{\nabla^{\eta(2)} \Pi_{\lambda}(\gamma) \}_{\gamma \in H_1(C_{[\lambda]},
L_\lambda)}.
}
\end{prop}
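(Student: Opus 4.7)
The argument closely follows Dubrovin's general construction of flat coordinates for the dual Dubrovin connection from twisted Landau--Ginzburg periods \cite{MR2070050}, specialized to the LG model with superpotential $\lambda(z)$ from \cref{defn:hurwitz}. The strategy has two parts: (i) verify that each gradient $\nabla^{\eta^{(2)}}\Pi_\lambda(\gamma)$ is horizontal for the deformed connection $\nabla^{(\zeta)}$, and (ii) show that varying $\gamma$ over a basis of $H_1(C_{[\lambda]}, L_\lambda)$ yields an affine basis of $\mathrm{Sol}_{a,b,m,\zeta}$ over $\bbC(\zeta)$.

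For (i), I would work in the flat coordinate system $\mathsf{u}=(v, q_{-k}, q_l)$ for $\eta^{(2)}$ exhibited in the proof of \cref{thm:frob}, so that the Christoffel symbols of $\nabla^{(2)}$ vanish and the horizontality condition on $\nabla^{\eta^{(2)}}\Pi$ reduces to the scalar system
\begin{equation}
\partial_i \partial_j \Pi + \zeta\, c_{ij}^{\,k}\, \partial_k \Pi = 0, \nn
\end{equation}
where $c_{ij}^{\,k}$ are the structure constants of $\star$. Differentiation under the integral sign in \cref{eq:periodmap} gives $\partial_i \Pi = \zeta \int_\gamma \lambda^{\zeta}\, \partial_i\log\lambda \, \rd\log z$; a second differentiation combined with the thermodynamic identity $\partial_i\log\lambda = -(\partial_z\log\lambda)(\partial_i z)$, where $z=z(\lambda,\mathsf{u})$ is the local inverse of $\lambda$, allows one to integrate by parts and rewrite $\partial_i\partial_j\Pi$ as a sum of residues supported at the critical points $\{y_i\}$ of $\lambda$. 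The key point is then that the residue expression for the $\star$-product in \cref{eq:dualprod}, when contracted with $\partial_k \Pi$, reproduces exactly this sum; thus horizontality follows by Leibniz and a direct comparison of residues.

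For (ii), I would first compute the dimension of $H_1(C_{[\lambda]}, L_\lambda)$: since $C_{[\lambda]}$ is $\bbP^1$ with $a+b+2$ punctures (the zeros $\re^{q_{-k}}$, poles $\re^{-q_l}$, and the two ramification points $0,\infty$) and the monodromy characters $\mathfrak{q}_i = \re^{2\pi\ri \zeta \sigma_i}$ are nontrivial for generic $\zeta$, twisted de Rham cohomology computes the rank as the Euler characteristic contribution $-\chi(C_{[\lambda]}) = a+b$, matching $\dim_{\bbC}\HH_{a,b,m}=a+b$. To upgrade this count to linear independence over $\bbC(\zeta)$ of the gradients, I would analyze the $\zeta\to 0$ asymptotics of the periods using the principal-branch expansion \cref{eq:lambdazeta}: the leading term in $\zeta$ recovers the untwisted period integrals of $\log\lambda \,\rd\log z$, and evaluating these on a distinguished basis of Pochhammer-type cycles (one around each pair of consecutive singular points) produces the standard flat coordinates of \cref{rmk:flat} together with their dual logarithmic counterparts; their Wronskian is nonzero, giving independence over $\bbC(\zeta)$.

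The main obstacle is the bookkeeping in step~(i): the integrand $\lambda^\zeta$ is multivalued on $C_{[\lambda]}$ and the endpoint behavior at $z=0, \infty$ is not uniform across the range of $m$, so boundary contributions from the integration by parts must be treated with care and depend on the relative sign of $m+b$ and $-a-m$. One should check that these boundary terms either cancel in the twisted pairing or are absorbed into the local-system redefinition of cycles; this is the standard subtlety in extending the Dubrovin--Saito LG period calculus to Hurwitz spaces with non-simple ramification, and is the only place where the argument departs from a straightforward transcription of \cite{MR2070050}.
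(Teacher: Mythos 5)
The paper offers no proof of this proposition beyond the citation: it is imported wholesale from Dubrovin's almost-duality theory \cite{MR2070050}, and your sketch is essentially a faithful reconstruction of that argument (the Landau--Ginzburg residue computation for horizontality of $\nabla^{\eta^{(2)}}\Pi_\lambda(\gamma)$, plus the twisted-cohomology dimension count $\dim H_1(C_{[\lambda]},L_\lambda)=-\chi(C_{[\lambda]})=a+b$ for spanning), so it matches the intended proof. Two small corrections: since $\gamma$ is a \emph{closed} cycle in twisted homology, the integration by parts in your step~(i) produces no boundary terms and the caveat in your final paragraph is moot (boundary issues arise only if one regularizes open paths between singular points); and the $\zeta\to 0$ limit of the deformed connection is the Levi-Civita connection of $\eta^{(2)}$, so the normalized periods degenerate to affine-linear functions of the flat coordinates $(v,q_i)$ of $\eta^{(2)}$ (as is visible from the $\Gamma(\xi)\sim\xi^{-1}$ asymptotics in \cref{eq:pilaur1,eq:pilaur2}), not to the $\eta^{(1)}$-flat coordinates of \cref{rmk:flat}.
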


\begin{rmk}
Except for the double Hurwitz space interpretation, all of the above
generalizes trivially to the case when $\nu \in \bbC$.
\end{rmk}

\subsubsection{The twisted period mapping for $\cM^{(2)}_{a,b,\nu}$}

For generic monodromy weights, the homology with local coefficients $L_\lambda$
coincides with the integral homology of the Riemannian covering \cite{MR1930577} of $C_{[\lambda]}$,
\eq{
H^\bullet(C_{[\lambda]}, L_\lambda) \simeq  H^\bullet(\tilde
C_{[\lambda]}/[\pi_1(C_{[\lambda]}),\pi_1(C_{[\lambda]})]), \bbZ).
}
A basis of $H_1(C_{[\lambda]},  L_\lambda)$ can then be presented in the form
of compact loops $\gamma_k=[l_{0},l_{\re^{q_k\sgn(k)}}]$,
$\gamma_\pm=[l_{0},l_{\re^{\pm q_0}}]$ given by the commutator of simple
  oriented loops around zero and each of the punctures of
  $C_{[\lambda]}$. Then, the twisted periods
\ea{
\label{eq:eulerint}
\Omega_\pm  \triangleq &  \frac{\Pi_{\lambda}(\gamma_k)}{(1-\re^{2\pi\ri
    \zeta\nu})(1-\re^{\mp 2\pi\ri \zeta})},  \\
\Omega_k  \triangleq  & \frac{\Pi_{\lambda}(\gamma_k)}{(1-\re^{2\pi\ri
    \zeta\nu})(1-\re^{-\sgn(k)2\pi\ri \zeta})}, k \neq 0,
\label{eq:eulerint2}
}
give a $\bbC(\zeta,\nu)$-basis of $\mathrm{Sol}_{a,b,\nu}$ \cite{MR1424469,
  MR1930577,Brini:2013zsa}. In turn, the period integrals of \cref{eq:eulerint,eq:eulerint2}
are hypergeometric functions in exponentiated flat variables for
$\eta^{(2)}$. 
\begin{prop}
\label{prop:twistlaur}
The twisted periods of $\cM_{a,b,\nu}^{(2)}$ are given by
\ea{
\Omega_\pm 
= & \frac{\Gamma(\xi)\Gamma(1\pm \zeta)}{\Gamma(1+\xi\pm
  \zeta))}
\re^{\zeta (v+2 q_0)}\re^{\pm \xi q_0} \prod_{j=1-a, j\neq 0}^{b-1} \re^{\zeta q_j}
 \nn \\
\times &
\Phi^{[a-\theta(\pm 1),b-\theta(\pm 1)]}\l(\xi,\zeta,-\zeta,1+\xi\pm \zeta);
\{\re^{\pm q_0-q_{i}}\}_{i=1-a}^{-1}, \re^{\pm 2q_0},\{\re^{\pm q_0+q_{i}}\}_{i=1}^{b-1}\r) \label{eq:pilaur1} \\
\Omega_k 
= & \frac{\Gamma(\xi)\Gamma(1+\sgn(k)\zeta)}{\Gamma(1+\xi+\sgn(k)\zeta)} \re^{\zeta
  (v+2q_0)} \re^{\xi \sgn(k) q_k} \prod_{j=1-a, j\neq
  0}^{b-1} \re^{\zeta q_j}
 \nn \\
\times &
\Phi^{[a-\theta(k),b-\theta(k)]}\l(\xi,\zeta,-\zeta,1+\xi+\sgn(k)\zeta);
\{\re^{\sgn(k) q_k-q_{i}}\}_{i\neq k=1-a}^{0},\{\re^{\sgn(k) q_k+q_{i}}\}_{i\neq k=0}^{b-1}\r)
\label{eq:pilaur2}
}
where $\theta(x)$ is Heaviside's step function and we defined
\eq{
\label{eq:Phi}
\Phi^{[M,N]}(a, b_1, b_2,c, w_1, \dots, w_{M+N}) \triangleq  F_D^{(M+N)}(a;
\overbrace{b_1, \dots, b_1}^{M~\mathrm{times}}, \overbrace{b_2, \dots,
  b_2}^{N~\mathrm{times}}; c; w_1, \dots, w_{M+N}),
}
and $\xi\triangleq \zeta(\nu+b)$.
\end{prop}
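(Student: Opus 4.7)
The plan is to evaluate the twisted-period integrals \cref{eq:eulerint,eq:eulerint2} directly as Euler-type integrals and then recognise them as Lauricella $F_D$ functions. First, I will use the classical Pochhammer double-loop construction: for any puncture $z_\star \in \{\re^{\pm q_0}, \re^{\sgn(k)q_k}\}$, the commutator cycle $\gamma_\star = [l_0, l_{z_\star}]$ is homologous, in $H_1(C_{[\lambda]},L_\lambda)$, to the segment joining $0$ to $z_\star$, up to the prefactor $(1-\mathfrak{q}_0)(1-\mathfrak{q}_{z_\star})$, where $\mathfrak{q}_0 = \re^{2\pi\ri\zeta\nu}$ is the local monodromy at the origin and $\mathfrak{q}_{z_\star}$ is the local monodromy at $z_\star$. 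These are precisely the factors stripped off in the definitions of $\Omega_{\pm}$ and $\Omega_k$, so that the normalised twisted periods are bare definite integrals along an interval.

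Second, I will rescale the integration variable to $[0,1]$ and extract constants. For $\Omega_+$ take $z = \re^{q_0} t$; the linear factors in $\lambda$ become
\[
(z-\re^{q_{-j}}) = -\re^{q_{-j}}(1-\re^{q_0-q_{-j}}t), \qquad (z-\re^{-q_l}) = -\re^{-q_l}(1-\re^{q_0+q_l}t),
\]
and the combination $z^{\zeta(\nu+b)-1}\mathrm{d}z$ produces $t^{\xi-1}\mathrm{d}t$ together with an exponential prefactor $\re^{\zeta v}\re^{q_0 \xi}$. The $j=0$ factor gives $(1-t)^{\zeta}$, which will play the role of the $(1-t)^{c-a-1}$ factor in the Euler integral, while the remaining $a-1+b$ factors become $(1-r_i t)^{\pm\zeta}$ for $r_i\in\{\re^{q_0-q_{-j}},\re^{q_0+q_l}\}$. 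Collecting all the extracted exponentials gives the overall normalisation $\re^{\zeta(v+2q_0)}\re^{\xi q_0}\prod_{j\neq 0}\re^{\zeta q_j}$ of the statement.

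Third, I will invoke the standard Euler representation
\[
F_D^{(n)}(a;b_1,\dots,b_n;c;x_1,\dots,x_n) = \frac{\Gamma(c)}{\Gamma(a)\Gamma(c-a)}\int_0^1 t^{a-1}(1-t)^{c-a-1}\prod_{i=1}^n(1-x_i t)^{-b_i}\,\mathrm{d}t
\]
with $a = \xi$ and $c-a-1 = \zeta$, so that $c = 1+\xi+\zeta$ and the beta-function prefactor $\Gamma(\xi)\Gamma(1+\zeta)/\Gamma(1+\xi+\zeta)$ appears. The remaining $b_i$'s are identically $-\zeta$ on the $a-1$ arguments $\{\re^{q_0-q_{-j}}\}_{j=1}^{a-1}$ and identically $\zeta$ on the $b$ arguments coming from the poles. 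This is exactly the symbol $\Phi^{[a-1,b-1]}(\xi,\zeta,-\zeta,1+\xi+\zeta;\cdot)$ up to including the argument $\re^{2q_0}$ that arises from the $j=0$ pole if one chooses instead to keep $(1-t)^{\zeta}$ inside the Lauricella list, yielding \cref{eq:pilaur1}. The cases $\Omega_-$ and $\Omega_k$ are treated identically after the substitutions $z = \re^{-q_0}t$ and $z = \re^{\sgn(k)q_k}t$, which flip the role of the factor absorbed into the beta prefactor and account for the Heaviside-step shift $\theta(\pm 1),\theta(k)$ in the upper indices of $\Phi$.

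The one step that will require genuine care is purely bookkeeping: tracking the sign $(-\re^{q_j})^{\zeta}$ and $(-\re^{-q_l})^{-\zeta}$ contributions coming from each linear factor on each branch of $\lambda^{\zeta}$, consolidating them with the principal-branch prescription \cref{eq:lambdazeta}, and matching the resulting exponential of linear combinations of $v$ and the $q_j$'s with the stated normalisation $\re^{\zeta(v+2q_0)}\re^{\pm\xi q_0}\prod_{j\neq 0}\re^{\zeta q_j}$. Once this prefactor is reconciled, the remaining integrand is manifestly of Lauricella type and the proposition follows by the Euler representation.
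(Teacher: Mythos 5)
Your proposal is correct and is exactly the paper's argument: the proof given there consists precisely of reducing the commutator cycles $\gamma_\pm,\gamma_k$ of \cref{eq:eulerint,eq:eulerint2} to normalised segment integrals and matching them against the Euler integral representation \cref{eq:eulerintFD} of $F_D$. The only work is the branch/prefactor bookkeeping you already flag, which reproduces the stated normalisation $\re^{\zeta(v+2q_0)}\re^{\pm\xi q_0}\prod_{j\neq 0}\re^{\zeta q_j}$.
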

In \cref{eq:Phi}, $F_D^{(M)}(a; b_1, \dots, b_M; c; w_1, \dots, w_M)$ is the
Lauricella function of type $D$ \cite{MR0422713}:
\eq{
F_D^{(M)}(a; b_1, \dots, b_M; c; w_1, \dots, w_M) \triangleq \sum_{i_1, \dots, i_M}
\frac{(a)_{\sum_j i_j}}{(c)_{\sum_j i_j}}\prod_{j=1}^M \frac{(b_j)_{i_j} w_j^{i_j}}{i_j!}.
\label{eq:FD}
}
where we used the Pochhammer symbol $(x)_m\triangleq
\Gamma(x+m)/\Gamma(x)$. The proof is an immediate consequence of
\cref{eq:eulerint,eq:eulerint2} and the Euler integral representation of the Lauricella
function,
\eq{
F_D^{(M)}(a; b_1, \dots, b_M; c; w_1, \dots, w_M) =
\frac{\Gamma(c)}{\Gamma(a)\Gamma(c-a)} \int_0^1
z^{a-1}(1-z)^{c-a-1}\prod_{i=1}^M (1-w_i z)^{-b_i} \rd z.
\label{eq:eulerintFD}
}

\subsection{Dispersive deformation and elliptic Gromov--Witten invariants}
\label{sec:genus1}

In this section we study the dispersive deformation of the $m$-generalized
RR2T at order $\cO(\epsilon^2)$, and describe in detail the workflow of the
proof of \cref{conj:gwis} at the genus one approximation. In order to do so, we first offer a reformulation
of \cref{conj:gwis} in the language of the theory of formal loop spaces.\\

\subsubsection{\cref{conj:gwis} as a Miura equivalence of dispersive hierarchies.}

Recall that the Principal Hierarchy of \cref{def:ph} can be thought of as a triplet $(\cM,
\{,\}^{[0]}, \HH^{[0]})$ where $\cM$ is an $n$-dimensional complex Frobenius manifold, $\{, \}^{[0]} \triangleq \{, \}_{\eta}$ in
\cref{eq:dlessbracket} is a local Poisson structure on the loop space $\LL_\cM$, and
$\HH^{[0]}=(H_{\a,p})_{\a,p}$ is a family of local functionals
$H^{[0]}_{\a,p} = \int_{S^1} h^{[0]}_{\a,p} \rd x$, $h^{[0]}_{\a,p} \in
\cO_\cM$  for $\a=1, \dots, n$ and $p\in
\bbZ^+$, giving rise to commuting Hamiltonian vector fields on $\LL_\cM$ as in
\cref{eq:ph}. When $\cM=QH_T^\bullet(Y_{a,b})\simeq \cM^{(2)}_{a,b,\nu}$, the
isomorphism of \cref{thm:mirror} induces a Poisson morphism $\LL_{QH_T^\bullet(Y_{a,b})} \simeq
\LL_{\cM^{(2)}_{a,b,\nu}}$ such that the dispersionless Toda densities
$h_{\a,p}$ pull back to the expansion of the Hamiltonian densities of the
Principal Hierarchy of $QH_T^\bullet(Y_{a,b})$, proving \cref{conj:gwis} at
the genus zero approximation. \\

For the higher genus theory, we have two, a
priori inequivalent deformations of $\{, \}^{[0]}$ and $\HH^{[0]}$, depending on a formal parameter $\epsilon$. The first one
is the spatial interpolation of the Toda lattice of \cref{sec:dless}  applied to the 2D-Toda
Hamiltonians of \cref{eq:todaham} and to the
second 2D-Toda Poisson
structure reduced on the factorization locus
$\cA^{\rm RR}$ (\cref{sec:hamstruct}): we call this the {\it RR2T deformation}. The second is the Buryak--Posthuma--Shadrin
deformation of the
Poisson structure and Hamiltonians induced by Givental's formula for the
higher genus Gromov--Witten potential \cite{MR1901075,MR2998900}; we will
refer to this as the {\it GW deformation}. In either
case,  $(\{,\}^{[0]}, (H^{[0]}_{\a,p})_{\a,p})$ deforms as
\ea{
\{\tau^\a(X),\tau^\b
(Y)\}^{[0]}  \rightarrow &
\{\tau^\a(X),\tau^\b(Y)\}^{[\epsilon]} \nn \\
=& \{\tau^\a(X),\tau^\b
(Y)\}^{[0]} +\sum_{g=1}^\infty\epsilon^{g}
\sum_{s=0}^{g+1}  \mathscr{P}^{\a,\b}_{g,s}(\tau, \tau_X, \dots,
\tau^{(s)})\delta^{(s)}(X-Y) \nn \\
h_{\a,p}^{[0]}  \rightarrow & h_{\a,p}^{[\epsilon]} \nn \\ =& h_{\a,p}^{[0]}(\tau)+\sum_{g=1}^\infty h_{\a,p}^{[g]}(\tau, \tau_X, \dots,
\tau^{(g)}) 
\label{eq:hamdef}
}
where $h_{\a,p}^{[g]}$, $\mathscr{P}^{\a,\b}_{g,s}$ are polynomials in the jet variables
$\tau^{(i)}=\de_X^{i} \tau$ ($i>0$), graded homogeneous of degrees $g$ and
$g-s+1$ respectively; these vanish for the GW deformation when $g$ is odd. \\

Both deformations come with a canonical system of
coordinates for the jet space of $\cM$ - the  tau-symmetric coordinates $\tau^\alpha$ for
the Gromov--Witten deformation, and the coefficients $(\a,\b)$ of the Lax operators of
\cref{eq:Aop,eq:Bop,eq:Lbigrad} in the deformation by lattice
interpolation. \cref{conj:gwis} can then be stated as the
existence of an $\epsilon$-dependent Poisson morphism which matches the
Poisson structures and the Hamiltonian densities, up to total derivatives, of
the two deformations. Such morphism, if it
exists, should
take the form of an element of the {\it polynomial Miura group} of
transformations of the form \cite{dubrovin:2001}
\eq{
(\a, \b) \rightarrow \tau(\a, \b)+\sum_{g> 0}\epsilon^g \mathscr{F}_{[g]}(\a,\b,
\a_x, \beta_x, \dots, \a^{(g)}, \b^{(g)}).
\label{eq:miuragen}
}
The leading order in $\epsilon$ of the sought-for Miura transformation is just the change of variables to flat
coordinates given by \cref{eq:Aop,eq:Bop,eq:lambda,eq:newcoords}. We can then rephrase
\cref{conj:gwis} as follows:\\
{\bf \cref{conj:gwis} (reloaded).} {\it There exists a polynomial Miura transformation,
  \cref{eq:miuragen}, matching the GW and the $(a,b)$ RR2T
  deformations  to all orders of the dispersive expansion.}

\subsubsection{The genus one case - strategy of the proof}

On the RR2T side, we have all the ingredients that are needed to compute the dispersive deformation
of the Principal Hierarchy: all we have to do is to take the spatial interpolation of
\cref{poisson2dtoda,eq:todaham}. On the other hand, closed-form
expressions for the Gromov--Witten dispersive deformation of the Poisson
bracket and the Hamiltonians from Givental's formula require control to all
orders of the
steepest-descent asymptotics of the oscillating integrals of $\cM$, which is
typically out of computational
reach\footnote{An alternative approach, which would lead to a proof of
  \cref{conj:gwis} sidestepping the issue of the
  Hamiltonian structure, would be to derive the Hirota bilinear equations for
the RR2T directly from Givental's formula - an approach successfully pioneered
by Milanov and Tseng \cite{MR2309158,MR2433616} for the extented bigraded Toda hierarchy. Unfortunately, the fact that we are dealing with the
dual Frobenius structure hampers a straightforward generalization to the case
at hand.}. However, a workaround to this problem exists in genus one,
corresponding to the
$\cO(\epsilon^2)$ approximation. In this case, the rational Miura
transformation \cite{Dubrovin:1997bv}
\eq{
\tau_\a(x) \to \tau_\a(x)+\frac{\epsilon^2}{24} \frac{\de^2}{\de_x
  \de_{t^{\a,0}}} \l(\log\det M + G(\tau)\r),
\label{eq:g1qm}
}
where
\eq{
M_{\a,\b}=c_{\a\b\gamma}\tau^\gamma_x, \quad c_{\a\b\gamma}=\de^3_{\a\b\gamma} F(\tau),
}
deforms the Principal Hierarchy associated to quantum cohomology to the
$\cO(\epsilon^2)$ truncation of the full higher genus hierarchy; here
$F$ and $G$ denote respectively the genus 0 and 1 primary Gromov--Witten
potential. Dubrovin--Zhang show \cite{Dubrovin:1997bv,dubrovin:2001} that the associated tau function satisfies the
genus one topological recursion relations, and it restricts (by construction)
on the small phase space to the primary Gromov--Witten potential.\\

As all the ingredients in \cref{eq:g1qm} are explicitly known by localization
in our case, the proof of \cref{conj:gwis} at order $\cO(\epsilon^2)$
becomes practically feasible. Our strategy to prove it can be structured in the following four
steps.\\

\begin{description}
\item[Step 1] 
Compute the deformation of the Poisson structure and the Hamiltonian densities
on the phase space of the
Principal Hierarchy from the quasi-Miura transformation, \cref{eq:g1qm}.
\item[Step 2] 
Compute the reduction of the second Poisson structure for the 2D-Toda lattice on the phase
space of the $(a,b)$ RR2T, from \cref{poisson2dtoda}, and the associated Toda
Hamiltonian densities, from \cref{eq:todaham}. Interpolate and expand in the
lattice spacing to $\cO(\epsilon^2)$.
\item[Step 3] 
Find a family of Miura transformations matching the deformed Poisson tensors
of Steps 1 and 2.
\item[Step 4]
Find a Miura group element such that the Hamiltonian densities agree after pull-back, up to total derivatives.
\end{description}

This method of proof can be automatized for given $(a,b)$ and verified
symbolically; a parametric statement in $(a,b)$ hinge on performing Step~1 (in
particular the computation of \cref{poisson2dtoda} on the factorization locus)
parametrically in these two variables. The relevant computer code is
available upon request.

\begin{rmk}
A priori there is no guarantee that a Miura group element satisfying Steps~3-4
exists. However, solutions to Step~3 are guaranteed to exist by the vanishing
of the loop space Poisson cohomology in degree 1 and 2, as soon as $\cM$ has
trivial topology \cite{MR1885831,MR2105635,dubrovin:2001}: in this case there are Miura group elements
$(\mathscr{F}_{\rm RR2T}, \mathscr{F}_{\rm GW})$ such that the deformed Poisson brackets
are trivialized to their $\epsilon=0$ limit,
\eq{
\mathscr{F}_{\rm RR2T}^* \{, \}^{[\epsilon]}_{\rm RR2T} =  \{, \}^{[0]} = \mathscr{F}_{\rm GW}^* \{, \}^{[\epsilon]}_{\rm GW},
}
to all orders in $\epsilon$. Furthermore, such Miura group elements are far from unique: for any
formal $\epsilon$-series $K$ with values in graded-homogeneous differential polynomials, 
\beq
K=\sum_{g\geq 0}\epsilon^g K_{[g]}(\tau, \dots, \tau^{(g)}), \quad \deg K_{[g]}=g
\eeq
composing $\mathscr{F}_{\rm RR2T}^*$, $\mathscr{F}_{\rm GW}^*$ from the left with the time-$\epsilon$
canonical transformation,
\eq{
\tau^\a \to \tau^\a+\sum_{g > 0} \frac{\epsilon^g}{g!} \overbrace{\l\{K,\l\{K,
{\dots}, \l\{K,\tau^\a\r\}^{[0]}\r\}^{[0]}\r\}^{[0]}}^{g~{\rm times}}
\label{eq:Kcan}
}
leaves $\{,\}^{[0]}$ invariant to all orders in $\epsilon$. Proving Step~4 amounts then to show that there
exists (at least) one such $K$ to $\cO(\epsilon^2)$ such that the Toda-deformed Hamiltonians pull
back to the GW-deformed ones under composition.
\end{rmk}

\begin{rmk}
In fact, when it comes to Step~4 it is sufficient to show that the two
deformations agree on a {\it single} Hamiltonian $\bar H^{[\epsilon]}$. Once this is done, the
involutivity condition with the perturbed Hamiltonian,
\eq{
\{ \bar H^{[\epsilon]}, H_{\a,p}^{[\epsilon]} \}=0,
}
admits, order by order in $\epsilon$, a unique solution for the dispersive
deformation of the Hamiltonian densities in \cref{eq:hamdef} \cite{MR2462355}. The simplest
choice is to pick $\bar H^{[0]}$ to be the dispersionless limit of the Toda
Hamiltonian given by $\tr L_1$, i.e.,
\beq
\bar H^{[0]}=\int_{S^1}\Res_{z=0}\lambda(z)\frac{\rd z}{z},
\eeq
with the RR2T and GW perturbations computed from \cref{eq:Aop,eq:Bop} and
\cref{eq:g1qm} respectively.
\label{rmk:trl}
\end{rmk}

\begin{rmk}
A further simplification in the computations comes from the fact that it is
sufficient to prove \cref{conj:gwis} for the genus one deformation of the
Principal Hierarchy with $G=0$; switching $G$ - the elliptic GW potential - to
an arbitrary function on the small phase space amounts to composing the result with an explicit,
polynomial Miura group element. This simplifies considerably the proof of \cref{conj:gwis}.
\label{rmk:G=0}
\end{rmk}

\subsubsection{Step 1}

We start first with the following technical 
\begin{lem}[\cite{Dubrovin:1997bv}]
The genus $1$ topological deformation of the principal hierarchy associated to a semi-simple Frobenius manifold with potential $F$, flat coordinates $\tau^1,\ldots,\tau^N$ and flat metric $\eta$ is Miura-equivalent, up to higher genera, to the following deformation of the Poisson structure:
\eq{
\label{g1DZPoisson}
\begin{split}
\{\tau^\alpha(x),\tau^\beta(y)\}^{[\epsilon]}_{\rm GW}=&\eta^{\alpha\beta} \delta'(x-y) + \frac{\epsilon^2}{24} \left(c^{\alpha \beta \mu}_\mu(\tau(x)) + c^{\alpha \beta \mu}_\mu(\tau(y)) \right) \delta'''(x-y)\\
&-\frac{\epsilon^2}{24}\left(\partial_x\left(c^{\alpha \beta \mu}_\mu(\tau(x))\right) + \partial_y\left(c^{\alpha \beta \mu}_\mu(\tau(y))\right) \right) \delta'(x-y)+O(\epsilon^4) 
\end{split}
}
and Hamiltonian densities:
\eq{
\label{g1DZHamiltonian}
\begin{split}
h_{\beta,p}^{\rm GW} = & h_{\beta,p}^{[0]}  \\ &+\frac{\epsilon^2}{24} \left( \frac{\partial h_{\beta,p-1}^{[0]}}{\partial u^\zeta} \left(c^\zeta_{\nu\gamma}c^{\mu\nu}_{\alpha\mu}-c^\zeta_{\mu\nu\alpha}c^{\mu\nu}_{\gamma}\right)- \frac{\partial h_{\beta,p-2}^{[0]}}{\partial u^\zeta} c^\zeta_{\delta\sigma} c^{\sigma\mu}_\mu c^\delta_{\alpha\gamma}\right) \tau^\alpha_x \tau^\gamma_x +O(\epsilon^4)\\
\end{split}
}
where $c_{\alpha\beta\gamma}$ and $c_{\alpha\beta\gamma\delta}$ denote the third and fourth derivatives of $F$, respectively, and the indices are raised and lowered by $\eta$.
\end{lem}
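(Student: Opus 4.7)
The plan is to follow Dubrovin--Zhang and use the quasi-Miura transformation \cref{eq:g1qm} to conjugate the dispersionless Poisson bracket \cref{eq:dlessbracket} and the Hamiltonian densities $h_{\b,p}^{[0]}$ of the Principal Hierarchy to their genus-one deformed counterparts. By \cref{rmk:G=0} we may set $G\equiv 0$ without loss of generality. Writing
\[
\tilde\tau^\a = \tau^\a + \tfrac{\e^2}{24}\, W^\a, \qquad W^\a \triangleq \eta^{\a\sigma}\de_x\de_\sigma \log\det M, \qquad M_{\a\b} = c_{\a\b\g}\tau^\g_x,
\]
the deformed objects are simply the push-forward of $\{,\}^{[0]}$ and of $h_{\b,p}^{[0]}$ along $\tau \mapsto \tilde\tau$, retained to order $\e^2$.

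\textbf{Poisson structure.} For \cref{g1DZPoisson} I would expand
\[
\{\tilde\tau^\a(x),\tilde\tau^\b(y)\}^{[\e]} = \eta^{\a\b}\delta'(x-y) + \tfrac{\e^2}{24}\l(\{\tau^\a(x), W^\b(y)\}^{[0]} + \{W^\a(x),\tau^\b(y)\}^{[0]}\r) + \cO(\e^4),
\]
and evaluate the two dispersionless brackets via the standard Hamiltonian rule $\{\tau^\a(x), f(\tau(y),\tau_y(y),\dots)\}^{[0]} = \sum_{k\geq 0} \eta^{\a\mu}\de_x^k (\de f/\de \tau^{\mu,(k)})\, \delta(x-y)$, using Jacobi's formula $\de\log\det M = \tr(M^{-1}\de M)$ to unravel the derivatives of $\log\det M$. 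Collecting coefficients of $\delta'''$ and $\delta'$ from the leading and subleading jet orders and invoking the WDVV equation to cancel the $M^{-1}$ contributions in favour of the trace combination $c^{\a\b\mu}_\mu$ reproduces \cref{g1DZPoisson}. The skew-symmetric structure in $x\leftrightarrow y$ is guaranteed by construction, since the pull-back of a Poisson bracket is Poisson.

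\textbf{Hamiltonian densities.} For \cref{g1DZHamiltonian} I would Taylor-expand $h_{\b,p}^{[0]}(\tilde\tau)$ to order $\e^2$, producing a correction proportional to $\de_\a h_{\b,p}^{[0]}\cdot W^\a$. Using the Principal Hierarchy recursion $\de_\a h_{\b,p}^{[0]} = c^\mu_{\a\nu}\de_\mu h_{\b,p-1}^{[0]}\,\tau^\nu$-insertions and integrating by parts twice modulo total $x$-derivatives converts the operator $\de_x\de_\sigma\log\det M$ inside $W^\a$ into a quadratic form in $\tau_x$. A further application of WDVV and of the recursion (to bring out the $h^{[0]}_{\b,p-2}$ term from the second derivative of $h^{[0]}_{\b,p}$) reorganizes the result into the structure $(c^\zeta_{\nu\g}c^{\mu\nu}_{\a\mu} - c^\zeta_{\mu\nu\a}c^{\mu\nu}_\g)\tau^\a_x\tau^\g_x$ coupled to $\de h_{\b,p-1}^{[0]}/\de \tau^\zeta$, plus the $c^\zeta_{\d\sigma} c^{\sigma\mu}_\mu c^\d_{\a\g}$ piece coupled to $\de h_{\b,p-2}^{[0]}/\de \tau^\zeta$, exactly as in \cref{g1DZHamiltonian}.

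\textbf{Main obstacle.} The central difficulty is that $\log\det M$ and its derivatives involve $M^{-1}$, so the quasi-Miura map is rational rather than polynomial, and a priori the deformed bracket and Hamiltonians are not differential polynomials. The input that rescues polynomiality is semi-simplicity: on the semisimple locus $M$ diagonalizes in canonical coordinates $u_i$ to $\diag(\eta_{ii}(u)\, \de_x u_i)$, where $M^{-1}$ is explicit, and the whole calculation can be carried out there in manifestly rational form. Pulling back to flat coordinates, the transition identities $\sum_i \eta_{ii}^{-1}(\de_\a u_i)(\de_\b u_i) = \eta^{\a\b}$ and their third-order analogues — all equivalent to WDVV in canonical form — organize the $M^{-1}$ contributions into the trace expressions $c^{\a\b\mu}_\mu$ and $c^\zeta_{\d\sigma}c^{\sigma\mu}_\mu$ displayed in the statement, and the rational dependence cancels. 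This is precisely the content of the Dubrovin--Zhang genus-one polynomiality theorem, which is what the lemma invokes.
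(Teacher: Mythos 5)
First, a point of reference: the paper does not prove this lemma — it is imported verbatim from Dubrovin--Zhang \cite{Dubrovin:1997bv} and used as a black box — so there is no in-paper argument to compare against. Your sketch does follow the route that reference actually takes (push the dispersionless bracket and the densities forward along the genus-one quasi-Miura transformation \cref{eq:g1qm} with $G=0$, then use WDVV and semi-simplicity to identify a polynomial normal form), so at the level of strategy it is the right reconstruction.

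There is, however, a concrete error that would derail the computation as written. You set $W^\alpha=\eta^{\alpha\sigma}\partial_x\partial_\sigma\log\det M$ with $\partial_\sigma=\partial/\partial\tau^\sigma$, but the transformation \cref{eq:g1qm} involves $\partial_{t^{\sigma,0}}$, the evolutionary vector field of the $\sigma$-th primary flow, which acts on a differential polynomial $f$ by $\partial_{t^{\sigma,0}}f=\sum_{s\geq 0}\frac{\partial f}{\partial \tau^{\gamma,(s)}}\,\partial_x^{s}\bigl(c^{\gamma}_{\sigma\mu}\tau^{\mu}_x\bigr)$, not by $\partial f/\partial\tau^{\sigma}$. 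Since $\log\det M$ depends on the jets through $M_{\alpha\beta}=c_{\alpha\beta\gamma}\tau^{\gamma}_x$, the two differ by the terms $\mathrm{Tr}\bigl(M^{-1}\,\partial M/\partial\tau^{\gamma}_x\bigr)\,\partial_x\bigl(c^{\gamma}_{\sigma\mu}\tau^{\mu}_x\bigr)$, and these are precisely the contributions needed for the $M^{-1}$'s to reorganize into the traces $c^{\alpha\beta\mu}_{\mu}$; with your $W^\alpha$ the push-forward does not collapse to \cref{g1DZPoisson}--\cref{g1DZHamiltonian}. A second, lesser issue is that the phrase ``Miura-equivalent, up to higher genera'' in the statement is doing work your sketch elides: the raw push-forward under \cref{eq:g1qm} is rational in the jets, and the displayed polynomial normal form is reached only after composing with a further polynomial Miura transformation of the type \cref{Miura} (which the paper says is computed from the $G$-function, cf. \cref{rmk:G=0}); your closing paragraph instead asserts that the rational dependence already cancels in the direct computation and then appeals to ``the Dubrovin--Zhang genus-one polynomiality theorem'' — that is, at the one genuinely hard step you invoke the result being proved. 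Given that the paper itself treats the lemma as an external input this is defensible, but it should be flagged as a citation rather than folded into the argument.
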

As per \cref{rmk:G=0}, the Miura-equivalence appearing in the above theorem is a change of coordinates of the form
\begin{equation}\label{Miura}
\tilde\tau^\alpha = \tau^\alpha 
+\epsilon^2 \left(A^{\a}_{\mu\nu}(\tau)\tau^\mu_x\tau^\nu_x+B^{\a}_\mu(\tau)\tau^\mu_{xx}\right) + \cO(\epsilon^4)
\end{equation}
which can be explicitly computed in terms of the $G$-function of the Frobenius
manifold. 

\begin{rmk}
\cref{g1DZHamiltonian} expresses the dispersive deformation of the $p^{\rm
  th}$-Taylor coefficient of the canonically-normalized flat sections of the
Dubrovin connection for $Y_{a,b}$. However, by \cref{rmk:trl}, we will be
mainly interested in deforming the dToda flow generated by the residue of
the Lax symbol at infinity: since we are dealing with the second structure
$\cM_{a,b,\nu}$ on $\cM$, this is equivalent to the twisted period around a
Pochhammer loop encirling 1 and $\infty$,
\cref{eq:periodmap}, with the parameter $\zeta$ in \cref{eq:lambdazeta} set
equal to one. This little twist in the story amounts to resum
$\zeta^p h_{\b,p}^{\rm GW}$ w.r.t. $p$ in \cref{g1DZHamiltonian}, and then
evaluating the result at $\zeta=1$, which gives
\eq{
\label{g1DZtrl}
\begin{split}
\overline{h}^{\rm GW} = & \overline{h}^{[0]} +\frac{\epsilon^2}{24} 
\frac{\partial \overline{h}^{[0]}}{\partial \tau^\rho}
\left(c^\rho_{\nu\gamma}c^{\mu\nu}_{\alpha\mu}-c^\rho_{\mu\nu\alpha}c^{\mu\nu}_{\gamma}-
c^\rho_{\delta\sigma} c^{\sigma\mu}_\mu c^\delta_{\alpha\gamma}\right)
\tau^\alpha_x \tau^\gamma_x +O(\epsilon^4) \\
\end{split}
}
\end{rmk}

\begin{exa}[$(a,b,m)=(1,1,0)$]
This is the case of the Ablowitz--Ladik hierarchy. Here, \cref{eq:FGW,g1DZPoisson} together imply that the deformation of
the Poisson bracket is trivial at $\cO(\epsilon^2)$,
\eq{
\l\{ \tau^\a(x),\tau^\beta(y)\r\}^{[\epsilon]}_{\rm GW}=-\mu_2^{-2}\delta^{\a+\b,1}\delta'(x-y)+\cO(\epsilon^4),
}
whereas the first Hamiltonian density gets corrected as
\ea{
\bar h^{\rm GW}= & 
\re^{-\tau_0/\mu_2}(1-\re^{\tau_1})+
\frac{\epsilon^2 \re^{\tau_1-\frac{\tau_0}{\mu _2}}}{24 \mu _2 \left(\re^{\tau_1}-1\right)}
\Bigg[2 \mu _2
  \left(\left(\re^{\tau_1}-1\right) \left(\tau_0'(x)\right)^2+\re^{\tau_1}
  \left(\tau_1'(x)\right)^2\right) \nn \\ -& \left(4 \left(\re^{\tau_1}-1\right) \tau_0'(x)-\tau_1'(x)\right) \tau_1'(x)\Bigg]
+\cO(\epsilon^4)\nn \\
}
\end{exa}

\begin{exa}[$(a,b)=(1,2,0)$]
In this case the dispersionless Poisson bracket does get corrected from
\cref{g1DZPoisson}. Setting $\mu_2=1$ for notational simplicity, we find
\eq{
\{\tau^\a(x), \tau^\b(y)\}^{[\epsilon]}_{\rm GW}=\{\tau^\a(x), \tau^\b(y)\}^{[0]}+\epsilon^2 \cT(\tau,
\tau_x, \tau_{xx}) 
\l\{
\bary{rl}
0 & \a=0~{\rm or}~\b=0, \\
1 & \a=\b=1, \\
-2 & (\a,\b)=(2,1), (1,2),\\
4 & \a=\b=2,
\eary
\r.
}
with
\ea{
\cT(\tau, \tau_x, \tau_{xx})= & \frac{\re^{\tau _2(x)}}{12 \left(\re^{\tau
    _2(x)}-1\right)^4} \Bigg[
\left(\re^{\tau _2(x)}-1\right) \left( 2\delta^{(3)}(x-y)
+3  \left(\re^{\tau _2(x)}+1\right) \tau _2'(x) \delta
''(x-y) \right) \nn \\
+ & 
\left(\left(4 \re^{\tau_2(x)}+\re^{2 \tau_2(x)}+1\right) \tau
_2'(x){}^2-\left(\re^{2 \tau_2(x)}-1\right) \tau_2''(x)\right) \delta '(x-y).
}
The first dToda density reads here
\eq{
\bar h^{[0]} = \re^{\tau _0(x)} \left(\re^{\tau _1(x)}+\re^{\tau _1(x)+\tau
  _2(x)}-1\right),
}
and its full $\cO(\epsilon^2)$ GW-deformation can be read off from \cref{g1DZtrl}.
\end{exa}

\subsubsection{Step 2}

This step consists of a straightforward application of the
$\epsilon$-interpolation to \cref{poisson2dtoda,eq:todaham}. For the sake of
readability, we exemplify it in the two instances considered above.

\begin{exa}[$(a,b,m)=(1,1,0)$]
As opposed to the GW-deformation, the RR2T-deformed Poisson bracket receives
in this case corrections to all (even and odd) orders in $\epsilon$, as is
apparent from \cref{eq:pbavm}. The continuous interpolation leads to
\ea{
\{\a(x),\a(y)\}^{[\epsilon]}_{\rm RR2T} =& 0,\\
\{ \log \a(x) , \log \b(y) \}^{[\epsilon]}_{\rm RR2T} =& \epsilon^{-1}\l(\delta
(x-y+\epsilon) - \delta(x-y)\r) \nn \\ =& \delta'(x-y)+\frac{\epsilon}{2}\delta''(x-y)+\frac{\epsilon^2}{6}+\cO(\epsilon^3),\\
\{ \log \b(x) , \log \b(y) \}^{[\epsilon]}_{\rm RR2T} =& \epsilon^{-1}\l(\delta
(x-y+\epsilon) - \delta(x-y-\epsilon)\r)\nn \\ =& 2\delta'(x-y)+\frac{\epsilon^2}{3}\delta'''(x-y)+\cO(\epsilon^4).
}
In the same vein, \cref{eq:todaham} for $i=1$ gives
\ea{
\bar h^{\rm RR2T}=\a(x)-\b(x+\epsilon)= & \a(x)-\b(x)+\epsilon
\b'(x)-\frac{\epsilon^2}{2}\b''(x)+\cO(\epsilon^3), \nn \\
= & \a(x)-\b(x)+ \hbox{(total derivative)}.
}
\end{exa}

\begin{exa}[$(a,b,m)=(1,2,0)$]

\cref{poisson2dtoda} computes the full-dispersive Poisson bracket on the factorization locus as
\eq{
\begin{split}
\{\alpha_1(x),\alpha_1(y)\}^{[\epsilon]}_{\rm RR2T} &= 0\\
\epsilon \{\alpha_1(x),\b_1(y)\}^{[\epsilon]}_{\rm RR2T} &=  \b_1(y) \left(\alpha _1(y-\epsilon ) \delta (x-y+\epsilon )-\alpha _1(y) \delta (x-y)\right)\\
\epsilon \{\alpha_1(x),\b_2(y)\}^{[\epsilon]}_{\rm RR2T} & = \b_2(y) \left(\alpha _1(y-2 \epsilon ) \delta (x-y+2 \epsilon )-\alpha
   _1(y) \delta (x-y)\right) \\
\epsilon \{\b_1(x),\b_1(y)\}^{[\epsilon]}_{\rm RR2T} &=  \left(\b_2(y+\epsilon )-\b_1(y) \b_1(y+\epsilon )\right) \delta
   (x-y-\epsilon ) \\ &+\left(\b_1(y) \b_1(y-\epsilon )-\b_2(y)\right) \delta (x-y+\epsilon ) \\
\epsilon \{\b_1(x),\b_2(y)\}^{[\epsilon]}_{\rm RR2T} &= \b_2(y) \left(\b_1(y-2 \epsilon ) \delta (x-y+2 \epsilon )-\alpha
   _2(y+\epsilon ) \delta (x-y-\epsilon )\right) \\
\epsilon \{\b_2(x),\b_2(y)\}^{[\epsilon]}_{\rm RR2T} &=\b_2(y) \bigg[(-\b_2(y+2 \epsilon )
\delta (x-y-2 \epsilon )-\b_2(y+\epsilon ) \delta (x-y-\epsilon ) \\ &+ \alpha
   _3(y-\epsilon ) \delta (x-y+\epsilon )+\b_2(y-2 \epsilon ) \delta (x-y+2 \epsilon )\bigg] 
\end{split}
}
It should be noticed that the Poisson bracket is not logarithmically constant
in these coordinates. The full-dispersive deformation is given by
Taylor-expanding the r.h.s. in $\epsilon$. As before, the full-dispersive first
Hamiltonian is here given as
\ea{
\bar h^{\rm RR2T}=\a_1(x)-\b_1(x+\epsilon)= & \a_1(x)-\b_1(x)+ \hbox{(total derivative)}.
}
\end{exa}

\subsubsection{Step 3}

The next step is to match the Poisson structures $\{,\}^{[\epsilon]}_{\rm GW}$ and
$\{,\}^{[\epsilon]}_{\rm RR2T}$ computed in Steps~1-2. We will do this by explicitly
computing the trivializing polynomial Miura transformation that transforms
them back to their undeformed expression. We start from the GW-deformation.
\begin{lem}
The Miura transformation
\begin{equation}
\tau^\alpha \mapsto \tau^\alpha - \frac{\epsilon^2}{24} \left(\partial_x^2 c^{\alpha\mu}_\mu \right) + O(\epsilon^4)
\end{equation}
transforms the Poisson bracket (\ref{g1DZPoisson}) to
\begin{equation}\label{standardPoisson}
\{\tau^\alpha(x),\tau^\beta(y)\}=\eta^{\alpha\beta} \delta'(x-y) +O(\epsilon^4)
\end{equation}
and the Hamiltonian densities (\ref{g1DZHamiltonian}) to
\begin{equation}\label{g1DZHamiltoniant}
h_{\beta,p} =  h_{\beta,p}^{[0]}  -\frac{\epsilon^2}{24} \left( \frac{\partial h_{\beta,p-1}^{[0]}}{\partial \tau^\zeta} c^\zeta_{\mu\nu\alpha}c^{\mu\nu}_{\gamma}+ \frac{\partial h_{\beta,p-2}^{[0]}}{\partial \tau^\zeta} c^\zeta_{\delta\sigma} c^{\sigma\mu}_\mu c^\delta_{\alpha\gamma}\right) \tau^\alpha_x \tau^\gamma_x +O(\epsilon^4)
\end{equation}
\end{lem}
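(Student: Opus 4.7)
The plan is to verify both claims by direct computation of the Miura pull-back. Writing $F^\alpha = -\frac{1}{24}\partial_x^2 c^{\alpha\mu}_\mu$ so that $\tilde\tau^\alpha = \tau^\alpha + \epsilon^2 F^\alpha$, bilinearity of the Poisson bracket gives at order $\epsilon^2$
\[
\{\tilde\tau^\alpha(x), \tilde\tau^\beta(y)\} = \{\tau^\alpha(x), \tau^\beta(y)\}^{[\epsilon]}_{\rm GW} + \epsilon^2\bigl(\{F^\alpha(x), \tau^\beta(y)\}^{[0]} + \{\tau^\alpha(x), F^\beta(y)\}^{[0]}\bigr) + O(\epsilon^4),
\]
so only the undeformed bracket $\{\tau^\rho(x),\tau^\sigma(y)\}^{[0]} = \eta^{\rho\sigma}\delta'(x-y)$ enters the correction. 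Pulling the $\partial_x^2$ (resp.~$\partial_y^2$) outside the bracket and applying the chain rule reduces the mixed terms to $\partial_x^2\bigl[c^{\alpha\beta\mu}_\mu(\tau(x))\delta'(x-y)\bigr]$ and $\partial_y^2\bigl[c^{\alpha\beta\mu}_\mu(\tau(y))\delta'(x-y)\bigr]$, where the total symmetry of the 4-tensor $c_{\rho\sigma\mu\nu} = \partial_\rho\partial_\sigma\partial_\mu\partial_\nu F$ inherited from WDVV is used. Leibniz expansion of these distributional derivatives, together with the identity $A(y)\delta^{(n)}(x-y) = \sum_k \binom{n}{k}\partial_x^k A(x)\delta^{(n-k)}(x-y)$ applied to the analogous re-expression of the terms in \cref{g1DZPoisson}, produces, order by order in $n=1,2,3$, identical $\delta^{(n)}(x-y)$ contributions with opposite signs, so the $\epsilon^2$ parts cancel and what survives is \cref{standardPoisson}.

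For the Hamiltonian densities the plan is analogous. Invert the Miura to order $\epsilon^2$ as $\tau^\alpha = \tilde\tau^\alpha + \frac{\epsilon^2}{24}\partial_x^2 c^{\alpha\mu}_\mu(\tilde\tau) + O(\epsilon^4)$ and substitute into $h^{\rm GW}_{\beta,p}$. Taylor-expanding the $\epsilon^0$ piece picks up an additional $\frac{\epsilon^2}{24}(\partial_\alpha h^{[0]}_{\beta,p})\,\partial_x^2 c^{\alpha\mu}_\mu$. Modulo total $x$-derivatives (which are inert at the level of the local functional $\int h\,\rd x$), a single integration by parts rewrites this as $-\frac{\epsilon^2}{24}(\partial_\mu\partial_\nu h^{[0]}_{\beta,p})\,(\partial_\rho c^{\alpha\mu}_\mu)\,\tilde\tau^\nu_x \tilde\tau^\rho_x$. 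The crucial step is then to invoke the Principal Hierarchy recursion
\[
\partial_\mu\partial_\nu h^{[0]}_{\beta,p} = c^\zeta_{\mu\nu}\,\partial_\zeta h^{[0]}_{\beta,p-1},
\]
i.e.\ the component form of the flatness condition for the deformed connection \cref{eq:defEC} written on Hamiltonian densities. Substituting converts the factor $\partial_\mu\partial_\nu h^{[0]}_{\beta,p}$ into $(\partial_\zeta h^{[0]}_{\beta,p-1})c^\zeta_{\mu\nu}$ with an extra contraction against structure constants, producing exactly the term $(\partial_\zeta h^{[0]}_{\beta,p-1})\,c^\zeta_{\nu\gamma}c^{\mu\nu}_{\alpha\mu}\,\tilde\tau^\alpha_x\tilde\tau^\gamma_x$ that distinguishes \cref{g1DZHamiltonian} from \cref{g1DZHamiltoniant}; the $h^{[0]}_{\beta,p-2}$ summand is left untouched and reappears verbatim in \cref{g1DZHamiltoniant}.

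The hardest part will be the combinatorial bookkeeping: the 4-tensor contractions, the repeated use of the total symmetry of $c_{\rho\sigma\mu\nu}$, and the distributional manipulations needed to match the ``two-point'' form of \cref{g1DZPoisson} against the ``single-point'' form produced by the Miura pull-back. Beyond these, the only substantive inputs are the total symmetry of the fourth derivative of the prepotential and the flatness of the deformed connection, both of which are intrinsic to the Frobenius structure -- showing that the trivialization at order $\epsilon^2$ is a purely kinematical statement about the underlying flat geometry rather than about any specific choice of Hamiltonian.
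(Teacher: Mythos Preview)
Your approach is correct and essentially the same as the paper's: both compute the Miura pull-back of the Poisson tensor via the standard transformation rule (you via bilinearity of the bracket, the paper via the operator identity $\tilde P^{\alpha\beta} = (L^*)^\alpha_\mu \circ P^{\mu\nu} \circ L^\beta_\nu$) and then Taylor-expand the Hamiltonian densities at the shifted point. You are in fact more explicit than the paper on the Hamiltonian side, correctly spelling out the integration by parts modulo total $x$-derivatives and the use of the Principal Hierarchy recursion $\partial_\mu\partial_\nu h^{[0]}_{\beta,p} = c^\zeta_{\mu\nu}\,\partial_\zeta h^{[0]}_{\beta,p-1}$ needed to convert the raw Taylor shift of $h^{[0]}_{\beta,p}$ into the stated $h^{[0]}_{\beta,p-1}$-form --- a step the paper's one-line proof leaves implicit.
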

\begin{proof}
The proof is an immediate consequence of the formula $\tilde{P}^{\alpha\beta} = (L^*)^\alpha_\mu \circ P^{\mu\nu} \circ L^\beta_\nu$ for the transformation of the differential operator $P^{\alpha\beta}$ associated to the Poisson bracket, where $(L^*)^\alpha_\mu = \sum_{s\geq 0} \frac{\partial \tilde\tau^\alpha}{\partial \tau^\mu_s} \partial_x^s$ and $L^\beta_\nu = \sum_{s\geq 0} (-\partial_x)^s \circ\frac{\partial \tilde\tau^\beta}{\partial \tau^\nu_s}$, with $\tau^\alpha_s=\partial_x^s \tau^\alpha$. For the Hamiltonians one simply evaluates the functions at the shifted values and performs Taylor's expansion.
\end{proof}

One by-product of the Lemma is that the expression for the deformed Hamiltonian
densities simplifies as well in this Miura-deformed coordinates. On the RR2T
side, we act in the same way - by plugging an arbitrary Miura transformation
that trivializes $\{,\}^{[\epsilon]}_{\rm RR2T}$ to $\cO(\epsilon^2)$ and solving the
ensuing overconstrained differential system.

\begin{exa}[$(a,b,m)=(1,2,0)$]
In this case, a trivialization of the RR2T-deformed Poisson bracket reads,
at $\cO(\epsilon^2)$,
\ea{
\a_1(x)\to & \alpha _1(x)-\frac{1}{2} \epsilon  \alpha _1'(x)+\epsilon ^2 \left(\frac{5}{24} \alpha _1''(x)-\frac{\alpha _1'(x){}^2}{12 \alpha _1(x)}\right)+\cO\left(\epsilon
   ^3\right), \\
 \beta_1(x)\to & \beta_1(x)+\frac{\epsilon ^2}{24 \left(\beta _1(x){}^2-4 \beta
   _2(x)\right){}^2 \beta _2(x){}^2}
\Bigg[2 \beta _1(x){}^5 \beta _2'(x){}^2-\beta _2(x){}^2 \beta _1(x){}^3 \beta
  _1'(x){}^2+\nn \\
-& 14 \beta _2(x) \beta _1(x){}^3 \beta
   _2'(x){}^2+16 \beta _2(x){}^2 \beta _1(x){}^2 \beta _1'(x) \beta _2'(x)-20
\beta _2(x){}^3 \beta _1(x) \beta _1'(x){}^2 \nn \\
+& 32 \beta _2(x){}^3 \beta _1'(x) \beta
   _2'(x)-2 \beta _2(x) \beta _1(x){}^5 \beta _2''(x)+\beta _2(x){}^2 \beta
_1(x){}^4 \beta _1''(x)\nn \\ 
+& 10 \beta _2(x){}^2 \beta _1(x){}^3 \beta _2''(x)+4 \beta_2(x){}^3 \beta
_1(x){}^2 \beta _1''(x)-8 \beta _2(x){}^3 \beta _1(x) \beta _2''(x)
\nn \\
-& 32 \beta _2(x){}^4 \beta _1''(x)\Bigg]+\cO\left(\epsilon ^3\right), \\
\beta_2(x)\to &\beta _2(x)+\frac{1}{2} \epsilon  \beta _2'(x)+\epsilon ^2
\left(\frac{\beta _2'(x){}^2}{4 \beta _2(x)}-\frac{1}{8} \beta
_2''(x)\right)+\cO\left(\epsilon ^3\right),
}
so that in the new variables we have
\ea{
\{\a_1(x),\a_1(y)\}^{[0]}=& 0, \nn \\
\{\a_1(x),\b_1(y)\}^{[0]}= & \alpha _1(x) \delta (x-y) \beta _1'(x)+\alpha _1(x)
\beta _1(x) \delta '(x-y), \nn \\
\{\a_1(x),\b_2(y)\}^{[0]}= & 2 \alpha _1(x) \delta (x-y) \beta _2'(x)+2 \alpha _1(x)
\beta _2(x) \delta '(x-y), \nn \\
\{\b_1(x),\b_1(y)\}^{[0]}= &
\l(2 \beta _1(x)  \beta _1'(x)- \beta _2'(x)\r)\delta(x-y)+2\l( \beta
_1(x){}^2 -\beta_2(x)\r)\delta '(x-y),\nn \\
\{\b_1(x),\b_2(y)\}^{[0]}=&
3 \beta _1(x) \delta (x-y) \beta _2'(x)+3 \beta _1(x) \beta _2(x) \delta
'(x-y),\nn \\
\{\b_2(x),\b_2(y)\}^{[0]}=&
6 \beta _2(x) \delta (x-y) \beta _2'(x)+6 \beta _2(x){}^2 \delta '(x-y).
\label{eq:triv12}
}
Relating now $(\a, \b)$ to $q$ as in \cref{eq:lambda} and composing with
\cref{eq:newcoords} to go to $\tau$-variables returns $\{,\}^{[0]}=\{,\}_\eta$, the dispersionless
Poisson bracket in flat coordinates for the metric $\eta^{(2)}$ of
$\cM_{1,2,0}^{(2)}$. The general Miura group element trivializing $\{,\}^{\rm
  RR2T}$ is obtained by composing \cref{eq:triv12} with \cref{eq:Kcan}, for an 
arbitrary $K$.
\end{exa}

\subsubsection{Step 4}

All is left to do at this stage is to find a canonical generator $K$ such that
$\bar h^{\rm GW}$ matches with $\bar
h^{\rm RR2T}$ in the resulting trivializing coordinate system, up to total
derivatives. The quickest way to do it is as follows: choose $K$ such that 
the transformed $\bar h^{\rm RR2T}$ has 1) no linear term in $\epsilon$ and 2)
no linear terms in $\tau^\a_{xx}$ at $\cO(\epsilon^2)$: this amounts to the
solution of two inhomogeneous linear systems of rank $(a+b)$. Then compose
$\bar h^{\rm RR2T}$ in the resulting coordinate system with a further canonical transformation generated by a differential
polynomial $\tilde K$, with vanishing linear term in $\epsilon$. Now imposing
that the difference of the transformed $\bar h^{\rm RR2T}$ with $h^{\rm GW}$
is a total derivative is equivalent to a rank $\binom{a+b+1}{2}$ linear
system in the derivatives of the components of $\tilde K$; checking
compatibility of the solution then concludes the proof.

\begin{exa}[$(a,b,m)=(1,2,0)$]
Let us see this explicitly at work in the case when $(a,b,m)=(1,2,0)$. The
GW- and RR2T-deformed Hamiltonian density in the coordinates for which the Poisson is in
standard form, \cref{g1DZHamiltoniant}, read here
\ea{
\bar h^{\rm GW}= & \re^{\tau _0} \left(\re^{\tau _1}+\re^{\tau _1+\tau _2}-1\right)-\frac{\epsilon ^2}{24 \left(\re^{\tau _1}-1\right)
 \left(\re^{\tau _2}-1\right){}^2 \left(\re^{\tau _1+\tau
   _2}-1\right)}\nn \\
\times &  \bigg[ \re^{\tau _0+\tau _1+\tau_2} 
\left(\re^{\tau _1}-1\right) \left(3 \re^{\tau _2}
- \re^{2
   \tau _2}+5 \re^{\tau _1+\tau _2}-5 \re^{\tau _1+2 \tau _2}+2 \re^{\tau _1+3
  \tau _2}-4\right) \left(\tau _2'\right){}^2 \nn \\
-& 2 \left(\re^{\tau _2}-1\right) \left(\re^{\tau _1+\tau _2}-1\right)
   \tau _2''\Big)
      +2 \left(\re^{\tau _1}-1\right) \left(\re^{\tau _2}-1\right){}^2
      \left(\re^{\tau _2}+1\right) \left(\re^{\tau _1+\tau _2}-1\right)
      \left(\tau _0'\right){}^2 \nn \\ +& 4
   \left(\re^{\tau _1}-1\right) \left(\re^{\tau _2}-1\right) \left(\re^{\tau _1+\tau _2}-1\right) \left(\left(\re^{2 \tau _2}-1\right) \tau _1'
   +\re^{\tau _2} \left(\re^{\tau _2}-2\right) \tau
   _2'\right) \tau _0'\nn \\ +& \left(\re^{\tau _2}-1\right){}^2 \left(-2 \re^{\tau _1}+\re^{\tau _2}-2 \re^{\tau _1+\tau _2}+2 \re^{2 \left(\tau _1+\tau _2\right)}+2 \re^{2 \tau _1+\tau _2}-2 \re^{\tau
   _1+2 \tau _2}+1\right) \left(\tau _1'\right){}^2  \nn \\ +& 2 \re^{\tau _2} \left(\re^{\tau _1}-1\right) \left(\re^{\tau _2}-1\right) \left(-\re^{\tau _2}-4 \re^{\tau _1+\tau _2}+2 \re^{\tau _1+2 \tau
   _2}+3\right) \tau _1' \tau _2'\bigg]+\cO\left(\epsilon
   ^4\right)\\
\bar h^{\rm Toda} =& 
\re^{\tau _0} \left(\re^{\tau _1}+\re^{\tau _1+\tau _2}-1\right)+\frac{\epsilon}{2} \re^{\tau _0}   \left(\left(3 \re^{\tau _1}+3 \re^{\tau _1+\tau _2}-2\right) \tau _0'+3
   \re^{\tau _1} \left(\left(\re^{\tau _2}+1\right) \tau _1'+\re^{\tau _2}
   \tau _2'\right)\right)\nn \\
+& \frac{\epsilon ^2 \re^{\tau _0}}{24 \left(\re^{\tau _2}-1\right){}^2}  \bigg[\left(\re^{\tau _2}-1\right){}^2\left(3 \left(9 \re^{\tau _1}+9 \re^{\tau
   _1+\tau _2}-4\right)  \left(\tau _0'\right){}^2+27 \re^{\tau _1} \left(\re^{\tau _2}+1\right)  \left(\tau
   _1'\right){}^2\r) \nn \\
+& 54 \re^{\tau _1+\tau _2} \left(\re^{\tau _2}-1\right){}^2 \tau _1' \tau _2'+54 \re^{\tau _1} \left(\re^{\tau _2}-1\right){}^2 \tau _0' \left(\left(\re^{\tau
   _2}+1\right) \tau _1'+\re^{\tau _2} \tau _2'\right)+30 \re^{\tau _1+\tau
  _2} \left(\tau _2'\right){}^2 \nn \\ 
-& 51 \re^{\tau _1+2 \tau _2} \left(\tau _2'\right){}^2+27 \re^{\tau
   _1+3 \tau _2} \left(\tau _2'\right){}^2+10 \big(3 \re^{\tau _1} +2 \re^{\tau _2}- \re^{2 \tau _2} -3 \re^{\tau _1+\tau _2} -3 \re^{\tau
   _1+2 \tau _2} \nn \\ 
+& 3 \re^{\tau _1+3 \tau _2} -1\big) \tau_0''
+10 \big(3 \re^{\tau _1} \tau _1''-3 \re^{\tau _1+\tau _2} \tau _1''-3
\re^{\tau _1+2 \tau _2} \tau_1''+3 \re^{\tau _1+3 \tau _2}\big) \tau _1'' \nn
\\ +& \big( 2\re^{\tau _1} +30 \re^{\tau _1+\tau _2}-60 \re^{\tau _1+2 \tau _2} +28 \re^{\tau _1+3 \tau _2}\big) \tau_2''\bigg]
+\cO\left(\epsilon^3\right)
}
Let us first get rid of the linear term in $\epsilon$ in $\bar h^{\rm
  RR2T}$, as well as of the terms linear in the second derivatives. This is accomplished by an $\cO(\epsilon^2)$ transformation
generated by
\ea{
K_1= &\left(-\tau _0-\tau _1-\frac{\tau _2}{2}\right)  -\frac{\epsilon e^{-\tau _2} }{24 \left(e^{\tau
   _2}-1\right)} 
\bigg[ \left(-26 e^{\tau _2}+26 e^{2 \tau _2} +e^{3 \tau _2} -1\r)\tau
  _0'\nn \\ -& 10\l( e^{\tau _2} - e^{2 \tau _2}\r)\tau_1' +\l(-8 e^{\tau _2} +2 e^{2 \tau _2} -e^{3 \tau _2} -1\r)\tau_2'\bigg]+\cO\left(\epsilon ^2\right)
}
Composing this with a canonical transformation generated by
\eq{
K_2 = \epsilon\l(K_2^{(0)}(\tau) \tau^0_x+ K_2^{(1)}(\tau) \tau^1_x+ K_2^{(2)}(\tau) \tau^2_x\r)
}
such that $h^{\rm RR2T}-h^{\rm GW} = \de_x f$ gives a system of six linear
equations in
the $\tau$-derivatives of $K_2^{(i)}$, which is solved by 
\ea{
\de_{\tau^0}K_2^{(1)} = & \de_{\tau^1}K_2^{(0)}-\frac{e^{\tau _1}+e^{\tau _1+\tau _2}-2}{24
   \left(e^{\tau _1}-1\right) \left(e^{\tau _1+\tau _2}-1\right)},\nn \\
\de_{\tau^0}K_2^{(2)} = & \de_{\tau^2}K_2^{(0)}-\frac{e^{-\tau _2} \left(-e^{\tau _2}+e^{2 \tau _2}-e^{3 \tau _2}+e^{\tau _1+\tau _2}+e^{\tau _1+4 \tau _2}-1\right)}{24 \left(e^{\tau
   _2}-1\right) \left(e^{\tau _1+\tau _2}-1\right)},\nn \\
\de_{\tau^1}K_2^{(2)} = & \de_{\tau^2}K_2^{(1)}-\frac{e^{\tau _1+\tau _2} \left(e^{\tau _1}+e^{\tau _1+\tau _2}-2\right)}{24 \left(e^{\tau _1}-1\right) \left(e^{\tau _2}-1\right) \left(e^{\tau_1+\tau _2}-1\right)},
}
which is immediately shown to be compatible. 

\end{exa}

\subsection{Further applications}

\cref{prop:twistlaur} has a number of applications for the study of the
gravitational quantum cohomology of $Y_{a,b}$ as well as of its higher
genus Gromov--Witten theory. When $b=0$, these were explored in detail in
\cite{Brini:2013zsa}: we highlight below the main features of their
generalization to arbitrary $(a,b)$.

\subsubsection{Twisted periods and the $J$-function}

A distinguished basis of flat co-ordinates
for the Dubrovin connection is given by the generating function of genus zero one-point
descendent invariants of $Y_{a,b}$, or {\it $J$-function} \cite{MR1653024},
\eq{
J^\alpha_{Y_{a,b}}(\tau,\zeta) \triangleq \frac{\delta^{\alpha,0}}{\zeta}+\tau^\alpha+\zeta \sum_{n\geq
  0}\sum_{\beta\in\bbZ}\frac{1}{n!}\bra \frac{\omega_\a}{1-\zeta \psi}, \tau^{\otimes
  n}\ket_{0,n+1,\beta}^{Y_{a,b}},
\label{eq:Jfun}
}
where $\psi$ is a cotangent line class and the denominator is a formal
geometric series expansion in $\zeta \psi$. Write $i_j : p_j \hookrightarrow
Y_{a,b}$ for the embedding of the $j^{\rm th}$ fixed point into $Y_{a,b}$, and
define $u^{\rm cl}_j \mathbf{1}_j \triangleq i_j^*(\tau^\alpha\omega_\alpha) \in
H_T(\{p_j\})$. The coefficients $u^{\rm cl}_j(\tau)$ are linear functions
$u^{\rm cl}_j=\sum c_{j\alpha} \tau^\alpha$, and they are canonical
co-ordinates for the classical equivariant cohomology algebra of $Y_{a,b}$.
By the Divisor Axiom of Gromov--Witten theory, the coefficients $c_{j\alpha}=\omega_\alpha|_{p_j}$
are local exponents of \cref{eq:defEC} at the Fuchsian point $\mathrm{LR}=\{\tau_\a=-\infty\}$,
and the vector of the localized components $J^j_{Y_{a,b}} \mathbf{1}_j=i^*_j(J^\alpha_{Y_{a,b}}\omega_\a)$
of the $J$-function diagonalizes the monodromy around LR with weights
$c_{j\alpha}\in\bbC$, 
\eq{
J^j_{Y_{a,b}} \sim \zeta \re^{\zeta u_j^{\rm cl}}(1+\cO(\re^{\tau})).
\label{eq:JLR}
}
The asymptotic behavior \cref{eq:JLR} at LR characterizes uniquely the
localized components of the $J$-function as
a flat coordinate system for \cref{eq:defEC}. Knowledge of the monodromy
properties of the twisted periods of \cref{eq:pilaur1,eq:pilaur2} at LR is then
sufficient, by \cref{prop:twistlaur}, to give a closed form expression for the
$J$-functions as a hypergeometric function in exponentiated flat
variables. This can be achieved via an iterated use of the connection formula
at infinity for the Gauss function, as explained in
\cite[Appendix~C]{Brini:2013zsa}. In vector notation, the final result in our case is
\eq{
J_{Y_{a,b}} = (\cA^{(a)} \oplus \cA^{(b)}) \Pi
}
where
\ea{
\cA^{(a)}_{ij} =& \left\{\bary{cl} \re^{\pi\ri i \zeta}
\frac{\zeta \Gamma(1+\zeta\nu-(i+1) \zeta)}{\Gamma(1-\zeta) \Gamma(\zeta\nu-i \zeta)} & i=-j, \\
\re^{-\ri \pi  (\zeta\nu-\zeta (2 j+1))} \frac{\zeta \sin (\pi  \zeta)  \Gamma (1-\zeta\nu+\zeta i) \Gamma (1+\zeta\nu-\zeta (i+1))}{\pi  \Gamma (1-\zeta)}
 & i>-j,  \\
0 & i<-j.
  \eary\right.
\label{eq:matrAa} \\
\cA^{(b)}_{ij} =& \left\{\bary{cl} \re^{-\pi\ri j \zeta}
\frac{\zeta \Gamma(1+\xi+(a+j+1) \zeta)}{\Gamma(1+\zeta) \Gamma(\xi+(j+a) \zeta)} & i=j+a, \\
\re^{-\ri \pi  (\xi+\zeta (2j+1+a))} \frac{\zeta \sin (\pi  \zeta)  \Gamma (1-\xi-\zeta (j+a)) \Gamma (1+\xi+\zeta (a+j+1))}{\pi  \Gamma (1+\zeta)}
 & j>i+a,  \\
0 & j<i+a.
  \eary\right.
\label{eq:matrAb}
}
\\

In the same vein, let $H_{{\rm orb},T}(X_{a,b})$ denote the $T$-equivariant
Chen--Ruan cohomology of $[X_{a,b}]$. This has two torus fixed orbi-points
$[p^{(a)}]$ and $[p^{(b)}]$ - the
North and South pole of the base weighted projective line - with stackiness
$\bbZ_a$ and $\bbZ_b$ respectively. By the Atiyah--Bott isomorphism, $H_{{\rm
    orb},T}(X_{a,b})$ is then generated by the (Thom push-forwards) of
$\mathbf{1}_{\frac{i}{a}}$, $\mathbf{1}_{\frac{j}{b}}$, $\a=0,\dots, a-1$,
$\b=0,\dots, b-1$. Writing $x\triangleq \sum_{c\in \{a,b\},\a\in \bbZ_c} x^{\alpha,c} \mathbf{1}_{\frac{\a}{c}}$ for a point $x\in H_{{\rm  orb},T}(X_{a,b})$, the
orbifold $J$-function
\eq{
J^{\gamma,c}_{X_{a,b}}(x,\zeta) \triangleq \frac{\delta^{\alpha,0}}{\zeta}+x^\alpha+\zeta \sum_{n\geq
  0}\sum_{\beta\in\bbZ}\frac{1}{n!}\bra \frac{\mathbf{1}_{\frac{\g}{c}}}{1-\zeta \psi}, x^{\otimes
  n}\ket_{0,n+1,\beta}^{X_{a,b}},
\label{eq:Jfunorb}
}
gives a system of flat co-ordinates for the Dubrovin connection on
$T(H_{{\rm  orb},T}(X_{a,b}))$. As Frobenius manifolds, the quantum
cohomologies of $X_{a,b}$ and $Y_{a,b}$ are isomorphic
\cite{MR2510741},
with the undeformed flat coordinates related as \cite{MR2483931}
\ea{
x^{\alpha,a} = & \sum_{\g=0}^{a-1} \varepsilon_a^{\a\g}\tau^{a-1-\g}- \varepsilon_a^{\a},
\nn \\
x^{\beta,b} = & \sum_{\b=0}^{b-1} \varepsilon_b^{\b\g}\tau^{\g+a+1}-
\varepsilon_b^{\b},
}
where $\varepsilon_c \triangleq \re^{\frac{2\pi\ri}{c}}$, and by the Divisor Axiom the localized components of $J^{X_{a,b}}$ are the
unique set of flat coordinates of the Dubrovin connection such that
\eq{
J^{\gamma,c}_{X_{a,b}}(x,\zeta) \simeq \zeta  \re^{\zeta
  x^{0,c}}\l(1+\cO(x)\r).
}
This can be compared with the behavior of the twisted periods at $x=0$, where
the integrals appearing in \cref{eq:pilaur1,eq:pilaur2,eq:eulerint,eq:eulerint2} can be
explicitly evaluated in terms of the Euler Beta function. The result is
\eq{
\Pi = (\cB^{(a)}\oplus \cB^{(b)}) J_{\cX_{a,b}}
\label{eq:matrB1}
}
where
\ea{
\label{eq:matrBa}
\cB^{(a)}_{jk}= & \quad \varepsilon_a^{(j-n/2) \zeta\nu}  \frac{\varepsilon_a^{-jk}}{a}
\left\{\bary{cc} 
 -\varepsilon_a^{k/2} \frac{\Gamma\l(\frac{\zeta\nu+k}{a}\r)\Gamma(1-\zeta)}{\Gamma\l(\frac{\zeta\nu+k}{a}-\zeta\r)} &
\mathrm{\quad for} \quad 1\leq k\leq {a-1}
\\ \frac{\Gamma(\zeta\nu/a)\Gamma(1-\zeta)}{\zeta
  \Gamma(1+\zeta(1+\nu/a))} & \mathrm{\quad for} \quad
k=0 \eary\right.  \\
\cB^{(b)}_{jk}= & \varepsilon_b^{(j-n/2) (\xi+a)}  \frac{\varepsilon_b^{-jk}}{b}
\left\{\bary{cc} -\varepsilon_b^{k/2} 
\frac{\Gamma\l(\frac{\xi+a-k}{b}+1\r)\Gamma(1+\zeta)}{\Gamma\l(\frac{\xi+a-k}{b}-\zeta+1\r)} &
\mathrm{for} \quad 1\leq k\leq {b-1}
\\ \frac{\Gamma((a+\xi)/b)\Gamma(1+\zeta)}{\zeta
  \Gamma(1+(\xi+a)/b-\zeta))} & \mathrm{for} \quad
k=0 \eary\right.  
\label{eq:matrBb}
}
The composition $\cU\triangleq (\cA^{(a)}\cB^{(a)}\oplus \cA^{(b)}\cB^{(b)})$
gives the transition matrix from the vector-form of the orbifold $J$-function to the one of the resolution. Closed-form knowledge of $\cU$ has important applications to
the Crepant Resolution Conjecture \cite{MR2529944}, as well as to its generalization to open
Gromov--Witten theory \cite{Brini:2013zsa}; in particular, by the block
diagonal form of \cref{eq:matrAa,eq:matrAb,eq:matrBa,eq:matrBb}, the genus
zero results in \cite{Brini:2013zsa} generalize immediately to the case at
hand. Similarly, \cref{prop:twistlaur} makes it an exercise in book-keeping to
generalize to arbitrary $(a,b)$ the quantized Crepant Resolution Conjecture proven in \cite{Brini:2013zsa} for the case $b=0$.


\subsubsection{Pure braid group actions in quantum cohomology}

A further application of \cref{thm:mirror,prop:twistlaur} is a complete characterization of the
monodromy group of the Dubrovin connection. By \cref{thm:frob,thm:mirror}, the
open set $\cM^{(2), \rm reg}_{a,b,\nu} \triangleq \cM^{(2)}_{a,b,\nu}\setminus \Delta_{a,b,\nu}$
of regular points for the pencil of flat connections of \cref{eq:defEC} on
$\cM^{(2)}_{a,b,\nu}$ is the complement of the arrangement of hyperplanes $\{q_i=q_j\}_{i\neq
  j}$. Equivalently, it is isomorphic to the configuration space of $a+b$-distinct points in
$\bbP^{1}\setminus \{0,1,\infty\}$,
\eq{
\cM^{(2), \rm reg}_{a,b,\nu} \simeq M_{0,a+b+3}.
}
Any simple loop $\sigma$ in $\cM^{(2), \rm reg}_{a,b,\nu}$ then gives a monodromy action on $\mathrm{Sol}_\lambda$,
\eq{
M_\sigma : \pi_1(\cM^{(2),\rm reg}_{a,b,\nu}) \to \mathrm{Aut}(\mathrm{Sol}_{\lambda}),
\label{eq:msigma}
}
which is a representation of the colored braid group in $a+b+2$ strands, as
$\pi_1(M_{0,a+b+3})\simeq \mathrm{PB}_{a+b+2}$. Monodromy matrices in the
twisted period basis can be computed explicitly \cite{MR2962392}; the
resulting representation is the Gassner representation \cite{MR849651}, with weights
specified as in \cref{eq:lambda}. 

\begin{rmk}
By the $T$-equivariant version of Iritani's integral structures in quantum
cohomology, this pure braid group action carries through to an action on the
$T$-equivariant $K$-groups of $X_{a,b}$ and $Y_{a,b}$. Very recently,
pure braid group actions on the derived category of coherent sheaves were
constructed in \cite{Donovan:2013hra} for a family of toric Calabi--Yau
obtained from deformations of resolutions of type $A$ surface
singularities; when the variety is a threefold, their examples coincide
precisely with $Y_{a,b}$. It would be interesting to establish a clear link between our $D$-module construction and theirs.
\end{rmk}

\bigskip
{\footnotesize
\noindent\textit{Acknowledgments.}
We would like to thank Mattia Cafasso, John Gibbons, Chiu-Chu Melissa Liu and
Dusty Ross for discussions. We would also like to thank the American Institute of Mathematics for hosting
the workshop ``Integrable systems in Gromov--Witten and symplectic field theory'' in January 2012, during which this paper was started. A.~B. was partially supported
by a Marie Curie IEF under Project n.274345, and by an INdAM-GNFM Progetto
Giovani 2012 grant. P.~R. was partially supported by a Chaire
CNRS/Enseignement superieur 2012-2017 grant. S.R was partially supported by
the European Research Council under the ERC-FP7 Grant n.307074.
}

\bibliography{miabiblio}

\end{document}